\DeclareMathOperator{\sgn}{sgn}
\renewcommand*{\Re}{\mathop{\mathrm{Re}}\nolimits}
\theoremstyle{remark}
\newtheorem{remark}{Remark}
\theoremstyle{plain}
\newtheorem{theorem}{Theorem}
\newtheorem{lemma}{Lemma}
\newtheorem{proposition}{Proposition}
\newtheorem{corollary}{Corollary}
\begin{document}
\title{
\vspace{1cm} {\bf Localization properties of squeezed quantum states in nanoscale space domains}
}
\author{A.\,S.~Trushechkin$^{1,2}$ and I.\,V. Volovich$^{1}$\bigskip
 \\
{\it  $^{1}$Steklov Mathematical Institute of the Russian Academy of Sciences}
\\ {\it Gubkina St. 8, 119991 Moscow, Russia}\medskip\\
{\it  $^{2}$National Research Nuclear University ``MEPhI''}
\\ {\it Kashirskoe Highway 31, 115409 Moscow, Russia}\bigskip
\\ e-mail:\:\href{mailto:trushechkin@mi.ras.ru}{\texttt{trushechkin@mi.ras.ru}}, \: \href{mailto:volovich@mi.ras.ru}{\texttt{volovich@mi.ras.ru}}}

\date{}
\maketitle

\begin{abstract}

We construct families of squeezed quantum states on an interval (depending on boundary conditions, we interpret the interval as a circle or as the infinite square potential well) and obtain estimates of position and momentum dispersions for these states. A particular attention is paid to the possibility of proper localization of a particle in nanoscale space domains. One of the constructed family of squeezed states is based on the theta function. It is a generalization of the known coherent and squeezed states on the circle. Also we construct a family of squeezed states based on truncated Gaussian functions and a family of wave packets based on the discretization of an arbitrary continuous momentum probability distribution.

The problem of finiteness of the energy dispersion for the squeezed states in the infinite well is discussed. Finally, we perform the limit of large interval length and the semiclassical limit. 

As a supplementary general result, we show that an arbitrary physical quantity has a finite dispersion if and only if the wave function of a quantum system belongs to the domain of the corresponding self-adjoint operator. This can be regarded as a physical meaning of the domain of a self-adjoint operator.
\end{abstract}

\section{Introduction}

Coherent states on the real line are well-known in quantum mechanics; they were introduced by Schr\"{o}dinger \cite{Schroe-coher,Neumann}. The behaviour of a quantum system in such states is in a sense close to the behaviour of appropriate classical systems. A more general class of states is formed by squeezed states, which are obtained from coherent states by the dilation transformation \cite{Schleich}.

Coherent and squeezed states on manifolds and bounded space domains are studied as well, the definitions and properties of these states are still being discussed in the literature \cite{CN,Wei,KS, Per,TAG-book,V,G-book,CD-book}. In this paper, we are interested in one-dimensional systems like a quantum particle on a circle \cite{BG,KRCircle,GdOCircle,HM,KR2002,GCircle,CLTCircle} and in the infinite square potential well \cite{PTinf,GenIntel,Gazeau,HW}.

The aim of this work is to obtain estimates on position and momentum dispersions for such states, other words, to research their localization properties. A particular attention is paid to the possibility of proper localization of a particle in nanoscale space domains \cite{nano}. To deal with nanoscale systems, one should be able to localize quantum particles with high accuracy in an acceptable range of momenta. Recall that the possibility of such localization in an infinite volume is restricted by the Heisenberg uncertainty principle. It was established that for a finite volume the uncertainty principle needs essential modification (see the discussion in Subsection~\ref{SecUncert}). It is not obvious without additional analysis that in a bounded volume there exist quantum states for which quantum particles can be localized with an accuracy necessary for nanotechnological operations.

To this aim, we adopt the notion of squeezed states as states such that the uncertainty principle is saturated. However, there is no consensus about the ``right'' form of the uncertainty principle for bounded domains. By this reason, we demand squeezed states on an interval to saturate the usual uncertainty relation on the line asymptotically.

We construct families of squeezed quantum states on an interval and study their asymptotic behaviour. We obtain estimates for the position and momentum dispersions of a quantum particle on an interval in such states; these estimates can be applied, in particular, to nanoscale systems. We examine the localization properties of squeezed states on an interval.

The following text is organized as follows. In Section~\ref{SecProblem} we define the model and formulate the problem. Namely, in Subsection~\ref{SecModelDescr}, we introduce the operators of energy, position, and momentum for a quantum particle on a circle and in the infinite potential well. In Subsection~\ref{SecUncert}, we review some facts about the uncertainty relations and coherent and squeezed quantum states on the line. We remind the known fact that the usual Heisenberg uncertainty relations does not hold for a quantum particle on an interval.  In Subsection~\ref{SecProblemFormulation}, we give our definition of squeezed states on an interval.

In Section~\ref{SecSq}, we introduce various families of squeezed states on an interval. In Subsection~\ref{SecPsiGauss}, we introduce a family of squeezed states on an interval based on truncated Gaussian functions. The main result of this subsection is Theorem~\ref{TheoPsiGauss}, which gives estimates on mean values and dispersions of position and momentum of a particle in these states. In Subsection~\ref{SecPsiTheta}, we introduce a family of squeezed states on an interval based on the theta function. They generalize the known circular coherent and squeezed states introduced in \cite{BG,GdOCircle,KRCircle,KR2002,HM}. The main result is Theorem~\ref{TheoPsiTheta} with estimates of mean values and dispersions. In Subsection~\ref{SecArbitrary}, we introduce a family of quantum states on an interval based on the discretization of an arbitrary continuous momentum probability distribution (the construction is similar to that of \cite{GCircle}). The required estimates are given in  Theorem~\ref{TheoPackOtrExist}.

We obtain the following numerical estimates: on an interval of order 100~nm, there exist wave packets with a standard deviation of the position of order 0.1~nm and a standard deviation of the momentum of order  $10^{-24}$~kg$\cdot$ m/s. This is a result of Subsections~\ref{SecPsiGauss} and \ref{SecPsiTheta}. So, we prove that the proper localization of quantum particles in nanoscale space domains is possible. The corresponding numerical result of Subsection~\ref{SecArbitrary} is rougher: Theorem~\ref{TheoPackOtrExist} can guarantee the condition $\Delta x_\alpha\lesssim0.1$~nm only when $\Delta
p_\alpha\sim10^{-20}$~kg$\cdot$m/s. But this is still enough for the localization of a quantum particle in a nanoscale space domain. 

Section~\ref{SecFurther} is devoted to further related problems. We want a particle to be localized not only in the position and momentum spaces, but in the energy space as well. In Subsection~\ref{SecEnergy}, we discuss the energy localization of a quantum particle. We address an additional difficulty that arises when dealing with quantum wave packets in the infinite square well: in this case the momentum and energy operators do not commute.

As a supplementary general result, in Subsection~\ref{SecDisp}, we establish the relation between the finiteness of the dispersion of an arbitrary physical quantity and the domain of the corresponding self-adjoint operator.  This relation is not obvious: the ``mathematical'' questions concerning the domains of self-adjoint operators are often omitted in physical literature on quantum mechanics. Here we establish the physical meaning of the domain of a self-adjoint operator: an arbitrary physical quantity has a finite dispersion if and only if the wave function of a quantum system belongs to the domain of the corresponding self-adjoint operator. 

In Subsection~\ref{SecLim}, we perform the limit of the large interval length. We obtain the well-known squeezed quantum states on the line in this limit.  Also we also show that in the semiclassical limit both momentum and position dispersions of squeezed states on an interval vanish.

The most cumbersome calculations for the proofs of Theorems~\ref{TheoPsiGauss}--\ref{TheoPackOtrExist} are given in Appendixes.

\section{Formulation of the problem}\label{SecProblem}

\subsection{Quantum particle on an interval}\label{SecModelDescr}

A quantum particle on an interval $[-l,l]$ is associated with the Hilbert space $L_2(-l,l)$ (see \cite{Neumann}). A (pure) state of a particle is given by a unit vector from this space. If a particle moves freely between the ends of the interval, then the Hamiltonian on the subspace $C_0^\infty(-l,l)$ (of infinite differentiable functions with support in a subinterval of $[-l,l]$) is given by
$$\check{H}=-\frac{\hbar^2}{2m}\frac{d^2}{dx^2},$$
where $m>0$ is the mass of the particle and $\hbar$ is the Planck constant. The operator
$\check H$ with the domain $D(\check H)=C_0^\infty(-l,l)$
is symmetric but not self-adjoint. It has various self-adjoint extensions corresponding to different physical situations~\cite{ReedSimon} (see also \cite{Teach,VGT,VGT1,VGT2,VGT3}). Each self-adjoint extension is defined on functions from $AC^2(-l,l)$ and satisfy a linearly independent pair of boundary conditions of the form
\begin{equation}\label{EqGranUslGen}
A\psi(l)+B\psi(-l)+C\psi'(l)+D\psi'(-l)=0,\end{equation}
where $A,B,C,D\in\mathbb C$. Here $AC^2(-l,l)$ is the set of differentiable functions from $L_2(-l,l)$ whose derivatives
belong to $AC(-l,l)$, and $AC(-l,l)$ is the set of absolutely continuous functions whose derivatives
(which exist almost everywhere according to the properties of absolutely continuous functions) lie
in $L_2(-l,l)$. Each self-adjoint extension $\check H$ acts on its own domain as $(-\frac{\hbar^2}{2m}\frac{d^2}{dx^2})$. Note that not all (even linearly independent) pairs of boundary conditions of the form (\ref{EqGranUslGen}) correspond to self-adjoint extensions of the operator $\check H$. Necessary and sufficient conditions on the coefficients under which such a pair of boundary conditions induces a self-adjoint extension are given, for example, in \cite{Naimark} and \cite{VGT,VGT1,VGT2,VGT3}; however, these conditions are inessential in this study.

Let us notice two self-adjoint extensions of the operator $\check H$. For a particle in the infinite square potential well with rigid walls, the corresponding Hamiltonian
$$\hat{H_1}=-\frac{\hbar^2}{2m}\frac{d^2}{dx^2}$$
is defined on the domain
$$D(\hat H_1)=\{\psi\in AC^2(-l,l)|\,\psi(-l)=\psi(l)=0\}.$$

The eigenvalues and eigenfunctions of this operator can easily be found, and they are
well-known \cite{Davydov}:
$$E_n^{(1)}=\frac{\hbar^2}{2m}\left(\frac{\pi n}{2l}\right)^2,\quad
\psi_n^{(1)}=\frac{1}{\sqrt l}\sin\left(\frac{\pi n}{2l}(x-l)\right),\quad
n=1,2,\ldots.$$ It is easily seen that the probability current for the eigenfunctions of $\hat H_1$ vanishes at every point. We can express them as
$$\frac{1}{\sqrt l}\sin\left(\frac{\pi n}{2l}(x-l)\right)=\frac{1}{2i\sqrt l}[e^{\frac{i\pi n}{2l}(x-l)}-e^{-i\frac{\pi n}{2l}(x-l)}].$$
The densities of two counterpropagating waves $e^{\frac{i\pi n}{2l}(x-l)}$ and
$e^{-i\frac{\pi n}{2l}(x-l)}$ are the same; upon reflecting from the walls, these waves change the direction of propagation.  One can say that, upon reflection from the walls, they turn into each other.

For a particle on a circle (of length $2l$), the corresponding Hamiltonian
$$\hat{H_2}=-\frac{\hbar^2}{2m}\frac{d^2}{dx^2}$$
is defined on the domain
$$D(\hat H_2)=\{\psi\in AC^2(-l,l)|\,\psi(-l)=\psi(l),\,\psi'(-l)=\psi'(l)\}.$$
The (doubly degenerate) eigenvalues and eigenfunctions of the operator $\hat H_2$
can easily be found
as well:
$$E^{(2)}_n=\frac{\hbar^2}{2m}\left(\frac{\pi}{l}n\right)^2,\quad
\psi_n^{(2)}=\frac{1}{\sqrt{2l}}\,e^{i\frac{\pi}{l}nx},\,
\psi_{-n}^{(2)}=\frac{1}{\sqrt{2l}}\,e^{-i\frac{\pi}{l}nx},\quad
n=0,1,2,\ldots.$$ It is easily seen that the probability current for the eigenfunctions of $\hat H_2$ is constant and different from zero at every point. This means that such an eigenfunction corresponds to a stream of particles that move in the same direction: the part of the wave function that goes beyond the interval appears on the other side.

Define the momentum operator. Let a wave packet does not touch the endpoints of the interval and belongs to $C^\infty_0(-l,l)$. Then the momentum operator for this particle must be the same as for a particle on the line. That is, on the subspace $C^\infty_0(-l,l)$, the momentum operator has the form
\begin{equation*}
\check p=-i\hbar\frac{d}{dx}.
\end{equation*} 
The operator $\check p$ with the domain $D(\check p)=C^\infty_0(-l,l)$ is also symmetric but not self-adjoint. All self-adjoint extensions of this operator are parametrized by real numbers
$\theta\in[0,2\pi)$ as follows \cite{ReedSimon}:
\begin{equation*}
\hat p_\theta=-i\hbar\frac{d}{dx},\quad
D(\hat p_\theta)=\{\psi\in
AC(-l,l)|\,\psi(-l)=e^{i\theta}\psi(l)\}.
\end{equation*} 
As the momentum operator, we will consider $\hat p_0\equiv\hat p$. In fact, our further results can easily be carried over to the case of an arbitrary $\theta$.

The eigenvalues and eigenfunctions of the momentum operator $\hat
p$ are as follows:
\begin{equation*}
p_k=\frac{\pi}{l}\hbar k,\quad
\varphi_k=\frac{1}{\sqrt{2l}}\,e^{i\frac{\pi}{l}kx}=\frac{1}{\sqrt{2l}}\,e^{i\frac{p_k}{\hbar}x},\quad
k=0,\pm1,\pm2,\ldots.\end{equation*}

Note that in the case of a particle in the infinite well, the energy and momentum operators do not commute, because, being operators with purely discrete spectra, they do not have a common set of eigenfunctions. In the case of a particle on a circle, these operators commute, just as in the well-known case of a particle on the line.

\begin{remark}
There are different views on whether the spectrum of the momentum of a particle in an infinite well is discrete or continuous, and there are different approaches to defining the corresponding operator. We define the momentum operator according to the standard formalism of quantum mechanics. In this case, the spectrum of the momentum turns out to be discrete.

However, one may argue that the momentum in this case should have a continuous spectrum, and the standard formalism does not fully comply with physics. This is justified as follows. The infinitely deep potential well is an idealization, an approximation of a potential well of very large but finite depth. In the case of a well of any finite depth one deals with the space $L_2(\mathbb R)$; i.e., the momentum spectrum is continuous. As the depth of the well tends to infinity, the momentum spectrum does not become discrete. Thus, the momentum spectrum in the infinite well should also be continuous.

From the other side, in this case one faces the problem of formally defining a self-adjoint momentum operator in the Hilbert space for the infinite well. An attempt to solve this problem was made in  \cite{Karwowski}, where one can also find a review of relevant literature.

In this paper, we apply the standard formalism of quantum mechanics, according to which the momentum operator is defined as above and its spectrum is discrete. In any case, this formalism can be used to describe a quantum particle on a circle.

The motion of a Bloch particle in a crystal under the influence of a magnetic field was considered in \cite{Nov}.
\end{remark}

As is usual in quantum mechanics, the position operator $\hat x$ is the multiplication of a function in the position representation by the variable $x$, $\hat x\psi(x)=x\psi(x)$, and the domain of $\hat
x$ is the whole $L_2(-l,l)$. In this Hilbert space, in contrast to the case of a quantum particle on the line, the position operator is bounded and defined on the whole space $L_2(-l,l)$. Therefore, in contrast to the operators of energy and momentum, it is self-adjoint just because it is symmetric.

\begin{remark}\label{RemPosCircle}
In the case of a particle on a circle it is more preferable to use the operator $e^{i\frac\pi l\hat x}$ instead of the operator $\hat x$, because the last is not periodic in $x$ (that is, $\hat x+2l\neq\hat x$, but $e^{i\frac\pi l(\hat x+2l)}=e^{i\frac\pi l\hat x}$). However, we will be interested in  wave functions localised near some point. We will denote this point as $x=0$. In this case, we can use the usual operator $\hat x$ since the wave function is essentially non-zero only in some neighbourhood of the point $x=0$. 
\end{remark}

\subsection{Uncertainty principle on the line and an interval}\label{SecUncert}

Heisenberg uncertainty relation for a particle on the line is well-known:

\begin{equation}\label{EqUncert}
\Delta x\Delta p\geq\frac{\hbar}{2},
\end{equation}
where $\Delta x$ and $\Delta p$ are, respectively, the position and momentum standard deviations of the particle in a state $\psi\in L_2(\mathbb R)$:
\begin{align}
\Delta x^2=\int_{\mathbb R}(x-\overline x)^2|\psi(x)|^2\,dx,\quad
&\Delta p^2=\int_{\mathbb
R}\overline{\psi(x)}(-i\hbar\frac{d}{dx}-\overline
p)^2\psi(x)\,dx,\\
\overline x=\int_{\mathbb R}x|\psi(x)|^2\,dx,\quad &\overline
p=\int_{\mathbb
R}\overline{\psi(x)}(-i\hbar)\frac{d}{dx}\psi(x)\,dx.\end{align}

The states that minimize this uncertainty relation are also well-known. These are Gaussian wave packets parametrized by three real numbers $x^*,p^*$, and $\alpha>0$:

\begin{equation}\label{EqCoher}
\psi(x)=\frac{1}{\sqrt[4]{2\pi\beta^2}}\,e^{-\frac{(x-x^*)^2}{4\beta^2}+i\frac{p^*(x-x^*)}{\hbar}}=
\frac{1}{\sqrt{2\pi\hbar}}\int_{-\infty}^{+\infty}\widetilde\psi(p)
e^{\frac{ipx}{\hbar}}\,dp,\end{equation}
$$\widetilde\psi(p)=\frac{1}{\sqrt[4]{2\pi\alpha^2}}\,e^{-\frac{(p-p^*)^2}{4\alpha^2}-i\frac{px^*}{\hbar}},$$
In this case $\bar x=x^*$, $\bar p=p^*$, $\Delta
x=\beta=\frac{\hbar}{2\alpha}$, and $\Delta p=\alpha$. Thus,
\begin{equation}\label{EqUncertMin}
\Delta x\Delta p=\frac{\hbar}{2}.
\end{equation} 
That is, Gaussian wave packets minimize the uncertainty relation. These states are called squeezed (or coherent, if the parameter $\alpha$ is fixed). One of the fields of application of these states is the classical approximation of quantum mechanics: since these states are the closest to classical states, a classical particle with arbitrary position $x^*\in\mathbb R$ and momentum $p^*\in\mathbb R$ is associated with a coherent state that minimizes the uncertainty relation and has the same average values $x^*$ and $p^*$ of the position and momentum, respectively.

The above relations for a Gaussian wave packet imply that $\Delta x$ can be made arbitrarily small at the cost of increasing
$\Delta p$ but keeping $\Delta p$ finite. On the other hand, one can make $\Delta p$ arbitrarily small at the cost of increasing $\Delta x$ but keeping $\Delta x$ finite. One can also choose $\Delta x$ and
$\Delta p$ such that both these quantities are small compared with macroscopic scales. Since
$\hbar\sim10^{-34}$ J$\cdot$s, we find, for example, that there exist wave packets satisfying the following estimates: $\Delta x\sim0.1$~nm and $\Delta
p\sim10^{-24}$ kg$\cdot$m/s.

We want to investigate analogous problems for a quantum particle on an interval. This case differs from the case of a particle on the line in the following essential aspects.

First, the momentum spectrum of the particle on an interval is discrete rather than continuous. This implies, as we will see, that there exist wave functions such that
\begin{equation}\label{EqUncertZero}
\Delta x\Delta p=0,\end{equation} 
For example, for the momentum eigenstates $\varphi_k=\frac1{2l}e^{i\frac\pi lkx}$, we have $\Delta p=0$, $\Delta x=\frac l{\sqrt 3}$ (see below) and, hence, (\ref{EqUncertZero}) holds.

Therefore, the well-known uncertainty relation (\ref{EqUncert}) is not valid on an interval. Instead, some authors \cite{Judge,Davydov}\footnote{It is relevant here to note that relation (\ref{EqJudge}) is not proved in D.~Judge's paper \cite{Judge}, to which A.\,S.~Davydov refers in 
\cite{Davydov}. Judge proved only the weaker relation $\Delta x\Delta
p\geq0.16\hbar(1-\frac{3}{l^2}\Delta x^2)$ and conjectured that (\ref{EqJudge}) holds.}
propose (along with many other variants, see, for example, \cite{CN,SPb,Trifonov,Dumitru}) the relation 
\begin{equation}\label{EqJudge}
\Delta x\Delta p\geq\frac{\hbar}{2}\left(1-\frac{3}{l^2}\Delta
x^2\right).
\end{equation}
Hence, one can see that  $\Delta x\geq
l/\sqrt3$ for $\Delta p=0$, which will be also shown in the next section. On the other hand, when
$\Delta x\to0$ or $l\to\infty$, we again obtain the usual relation on the line (\ref{EqUncert}).

Second, as mentioned in Subsection~\ref{SecModelDescr}, the energy operator for a particle in the infinite well does not commute with the momentum operator. Usually a quantum particle with small position and momentum dispersions is associated to a classical particle (for example, in the semiclassical limit). But a classical particle is characterized not only by well-defined position and momentum but also by a well-defined energy. Hence, the corresponding quantum particle must have a small energy dispersion as well. Of course, a particle in a nanoscale domain is also expected to have small (or, at least, finite) energy dispersion. For a particle on the line, the energy dispersion is small whenever the momentum dispersion is small, because the momentum and energy operators commute. This is not the case for a particle in the infinite well; therefore, the energy dispersion should be analysed separately.

\subsection{Squeezed quantum states on an interval: definition}\label{SecProblemFormulation}

We are going to construct an analogue of squeezed states for a quantum particle on an interval. To a classical particle with arbitrary position
$x^*\in(-l,l)$ and momentum $p^*\in\mathbb R$, we should assign a quantum wave packet $\psi_{x^*p^*}$ (henceforth we will omit the indices $x^*$ and $p^*$, assuming that they are arbitrary but fixed) for which the following second moments of the position and momentum distributions are small:

\begin{align}\label{EqStanddevX}
\Delta_* x^2&=\int_{-l}^l(x-x^*)^2|\psi(x)|^2\,dx
=\|\hat x\psi-x^*\psi\|^2,\\
\label{EqStanddevP} \Delta_*
p^2&=\sum_{k=-\infty}^{+\infty}
(p_k-p^*)^2|a_k|^2.
\end{align}
Here $a_k$, $k=0,\pm1,\pm2,\ldots$ are the coefficients of the expansion of $\psi$ in the momentum eigenfunctions:

\begin{equation}\label{EqPsiImpSer}
\psi(x)=\frac{1}{\sqrt{2l}}\sum_{k=-\infty}^{+\infty}a_k\,e^{i\frac{\pi}{l}kx}=
\frac{1}{\sqrt{2l}}
\sum_{k=-\infty}^{+\infty}a_k\,e^{i\frac{p_k}{\hbar}x}.
\end{equation}
The following two normalization conditions are equivalent:
$$\int_{-l}^l|\psi(x)|^2\,dx=1,\qquad\sum_{k=-\infty}^{+\infty}|a_k|^2=1.$$

If the vector $\psi$ belongs to the domain of the momentum operator $\hat p$, then formula (\ref{EqStanddevP}) can be compactly rewritten as $\Delta_* p=\|\hat p\psi-\overline
p\psi\|$ or as
\begin{equation}\label{EqPDeltaDeriv}
\Delta_*
p^2=\int_{-l}^l\overline\psi(x)(-i\hbar\frac d{dx}-p^*)^2\psi(x)\,dx.
\end{equation}
However, the original formula (\ref{EqStanddevP}) is more general because it does not assume that the wave packet belongs to the domain of the operator, but only requires that the dispersion should be finite (which is a necessary condition for assigning a quantum wave packet to a classical particle).

The mean values of the position and momentum are calculated by the formulae:
\begin{equation}\label{EqMeanX}\overline
x=\int_{-l}^lx|\psi(x)|^2\,dx,\qquad\overline
p=\sum_{k=-\infty}^{+\infty}p_k|a_k|^2,\end{equation}
Again, if $\psi\in D(\hat p)$, then
\begin{equation}\label{EqPMeanDeriv}
\overline p=\int_{-l}^l\overline\psi(x)(-i\hbar\frac d{dx})\psi(x)\,dx.
\end{equation}

We do not require the exact equalities $\overline x=x^*$ and $\overline p=p^*$; we only require that the quantities $|\overline x-x^*|$ and $|\overline p-p^*|$ should be small. Accordingly, the above moments  $\Delta_* x^2$ and $\Delta_* p^2$ are not, generally speaking, the position and momentum dispersions. However, the smallness of precisely these moments defined by formulae  (\ref{EqStanddevX}) and (\ref{EqStanddevP}) was considered in \cite{Neumann} as a condition for legitimately assigning a quantum wave packet $\psi$ to a classical particle with position $x^*$ and momentum $p^*$.

Denote the standard deviations of the position and momentum by $\Delta x^2$ and $\Delta p^2$ respectively:

\begin{equation}\label{EqStddevX}
\Delta x^2=\int_{-l}^l(x-\overline x)^2|\psi(x)|^2\,dx,\qquad
\Delta p^2=\sum_{k=-\infty}^{+\infty}
(p_k-\overline p)^2|a_k|^2.
\end{equation}
One can easily show that
\begin{equation}\label{EqDeltaStarStd}
\Delta_* x^2=\Delta x^2+(\overline x-x^*)^2,\qquad
\Delta_* p^2=\Delta p^2+(\overline p-p^*)^2.
\end{equation}
Thus, the smallness of the moments $\Delta_* x$ and $\Delta_* p$ implies both the smallness of $|\overline x-x^*|$ and $|\overline p-p^*|$ and the smallness of the position and momentum dispersions. Conversely, if the mean values of the position and momentum are close to the values of  $x^*$ and $p^*$, and the position and momentum dispersions are small, then $\Delta_* x$ and $\Delta_* p$ are also small.

Note that, for the case of a circle, first formulae (for position) of (\ref{EqStanddevX}), (\ref{EqMeanX}), and (\ref{EqStddevX}) have a meaning only if the considered wave functions are negligible near the point $x=\pm l$ (remind that the points $\pm l$ are identified on the circle). Otherwise it is possible that, for example, a particle is localized near the point $x^*=\pm l$ with nearly zero dispersion, but formula (\ref{EqMeanX}) gives $\overline x=0$ and formulae  (\ref{EqStanddevX}) and (\ref{EqStddevX}) gives a significant dispersion of position. But this is not our case, since we consider wave functions localized near the point $x^*=0$ (see Remark~\ref{RemPosCircle}). In case of the infinite well, $x^*\in(l,l)$ is arbitrary.

As we have already mentioned above, since the momentum spectrum is discrete, there exist wave functions with definite momentum and a finite position dispersion. For example, for the function $\psi(x)=\frac{1}{\sqrt{2l}}e^{i\frac{\pi}{l}kx}$, $k\in\mathbb Z$, the standard deviations in position and momentum are $\Delta x=l/\sqrt3$ and $\Delta p=0$, respectively. This implies (\ref{EqUncertZero}).

We will construct a family $\psi_{x^*,p^*,\alpha}$, $x^*\in[-l,l]$, $p^*\in\mathbb R$, $\alpha>0$ (we will omit indices $x^*$ and $p^*$ in the subsequent), of functions in
$L_2(-l,l)$ that possess the following properties:

\begin{enumerate}[1)]

\item The position and momentum standard deviations $\Delta x_\alpha$ and $\Delta p_\alpha$ defined by (\ref{EqStddevX}) for the state $\psi_\alpha$ are finite for all $\alpha$ and satisfy the relation 
\begin{equation}\label{EqUncertMinAsimpt}
\Delta x_\alpha\Delta p_\alpha\to\frac{\hbar}{2},
\end{equation}
as $\alpha\to\infty$; i.e., the minimal uncertainty relation for coherent states on the line holds asymptotically.

\item $\Delta x_\alpha\to0$ as $\alpha\to\infty$   (i.e., the position dispersion can be made arbitrarily small at the cost of increasing the momentum dispersion).

\item $\overline x\to x^*$ and $\overline p\to p^*$ as $\alpha\to\infty$ (i.e., the mean values of the position and momentum tend to prescribed values).

\item As $l\to\infty$, the functions $\psi_\alpha$ tend to the squeezed states on the real line (\ref{EqCoher}).
\end{enumerate}

The states in such a family will be called \textit{squeezed states on an interval}. Let us comment property (\ref{EqUncertMinAsimpt}). According to one of the definitions,  squeezed states are states for which the uncertainty principle is saturated. However, as we said above, there is no consensus about the ``right'' form of the uncertainty principle for bounded domains. By this reason, we demand squeezed states on an interval to saturate the usual uncertainty relation on the line asymptotically.

We can apply such states to the theory of nanoscale systems
\cite{nano}. The spatial dimensions of nanoscale systems usually range from 1~nm to 100~nm. We will see that there exist values $\alpha$ such that $\Delta x_\alpha$ and
$\Delta p_\alpha$ are simultaneously small while the interval length $l$ is equal to, say, 100~nm. In particular, there exist squeezed states such that $\Delta x\sim0.1$~nm and $\Delta
p\sim10^{-24}$~kg$\cdot$m/s (which corresponds to a minimum energy on the order of $10^{-2}$~eV for the hydrogen atom mass $m\sim10^{-27}$~kg). Such a packet is well localized in the sense that the position dispersion of a particle is less than 0.1~nm and the energy needed for forming such a state is rather small.

\section{Families of squeezed states on an interval}\label{SecSq}

\subsection{Squeezed states given by truncated Gaussian functions}\label{SecPsiGauss}

As a candidate for the considered family of wave functions, we can consider``truncated Gaussian functions'' of the form
$$\psi_{0\beta}(x)=\frac{1}{\sqrt[4]{2\pi\beta^2}}
e^{-\frac{(x-x^*)^2}{4\beta^2}+i\frac{xp^*}{\hbar}}\chi_{[-l,l]}(x),\quad
\beta>0,$$ where $\chi_{[-l,l]}(x)$ is the characteristic function of the interval $[-l,l]$ and $x^*\in(-l,l)$ and $p^*\in\mathbb R$ are given position and momentum of the particle. However, this does not seem to be the best choice, because we have $\psi'_{0\beta}(-l)\neq\psi'_{0\beta}(l)$ for the functions of this family. If 
$x^*\neq0$ or $p^*\neq0$, then also $\psi_{0\beta}(-l)\neq\psi_{0\beta}(l)$. At the same time, if the momentum coefficients $a_k$ decrease rapidly (for example, if  $a_k=O(k^{-2-\varepsilon})$, $\varepsilon>0$), then series (\ref{EqPsiImpSer}) converges uniformly together with the series of derivatives
$$\psi'(x)=\frac{1}{\sqrt{2l}}\sum_{k=-\infty}^{+\infty}i\frac{\pi}{l}k\,a_k\,e^{i\frac{\pi}{l}kx}.$$
Then it is obvious that $\psi(-l)=\psi(l)$ and $\psi'(-l)=\psi'(l)$, because the general term of the series for $\psi(x)$ possesses these properties.

Notice that since the functions $\psi_{0\beta}(x)$ do not belong to the domain of the operator $\hat p^2$ and, when
$x^*\neq0$ or $p^*\neq0$, do not belong even to the domain of $\hat p$, one cannot employ formulae like  (\ref{EqPDeltaDeriv}) or (\ref{EqPMeanDeriv})
in order to calculate the mean value and dispersion of the momentum. 

Moreover, as we will see in Subsection~\ref{SecDisp} the  energy dispersions of such states cannot be finite. In case $x^*\neq0$ or $p^*\neq0$, the momentum dispersion is also infinite.

Consider another family of ``truncated Gaussian functions'', which vanish smoothly in small neighbourhoods of the endpoints of the interval:
\begin{equation}\label{EqGaussCut}
 \psi_\beta(x)=\frac{B_\beta}{\sqrt[4]{2\pi\beta^2}}
e^{-\frac{(x-x^*)^2}{4\beta^2}+i\frac{xp^*}{\hbar}}\eta_\epsilon(x),\quad
\beta>0,\end{equation}
 where
$$\eta_\epsilon(x)=\int_{-\infty}^{+\infty}\chi_\epsilon(y)\,\omega_\epsilon(x-y)\,dy,$$
$$\omega_\epsilon(x)=\begin{cases}C_\epsilon\,
e^{-\frac{\epsilon}{\epsilon-|x|}},&|x|<\epsilon,\\
0,&|x|\geq\epsilon.\end{cases}$$ is a bump function (the constant $C_\epsilon$ is chosen so that
$\int_{-\infty}^{+\infty}\omega_\epsilon(x)\,dx=1$), and
$\chi_\epsilon(x)$  is the
characteristic function of the interval
$[-l+2\epsilon,l-2\epsilon]$. It is obvious that
$0\leq\eta_\epsilon(x)\leq1$ and
$$\eta(x)=\begin{cases}1,&y\in[-l+3\epsilon,l-3\epsilon],\\
0,&y\notin[-l+\epsilon,l-\epsilon].\end{cases}$$ 
In (\ref{EqGaussCut}), $B_\beta$ is a real normalization constant and $\epsilon>0$ is chosen arbitrarily. We will always choose $\epsilon$ so small that at least
$|x^*|<l-3\epsilon$, i.e., $\eta_\epsilon=1$ in a neighbourhood of the point $x^*$.

Unlike $\psi_{0\beta}(x)$, the functions $\psi_\beta(x)$
belong to the domain of the operator $\hat p^m$ for every
$m=1,2,\ldots$.

\begin{lemma}
If $\int_{-l}^l|\psi_\beta(x)|^2\,dx=1$, then
\begin{equation}\label{EqPsiGaussA}
B_\beta=1+O\left(e^{-\frac{(l-|x^*|-3\epsilon)^2}{2\beta^2}}\right),\quad\beta\to0.
\end{equation}
\end{lemma}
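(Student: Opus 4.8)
The plan is to evaluate the normalization integral explicitly and extract the leading term by comparison with the full Gaussian integral over the line. Since $|e^{ixp^*/\hbar}|=1$, the function $\eta_\epsilon$ is real, and $B_\beta$ is a real constant, the phase and $\beta$-dependent prefactor combine so that the condition $\int_{-l}^l|\psi_\beta(x)|^2\,dx=1$ reads
\[
1=\frac{B_\beta^2}{\sqrt{2\pi\beta^2}}\int_{-l}^{l}e^{-\frac{(x-x^*)^2}{2\beta^2}}\eta_\epsilon(x)^2\,dx .
\]
Thus $B_\beta^{-2}$ is just the Gaussian average of $\eta_\epsilon^2$. Because $\eta_\epsilon$ is supported inside $[-l,l]$, the range of integration can be extended to all of $\mathbb{R}$ without changing the value, which is the convenient form for the estimate.

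First I would subtract off the full Gaussian. Using $\frac{1}{\sqrt{2\pi\beta^2}}\int_{\mathbb{R}}e^{-(x-x^*)^2/(2\beta^2)}\,dx=1$, I rewrite
\[
B_\beta^{-2}=1-\frac{1}{\sqrt{2\pi\beta^2}}\int_{\mathbb{R}}e^{-\frac{(x-x^*)^2}{2\beta^2}}\bigl(1-\eta_\epsilon(x)^2\bigr)\,dx .
\]
The correction integrand is nonnegative, bounded by $1$, and — this is the point — vanishes identically on the plateau $[-l+3\epsilon,\,l-3\epsilon]$ where $\eta_\epsilon\equiv1$. Since by assumption $|x^*|<l-3\epsilon$, the centre $x^*$ lies strictly inside this plateau, so the nearest point at which $1-\eta_\epsilon^2$ can be nonzero sits at distance $d:=l-|x^*|-3\epsilon>0$ from $x^*$. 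Hence the correction is bounded above by the Gaussian tail mass $\frac{1}{\sqrt{2\pi\beta^2}}\int_{|x-x^*|\ge d}e^{-(x-x^*)^2/(2\beta^2)}\,dx$.

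Next I would invoke the elementary tail estimate $\int_{d}^{\infty}e^{-u^2/(2\beta^2)}\,du\le\frac{\beta^2}{d}\,e^{-d^2/(2\beta^2)}$ (obtained by inserting the harmless factor $u/d\ge1$ into the integrand). This gives $0\le 1-B_\beta^{-2}\le O\bigl(\frac{\beta}{d}e^{-d^2/(2\beta^2)}\bigr)=O\bigl(e^{-d^2/(2\beta^2)}\bigr)$ as $\beta\to0$, so that $B_\beta^{-2}=1-O\bigl(e^{-(l-|x^*|-3\epsilon)^2/(2\beta^2)}\bigr)$. Finally, inverting through $B_\beta=\bigl(B_\beta^{-2}\bigr)^{-1/2}=\bigl(1-O(\cdot)\bigr)^{-1/2}=1+O(\cdot)$ yields the claimed asymptotics \eqref{EqPsiGaussA}. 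I do not expect a genuine obstacle here; the only points requiring care are bookkeeping ones — correctly identifying the relevant distance as $d=l-|x^*|-3\epsilon$ so that the exponent matches \eqref{EqPsiGaussA} exactly, and observing that the polynomial prefactor $\beta/d$ produced by the tail bound is absorbed into the exponentially small $O$-term.
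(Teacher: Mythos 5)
Your proof is correct and takes essentially the same approach as the paper's: both reduce the normalization integral to the plateau where $\eta_\epsilon\equiv1$ and control the discrepancy by a Gaussian tail estimate at distance $d=l-|x^*|-3\epsilon$ from the centre $x^*$, absorbing the polynomial prefactor into the exponentially small term. If anything, your exact-decomposition version is slightly tidier, since it yields the two-sided bound $1\le B_\beta\le 1+O\bigl(e^{-d^2/(2\beta^2)}\bigr)$ in one stroke, whereas the paper's written argument explicitly derives only the upper bound on $B_\beta$ (the lower bound $B_\beta\ge1$ being left implicit).
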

\begin{proof}
$$1=\int_{-l}^l|\psi_\beta(x)|^2\,dx\geq\frac{B_\beta^2}{\sqrt{2\pi\beta^2}}
\int_{-l+3\epsilon}^{l-3\epsilon}e^{-\frac{(x-x^*)^2}{2\beta^2}}dx,$$ hence
$$B_\beta\leq\frac{1}{\sqrt{1+O(\beta e^{-\frac{(l-|x^*|-3\epsilon)^2}{2\beta^2}})}}=
1+O\left(\beta e^{-\frac{(l-|x^*|-3\epsilon)^2}{2\beta^2}}\right),$$
where we used the asymptotic formula
(\ref{EqErfcAsimptMain}) for the Gaussian integral (see Appendix~A
below).
\end{proof}

\begin{theorem}\label{TheoPsiGauss}
The following asymptotic formulae are valid for the wave functions $\psi_\beta(x)$, $\beta>0$, defined by (\ref{EqGaussCut}) as $\beta\to0$:

\begin{align}\label{EqPsiGaussMeanX}
&\overline
x_\beta=x^*+O\left(\beta e^{-\frac{(l-|x^*|-3\epsilon)^2}{2\beta^2}}\right),\\
\label{EqPsiGaussMeanP} &\overline
p_\beta=p^*,\\\label{EqPsiGaussStdDevX}
&\Delta_*x_\beta^2=\beta^2+O\left(\beta e^{-\frac{(l-|x^*|-3\epsilon)^2}{2\beta^2}}\right),\\
\label{EqPsiGaussStdDevP}
&\Delta_*p_\beta^2=\left(\frac{\hbar}{4\beta}\right)^2+O\left(\beta^{-3}
e^{-\frac{(l-|x^*|-3\epsilon)^2}{2\beta^2}}\right).
\end{align}
\end{theorem}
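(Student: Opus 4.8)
The plan is to use the product structure $\psi_\beta(x)=g(x)\,e^{ip^*x/\hbar}$ with real amplitude $g(x)=\tfrac{B_\beta}{\sqrt[4]{2\pi\beta^2}}\,e^{-(x-x^*)^2/4\beta^2}\eta_\epsilon(x)$, together with the geometric fact that, since $|x^*|<l-3\epsilon$, the peak of the Gaussian lies strictly inside the region $\{\eta_\epsilon\equiv1\}$. Every one of the four quantities then equals the corresponding moment of the untruncated Gaussian on $\mathbb R$, up to an error governed by the Gaussian mass lying beyond distance $d:=l-|x^*|-3\epsilon$ from $x^*$. The only analytic input is the complementary-error-function asymptotic (\ref{EqErfcAsimptMain}) of Appendix~A, already used in the Lemma, giving $\int_d^\infty e^{-t^2/2\beta^2}\,dt=O(\beta^2 e^{-d^2/2\beta^2})$, $\int_d^\infty t\,e^{-t^2/2\beta^2}\,dt=\beta^2 e^{-d^2/2\beta^2}$, and $\int_d^\infty t^2 e^{-t^2/2\beta^2}\,dt=O(\beta^2 e^{-d^2/2\beta^2})$, together with $B_\beta^2=1+O(\beta e^{-d^2/2\beta^2})$ from (\ref{EqPsiGaussA}).

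For the position moments I would integrate $|\psi_\beta|^2=g^2=\tfrac{B_\beta^2}{\sqrt{2\pi\beta^2}}e^{-(x-x^*)^2/2\beta^2}\eta_\epsilon^2$ against $x$ and against $(x-x^*)^2$. Writing $\eta_\epsilon^2\chi_{[-l,l]}=1-(1-\eta_\epsilon^2\chi_{[-l,l]})$, where the correction factor is supported in $\{|x-x^*|\ge d\}$, I extend the main integrals to all of $\mathbb R$; these reproduce $B_\beta^2 x^*$ and $B_\beta^2\beta^2$ (the first moment and the variance of the Gaussian), while the discarded pieces are bounded, using $|x|\le|x^*|+|x-x^*|$ and the tail integrals above, by $O(\beta^2 e^{-d^2/2\beta^2})$. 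Dividing by the normalisation $\sqrt{2\pi\beta^2}\sim\beta$ and using $B_\beta^2=1+O(\beta e^{-d^2/2\beta^2})$ yields exactly (\ref{EqPsiGaussMeanX}) and (\ref{EqPsiGaussStdDevX}).

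For the momentum I would use that $\psi_\beta\in D(\hat p^2)$, so (\ref{EqPMeanDeriv}) and (\ref{EqPDeltaDeriv}) apply. Differentiating, $(-i\hbar\tfrac{d}{dx})\psi_\beta=(-i\hbar g'+p^*g)e^{ip^*x/\hbar}$, so $\overline p_\beta=\int_{-l}^l(-i\hbar gg'+p^*g^2)\,dx=-\tfrac{i\hbar}{2}[g^2]_{-l}^l+p^*$; since $\eta_\epsilon$, hence $g$, vanishes at $\pm l$, the boundary term is zero and $\overline p_\beta=p^*$ exactly, which is (\ref{EqPsiGaussMeanP}). The same computation gives $(-i\hbar\tfrac{d}{dx}-p^*)\psi_\beta=-i\hbar g'e^{ip^*x/\hbar}$, so (\ref{EqPDeltaDeriv}) reduces to
\[
\Delta_*p_\beta^2=\hbar^2\int_{-l}^l\bigl(g'(x)\bigr)^2\,dx .
\]
On $\{\eta_\epsilon\equiv1\}$ we have $g'=-\tfrac{B_\beta}{\sqrt[4]{2\pi\beta^2}}\,\tfrac{x-x^*}{2\beta^2}\,e^{-(x-x^*)^2/4\beta^2}$; extending $\int(g')^2$ to $\mathbb R$ and inserting $\int_{\mathbb R}(x-x^*)^2e^{-(x-x^*)^2/2\beta^2}\,dx=\beta^3\sqrt{2\pi}$ gives the leading term $\hbar^2 B_\beta^2/(4\beta^2)$, reproducing the $\beta^{-2}$ growth asserted in (\ref{EqPsiGaussStdDevP}).

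The one place demanding genuine care is the error term in $\Delta_*p_\beta^2$, which must be shown to be $O(\beta^{-3}e^{-d^2/2\beta^2})$ and not larger. The prefactor $1/\beta^4$ in $(g')^2$ strongly amplifies the tails, so a crude supremum bound on the discarded regions is fatal; instead one extracts the compensating factor $\beta^2$ from the Gaussian-tail integral $\int_d^\infty t^2e^{-t^2/2\beta^2}\,dt=O(\beta^2 e^{-d^2/2\beta^2})$, which combined with $1/\beta^4$ and the normalisation $\beta^{-1}$ produces exactly $\beta^{-3}$. One must also handle the transition region, where $\eta_\epsilon'\ne0$ and which lies in $\{|x-x^*|\ge d\}$: expanding $(g')^2$ there produces, besides the amplitude-derivative term just estimated, a cross term proportional to $\tfrac{x-x^*}{\beta^2}\eta_\epsilon\eta_\epsilon'$ and a term proportional to $(\eta_\epsilon')^2$, which by the same tail integrals are $O(\beta^{-1}e^{-d^2/2\beta^2})$ and $O(\beta e^{-d^2/2\beta^2})$ respectively, hence are absorbed into the stated error. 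Collecting all contributions and using $B_\beta^2=1+O(\beta e^{-d^2/2\beta^2})$ completes (\ref{EqPsiGaussStdDevP}).
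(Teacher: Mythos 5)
Your proposal is correct, and on the position moments it is essentially the paper's own argument: both split $[-l,l]$ into the region where $\eta_\epsilon\equiv1$ and the transition zones lying at distance at least $d=l-|x^*|-3\epsilon$ from $x^*$, and both invoke the Gaussian tail estimates (\ref{EqErfcAsimptMain}) and (\ref{EqErfcDerivAsimptMain}) together with the normalization estimate (\ref{EqPsiGaussA}). Where you genuinely differ is the momentum part. The paper computes $\overline p_\beta=\int_{-l}^l\overline{\psi_\beta}(-i\hbar)\psi_\beta'\,dx$ and $\Delta_*p_\beta^2=-\hbar^2\int_{-l}^l\overline{\psi_\beta}\psi_\beta''\,dx-(p^*)^2$, and then discards the imaginary contributions by the physical argument that diagonal matrix elements of a self-adjoint operator must be real; you instead factor $\psi_\beta=g\,e^{ip^*x/\hbar}$ with $g$ real, which turns $\overline p_\beta=p^*$ into the exact identity $-i\hbar\int_{-l}^l gg'\,dx=-\tfrac{i\hbar}{2}\bigl[g^2\bigr]_{-l}^{l}=0$ (the boundary term vanishes because $\eta_\epsilon$ vanishes near $\pm l$) and reduces the dispersion to the manifestly real quadratic form $\Delta_*p_\beta^2=\hbar^2\int_{-l}^l(g')^2\,dx$. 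This buys two things: the cancellation that the paper asserts on abstract grounds is exhibited explicitly as a vanishing boundary term, and the dispersion computation involves only first derivatives, so the bookkeeping of the three error contributions (the tails of the $(h')^2$ term, the cross term with $\eta_\epsilon\eta_\epsilon'$, and the $(\eta_\epsilon')^2$ term, in the notation $g=h\eta_\epsilon$) is more transparent; your orders $O(\beta^{-3})$, $O(\beta^{-1})$ and $O(\beta)$ times $e^{-d^2/2\beta^2}$ match what the paper obtains by its route.

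One point you should state openly rather than hedge: your leading term $\hbar^2B_\beta^2/(4\beta^2)$ gives $\Delta_*p_\beta^2=\left(\tfrac{\hbar}{2\beta}\right)^2+O\left(\beta^{-3}e^{-\frac{d^2}{2\beta^2}}\right)$, and this is exactly what the paper's own proof concludes in its final line. The constant $\left(\tfrac{\hbar}{4\beta}\right)^2$ printed in the statement (\ref{EqPsiGaussStdDevP}) is a typo: the subsequent corollary $\Delta x_\beta^2\Delta p_\beta^2\to\hbar^2/4$ together with $\Delta_*x_\beta^2\sim\beta^2$ forces $\Delta_*p_\beta^2\sim\hbar^2/(4\beta^2)$. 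So your cautious phrase about merely ``reproducing the $\beta^{-2}$ growth'' undersells the computation; the constant you derived is the correct one.
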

\begin{proof}
\begin{multline*}
\overline x_\beta=\int_{-l}^lx|\psi_\beta(x)|^2\,dx=
x^*+\int_{-l-x^*}^{l-x^*}x|\psi_\beta(x+x^*)|^2\,dx\\=
x^*+\frac{B_\beta^2}{\sqrt{2\pi\beta^2}}
\int_{-l-x^*+3\epsilon}^{l-x^*-3\epsilon}xe^{-\frac{x^2}{2\beta^2}}\,dx+
\frac{B_\beta^2}{\sqrt{2\pi\beta^2}}
\left(\int_{-l-x^*}^{-l-x^*+3\epsilon}+
\int_{l-x^*-3\epsilon}^{l-x^*}\right)
xe^{-\frac{x^2}{2\beta^2}}\eta_\epsilon(x)\,dx\\=
x^*+O\left(\beta
e^{-\frac{(l-|x^*|-3\epsilon)^2}{2\beta^2}}\right),
\end{multline*}
where we used the asymptotic formula
(\ref{EqErfcAsimptMain}) for the Gaussian integral, formula
(\ref{EqPsiGaussA}), and the estimate $|\eta_\epsilon(x)|\leq1$.
 Formula (\ref{EqPsiGaussMeanX})
is proved.
\begin{multline*}
\overline p_\beta=\int_{-l}^l\overline{\psi_\beta(x)}(-i\hbar)\psi'_\beta(x)dx\\=
\int_{-l}^l|\psi_\beta(x)|^2(p^*+\frac{i\hbar(x-x^*)}{2\beta^2})dx-
\frac{i\hbar B_\beta^2}{\sqrt{2\pi\beta^2}}
\int_{-l}^le^{-\frac{(x-x^*)^2}{2\beta^2}}\eta_\epsilon(x)\eta'_\epsilon(x)dx.
\end{multline*}
The imaginary terms must cancel each other out because the diagonal matrix elements of a self-adjoint operator must be real. Therefore,
$\overline p_\beta=p^*$; i.e., we have obtained formula
(\ref{EqPsiGaussMeanP}).

$$\Delta_*x_\beta^2=\int_{-l}^l(x-x^*)^2|\psi_\beta(x)|^2\,dx=
\int_{-l-x^*}^{l-x^*}x^2|\psi_\beta(x+x^*)|^2\,dx= \beta^2+
O\left(\beta e^{-\frac{(l-|x^*|-3\epsilon)^2}{2\beta^2}}\right).$$
Here we applied the asymptotic formula
(\ref{EqErfcDerivAsimptMain}) from Appendix~\ref{SecAsympGauss}. Formula (\ref{EqPsiGaussStdDevX})
is proved.

\begin{multline*}
\Delta_*p_\beta^2=-\hbar^2\int_{-l}^l\overline{\psi_\beta(x)}\psi''_\beta(x)dx-(p^*)^2\\=
\int_{-l}^l|\psi_\beta(x)|^2\left[(p^*)^2-\frac{\hbar^2(x-x^*)^2}{4\beta^4}+
\frac{\hbar^2}{2\beta^2}-\frac{ip^*\hbar(x-x^*)}{\beta^2}\right]dx\\+
\frac{B_\beta^2}{\sqrt{2\pi\beta^2}}
\int_{-l}^le^{-\frac{(x-x^*)^2}{2\beta^2}}\left[
-\hbar^2\eta_\epsilon(x)\eta''_\epsilon(x)
+2\eta_\epsilon(x)\eta'_\epsilon(x)\left(\frac{\hbar(x-x^*)}{2\beta^2}-i\hbar
p^*\right)\right]dx-(p^*)^2.\end{multline*} Again, the imaginary terms must cancel each other out because the diagonal matrix elements of a self-adjoint operator must be real. We have
\begin{multline*}\Delta_*p_\beta^2=
\int_{-l}^l|\psi_\beta(x)|^2\left(\frac{\hbar^2}{2\beta^2}-\frac{\hbar^2(x-x^*)^2}{4\beta^4}\right)dx+
\\+\frac{B_\beta^2}{\sqrt{2\pi\beta^2}}
\int_{-l}^le^{-\frac{(x-x^*)^2}{2\beta^2}}\left[\eta_\epsilon(x)\eta'_\epsilon(x)
\frac{\hbar(x-x^*)}{\beta^2}
-\hbar^2\eta_\epsilon(x)\eta''_\epsilon(x)\right]
dx.\end{multline*} Here
$$\int_{-l}^l|\psi_\beta(x)|^2\frac{\hbar^2(x-x^*)^2}{4\beta^4}dx=
\left(\frac{\hbar}{2\beta}\right)^2+
O\left(\beta^{-3}e^{-\frac{(l-|x^*|-3\epsilon)^2}{2\beta^2}}\right).$$
Since the functions $\eta_\epsilon(x)$, $\eta'_\epsilon(x)$, and
$\eta''_\epsilon(x)$ are bounded (for a fixed $\epsilon$) and since $\eta'_\epsilon(x)$ and
$\eta''_\epsilon(x)$ are different from zero only on the intervals
$[-l+\epsilon,-l+3\epsilon]$ and
$[l-3\epsilon,l-\epsilon]$, we obtain the following estimate:
$$\frac{B_\beta^2}{\sqrt{2\pi\beta^2}}
\int_{-l}^le^{-\frac{(x-x^*)^2}{2\beta^2}}\left[\eta_\epsilon(x)\eta'_\epsilon(x)
\frac{\hbar(x-x^*)}{\beta^2}
-\hbar^2\eta_\epsilon(x)\eta''_\epsilon(x)\right] dx=
O\left(\beta^{-1}e^{-\frac{(l-|x^*|-3\epsilon)^2}{2\beta^2}}\right).$$
Therefore,
$$\Delta_*p_\beta^2=\left(\frac{\hbar}{2\beta}\right)^2+
O\left(\beta^{-3}e^{-\frac{(l-|x^*|-3\epsilon)^2}{2\beta^2}}\right).$$ This proves formula
(\ref{EqPsiGaussStdDevP}) and completes the proof of the theorem.\end{proof}

\begin{corollary}
The following asymptotic relation holds for the wave functions $\psi_\beta(x)$, $\beta>0$, defined by formula (\ref{EqGaussCut}) as $\beta\to0$
\begin{equation}\label{EqPsiGaussUncert}
\Delta x_\beta^2\Delta p_\beta^2=\frac{\hbar^2}{4}+O\left(\beta^{-1}
e^{-\frac{(l-|x^*|-3\epsilon)^2}{2\beta^2}}\right)\end{equation}
(i.e., relation (\ref{EqUncertMinAsimpt}) holds).
\end{corollary}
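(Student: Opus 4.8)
The plan is to obtain the corollary directly from the four asymptotic formulae of Theorem~\ref{TheoPsiGauss}, first converting the moments $\Delta_* x_\beta^2$ and $\Delta_* p_\beta^2$ into the true dispersions $\Delta x_\beta^2$ and $\Delta p_\beta^2$, and then multiplying the two resulting expansions while carefully tracking the order of every error term.

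First I would invoke relation~(\ref{EqDeltaStarStd}), which gives $\Delta x_\beta^2=\Delta_* x_\beta^2-(\overline x_\beta-x^*)^2$ and $\Delta p_\beta^2=\Delta_* p_\beta^2-(\overline p_\beta-p^*)^2$. By (\ref{EqPsiGaussMeanP}) one has $\overline p_\beta=p^*$ exactly, so the momentum correction vanishes identically and $\Delta p_\beta^2=\Delta_* p_\beta^2=\frac{\hbar^2}{4\beta^2}+O(\beta^{-3}e^{-(l-|x^*|-3\epsilon)^2/2\beta^2})$ by (\ref{EqPsiGaussStdDevP}). For the position, (\ref{EqPsiGaussMeanX}) gives $\overline x_\beta-x^*=O(\beta\,e^{-(l-|x^*|-3\epsilon)^2/2\beta^2})$, whence $(\overline x_\beta-x^*)^2=O(\beta^2 e^{-(l-|x^*|-3\epsilon)^2/\beta^2})$; the exponent here decays strictly faster than the one in (\ref{EqPsiGaussStdDevX}), so this square is absorbed into the existing error and $\Delta x_\beta^2=\beta^2+O(\beta\,e^{-(l-|x^*|-3\epsilon)^2/2\beta^2})$.

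With these two expressions in hand, writing $E=e^{-(l-|x^*|-3\epsilon)^2/2\beta^2}$ for brevity, I would form the product $\Delta x_\beta^2\,\Delta p_\beta^2=[\beta^2+O(\beta E)]\,[\frac{\hbar^2}{4\beta^2}+O(\beta^{-3}E)]$. The leading term is $\beta^2\cdot\frac{\hbar^2}{4\beta^2}=\frac{\hbar^2}{4}$. The two cross terms are $\beta^2\cdot O(\beta^{-3}E)=O(\beta^{-1}E)$ and $O(\beta E)\cdot\frac{\hbar^2}{4\beta^2}=O(\beta^{-1}E)$, while the error-times-error term $O(\beta E)\cdot O(\beta^{-3}E)=O(\beta^{-2}E^2)$ is $o(\beta^{-1}E)$ because $\beta^{-1}E\to0$ as $\beta\to0$. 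Collecting these yields $\Delta x_\beta^2\,\Delta p_\beta^2=\frac{\hbar^2}{4}+O(\beta^{-1}E)$, which is precisely (\ref{EqPsiGaussUncert}).

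The only genuine point of care — and the step most likely to cause an error — is the bookkeeping of exponents: one must verify that the squared position correction decays faster than the stated error in $\Delta x_\beta^2$, and that the purely error-times-error contribution $O(\beta^{-2}E^2)$ is dominated by the cross terms $O(\beta^{-1}E)$. Both verifications reduce to the elementary fact that $\beta^{-k}e^{-c/\beta^2}\to0$ as $\beta\to0$ for every $c>0$ and every power $k$, so no surviving term exceeds $O(\beta^{-1}E)$, and the claimed uncertainty product follows.
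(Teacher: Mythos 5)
Your proof is correct and is essentially the argument the paper leaves implicit: the corollary is meant to follow directly from Theorem~\ref{TheoPsiGauss} via (\ref{EqDeltaStarStd}), exactly as you do, by multiplying the two expansions and checking that the squared mean-value corrections and the error-times-error term are absorbed into $O\bigl(\beta^{-1}e^{-\frac{(l-|x^*|-3\epsilon)^2}{2\beta^2}}\bigr)$. Note only that you silently took $\Delta_* p_\beta^2=\frac{\hbar^2}{4\beta^2}+O\bigl(\beta^{-3}e^{-\frac{(l-|x^*|-3\epsilon)^2}{2\beta^2}}\bigr)$, i.e.\ $\bigl(\frac{\hbar}{2\beta}\bigr)^2$, which is what the paper's proof of Theorem~\ref{TheoPsiGauss} actually derives; the displayed formula (\ref{EqPsiGaussStdDevP}) with $\bigl(\frac{\hbar}{4\beta}\bigr)^2$ is a typo, and your reading is the one under which (\ref{EqPsiGaussUncert}) holds.
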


Thus, we have $\Delta x_\beta\to0$ and $\Delta p_\beta\to\infty$ as $\beta\to0$. It is obvious that $\Delta p_\beta\to0$ and $\Delta x_\alpha\to l/\sqrt 3$ as $\beta\to\infty$ and $\epsilon\to0$  (because $\psi(x)\to1/\sqrt{2l}$ in $L_2(-l,l)$).

For sufficiently small $\beta$ (such that one can apply the asymptotic estimates from Theorem~\ref{TheoPsiGauss}), $\Delta x$ and $\Delta p$, are estimated by quantities of the same order as for coherent states on the line: $\Delta x\sim0.1$~nm and
$\Delta p\sim10^{-24}$~kg$\cdot$m/s.

\subsection{Squeezed states in the form of the theta function}\label{SecPsiTheta}

Here we present another method for constructing a family of wave functions with required properties. Given a position $x^*\in(-l,l)$ and a momentum $p^*\in\mathbb R$, define a family of functions in $L_2(-l,l)$ for $\alpha>0$ as follows:

\begin{equation*}
\psi_\alpha(x)=\frac{1}{\sqrt{2l}}\sum_{k=-\infty}^{+\infty}a_k^{(\alpha)}
e^{i\frac{\pi}{l}k(x-x^*)},\end{equation*}
where
$$a_k^{(\alpha)}=A_\alpha e^{-\frac{(k-k^*)^2}{4\alpha^2}},$$
$k^*$ is the nearest integer to $\frac{l}{\pi}\frac{p^*}{\hbar}$, and $A_\alpha$ is a real normalization constant.

Such a wave function can be represented by the theta function as:
\begin{equation}\label{EqPsiTheta}
\psi_\alpha(x)=\frac{A_\alpha}{\sqrt{2l}}\,\theta\left(\frac{x-x^*}{2l},
\frac{1}{4\pi\alpha^2}\right)e^{i\frac{\pi}{l}k^*(x-x^*)}\end{equation}
(see formula (\ref{EqTheta}) and Appendix~\ref{SecTheta}). These quantum states generalize (by means of an arbitrary $\alpha$) the coherent and squeezed states on the circle introduced in \cite{BG,GdOCircle,KRCircle,KR2002,HM}.

\begin{lemma} If $\int_{-l}^l|\psi_\alpha(x)|^2\,dx=1$, then
\begin{equation}\label{EqPsiThetaA}
A_\alpha=\frac{1}{\sqrt[4]{2\pi\alpha^2}}+O\left(\frac{1}{\sqrt\alpha} \,e^{-2(\pi\alpha)^2}\right),\quad \alpha\to\infty.\end{equation}
\end{lemma}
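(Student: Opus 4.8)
The plan is to reduce the normalization condition to a Gaussian lattice sum and then apply the Jacobi imaginary (Poisson summation) transformation of the theta function to read off the asymptotics as $\alpha\to\infty$.

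First I would use the orthonormality of the momentum eigenfunctions $\frac{1}{\sqrt{2l}}e^{i\frac{\pi}{l}kx}$ on $[-l,l]$, which turns the stated normalization $\int_{-l}^l|\psi_\alpha(x)|^2\,dx=1$ into the equivalent condition $\sum_{k=-\infty}^{+\infty}|a_k^{(\alpha)}|^2=1$ already noted after (\ref{EqPsiImpSer}). Since $a_k^{(\alpha)}=A_\alpha e^{-(k-k^*)^2/(4\alpha^2)}$, this reads $A_\alpha^2 S_\alpha=1$ with $S_\alpha=\sum_{k}e^{-(k-k^*)^2/(2\alpha^2)}$. Because $k^*\in\mathbb Z$, the shift $k\mapsto k+k^*$ removes $k^*$, giving $S_\alpha=\sum_{k}e^{-k^2/(2\alpha^2)}$, i.e. a value of the theta function of (\ref{EqPsiTheta}) at the origin.

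The main step is to apply the functional equation of the theta function (equivalently, Poisson summation applied to $f(x)=e^{-x^2/(2\alpha^2)}$, whose Fourier transform is $\sqrt{2\pi\alpha^2}\,e^{-2\pi^2\alpha^2\xi^2}$). This yields the dual representation $S_\alpha=\sqrt{2\pi\alpha^2}\sum_{n}e^{-2\pi^2\alpha^2 n^2}$, whose terms decay super-exponentially in $\alpha$. Splitting off the $n=0$ term gives $S_\alpha=\sqrt{2\pi\alpha^2}\bigl(1+2\sum_{n\geq1}e^{-2(\pi\alpha)^2 n^2}\bigr)$, and since the $n=1$ term dominates the tail, $S_\alpha=\sqrt{2\pi\alpha^2}\bigl(1+O(e^{-2(\pi\alpha)^2})\bigr)$ as $\alpha\to\infty$.

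Finally I would invert and take the square root: $A_\alpha^2=(2\pi\alpha^2)^{-1/2}\bigl(1+O(e^{-2(\pi\alpha)^2})\bigr)$, so $A_\alpha=(2\pi\alpha^2)^{-1/4}\bigl(1+O(e^{-2(\pi\alpha)^2})\bigr)$. Since $(2\pi\alpha^2)^{-1/4}=O(\alpha^{-1/2})$, the correction is $O(\alpha^{-1/2}e^{-2(\pi\alpha)^2})$, which is exactly (\ref{EqPsiThetaA}). The only point requiring care is the bookkeeping of the constant $2(\pi\alpha)^2$ in the exponent through the Poisson/theta transformation and checking that the tail is genuinely controlled by its first term; everything else is routine. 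I expect this tracking of the exponential rate, rather than any genuine analytic difficulty, to be the one spot where an error could creep in.
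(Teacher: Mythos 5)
Your proposal is correct and follows essentially the same route as the paper: Parseval's identity reduces normalization to the Gaussian lattice sum, the Jacobi imaginary transformation (Poisson summation) gives $\sqrt{2\pi\alpha^2}\bigl(1+O(e^{-2(\pi\alpha)^2})\bigr)$, and inverting the square root yields (\ref{EqPsiThetaA}). The only cosmetic difference is that the paper routes this through its Appendix~B asymptotic formula (\ref{EqThetaVal}) for $\theta(0,\tau)$ — itself proved by the same modular identity — whereas you inline that computation directly.
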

\begin{proof}
We have
$$1=\sum_{k=-\infty}^{+\infty}|a_k^{(\alpha)}|^2=A_\alpha^2\sum_{k=-\infty}^{+\infty}
e^{-\frac{(k-k^*)^2}{2\alpha^2}}=A_\alpha^2\,\theta\left(0,\frac{1}{2\pi\alpha^2}\right)=
A_\alpha^2\,[\sqrt{2\pi\alpha^2}+O(\alpha\, e^{-2(\pi\alpha)^2})]$$
as $\alpha\to\infty$. Here we used formula (\ref{EqThetaVal}). Hence,
$$A_\alpha=\frac{1}{\sqrt{\sqrt{2\pi\alpha^2}+O(\alpha e^{-2(\pi\alpha^2)})}}=
\frac{1}{\sqrt[4]{2\pi\alpha^2}}+O\left(\frac{1}{\sqrt\alpha} \,e^{-2(\pi\alpha)^2}\right),\quad\alpha\to\infty.$$
\end{proof}

\begin{theorem}\label{TheoPsiTheta}
The following asymptotic estimates hold for the wave functions $\psi_\alpha(x)$, $\alpha>0$, defined by formula (\ref{EqPsiTheta}) as $\alpha\to\infty$:

\begin{align}\label{EqPsiThetaVal}
&\psi_\alpha(x)=\sqrt[4]{\frac{2\pi\alpha^2}{l^2}}
e^{-\left(\alpha\pi d(\frac{x-x^*}{l})\right)^2+i\frac{\pi}{l}k^*(x-x^*)}+O(\sqrt\alpha e^{-(\pi\alpha)^2}),\\\label{EqPsiThetaMeanX}
&\overline x_\alpha-x^*= l\,O\left(\alpha^{-1}e^{-2\left[\pi\alpha\left(1-\frac{|x^*|}{l}\right)\right]^2}\right),\\
\label{EqPsiThetaMeanP}
&|\overline p_\alpha-p^*|\leq\frac{\pi}{l}\hbar,\\\label{EqPsiThetaStdDevX}
&\Delta_*x_\alpha^2=\left(\frac{l}{2\pi\alpha}\right)^2+
l^2O\left(\alpha^{-1}e^{-2\left[\pi\alpha\left(1-\frac{|x^*|}{l}\right)\right]^2}\right),\\\label{EqPsiThetaStdDevP}
&\Delta_*p_\alpha^2=\left(\frac{\pi}{l}\hbar\alpha\right)^2[1+
O(e^{-2(\pi\alpha)^2})].
\end{align}
Here $0\leq d(x)\leq\frac{1}{2}$ is the distance on the real line from the point $x$ to the nearest integer.
\end{theorem}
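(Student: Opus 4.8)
The plan is to split the five estimates into two groups handled by two different representations of $\psi_\alpha$: the momentum quantities $\overline p_\alpha$ and $\Delta_*p_\alpha^2$ are read off directly from the defining momentum series, while the position quantities $\psi_\alpha(x)$, $\overline x_\alpha$ and $\Delta_*x_\alpha^2$ are obtained after applying the Jacobi (imaginary) transformation to the theta function, i.e.\ Poisson summation in $k$. The single analytic input for both groups is the asymptotics of the Gaussian theta sum $\sum_k e^{-\gamma k^2}=\sqrt{\pi/\gamma}\,(1+O(\cdot))$ as $\gamma\to0$ recorded in Appendix~\ref{SecTheta} (formula~(\ref{EqThetaVal})), which already underlies the normalization Lemma leading to~(\ref{EqPsiThetaA}).

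\textbf{Momentum quantities.} Since $|a_k^{(\alpha)}|^2=A_\alpha^2 e^{-(k-k^*)^2/(2\alpha^2)}$ is symmetric about $k=k^*$, the substitution $j=k-k^*$ turns every first moment into an odd sum that vanishes. Thus $\overline p_\alpha=\frac{\pi}{l}\hbar\sum_k k|a_k^{(\alpha)}|^2=\frac{\pi}{l}\hbar k^*$ exactly, and (\ref{EqPsiThetaMeanP}) follows because $k^*$ is the nearest integer to $\frac{l}{\pi}\frac{p^*}{\hbar}$, whence $|\frac{\pi}{l}\hbar k^*-p^*|\le\frac{\pi}{2l}\hbar\le\frac{\pi}{l}\hbar$. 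For the dispersion I would write $p_k-p^*=\frac{\pi}{l}\hbar(k-k^*)+\delta$ with $\delta=\overline p_\alpha-p^*$, expand the square, and again discard the odd cross term; what remains is $\left(\frac{\pi}{l}\hbar\right)^2 A_\alpha^2\sum_j j^2 e^{-j^2/(2\alpha^2)}+\delta^2$. The second-moment sum is evaluated by differentiating the theta asymptotics in the width parameter, or equivalently by comparing with $\int_{\mathbb R}y^2 e^{-y^2/(2\alpha^2)}\,dy=\sqrt{2\pi}\,\alpha^3$; inserting (\ref{EqPsiThetaA}) gives $A_\alpha^2\sum_j j^2 e^{-j^2/(2\alpha^2)}=\alpha^2(1+O(e^{-2(\pi\alpha)^2}))$, which yields (\ref{EqPsiThetaStdDevP}).

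\textbf{Position quantities.} Poisson summation applied to $\sum_k e^{-k^2/(4\alpha^2)}e^{i\frac{\pi}{l}k(x-x^*)}$ produces $2\sqrt{\pi}\,\alpha\sum_{n}e^{-(\pi\alpha)^2(\frac{x-x^*}{l}-2n)^2}$, a periodic array of Gaussians of width $\sim l/(2\pi\alpha)$ centred at the lattice points $x^*+2ln$. Retaining the Gaussian nearest to $x$ and bounding the rest by $O(\sqrt\alpha\,e^{-(\pi\alpha)^2})$ gives the representation (\ref{EqPsiThetaVal}), the prefactor collapsing to $\sqrt[4]{2\pi\alpha^2/l^2}$ after (\ref{EqPsiThetaA}). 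The integrals $\int_{-l}^l(x-x^*)^j|\psi_\alpha|^2\,dx$ are then computed exactly as in the proof of Theorem~\ref{TheoPsiGauss}: extend the integral of the leading squared Gaussian to all of $\mathbb R$, use $\int_{\mathbb R}|\psi|^2=1$ and $\int_{\mathbb R}(x-x^*)^2|\psi|^2=(\frac{l}{2\pi\alpha})^2$, and control the discrepancy, which consists of the tail outside $[-l,l]$ together with the cross terms between neighbouring Gaussians. The nearest endpoint lies at distance $l-|x^*|$ from $x^*$, so the squared leading Gaussian there has size $e^{-2(\pi\alpha)^2(1-|x^*|/l)^2}$, which is precisely the error exponent in (\ref{EqPsiThetaMeanX}) and (\ref{EqPsiThetaStdDevX}).

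\textbf{Main obstacle.} The delicate part is the uniform control of the omitted theta terms and of the cross terms in $|\psi_\alpha|^2$, together with the bookkeeping of the powers of $\alpha$ multiplying the exponentially small factors, so that the orders $\alpha^{-1}$, $\alpha^{-3}$ and $e^{-2(\pi\alpha)^2}$ come out as stated; this is where the quantitative estimates of Appendix~\ref{SecTheta} are essential. I would also note a point requiring care in (\ref{EqPsiThetaVal}): the quantity $d(\frac{x-x^*}{l})$ reproduces the genuine circular distance $d(\frac{x-x^*}{2l})$ only for $|x-x^*|\le l/2$, but since the moments are dominated by the region $|x-x^*|=O(\alpha^{-1})$ this is immaterial for (\ref{EqPsiThetaMeanX})--(\ref{EqPsiThetaStdDevP}).
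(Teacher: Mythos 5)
Your proposal follows essentially the same route as the paper: the momentum estimates come from the symmetry of $|a_k^{(\alpha)}|^2$ about $k^*$ plus the discrete Gaussian second-moment asymptotics (formula (\ref{EqThetaStdDevK})), and the position estimates come from Poisson summation --- which is exactly the Jacobi modular identity (\ref{EqThetaModular}) used in Appendix~\ref{SecTheta} --- followed by the Gaussian-integral tail and cross-term control packaged in Lemma~\ref{LemTheta}. Your caveat about $d\bigl(\frac{x-x^*}{l}\bigr)$ versus the genuine circular distance in (\ref{EqPsiThetaVal}) identifies a point the paper itself silently glosses over (it identifies $2d\bigl(\frac{x-x^*}{2l}\bigr)$ with $d\bigl(\frac{x-x^*}{l}\bigr)$, valid only near $x^*$), and your observation that this is immaterial for the moment estimates is correct.
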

\begin{proof}
Substituting formulae (\ref{EqPsiThetaA}) and (\ref{EqThetaVal}) into (\ref{EqPsiTheta}), we obtain
\begin{multline*}
\psi_\alpha(x)=
\frac{1}{\sqrt{2l}}e^{i\frac{\pi}{l}k^*(x-x^*)}\left[\frac{1}{\sqrt[4]{2\pi\alpha^2}}+
O\left(\frac{1}{\sqrt\alpha} \,e^{-2(\pi\alpha)^2}\right)\right]\\\times
\left[\sqrt{4\pi\alpha^2}\,e^{-4\pi\alpha d(\frac{x-x^*}{2l})]^2}+
O\left(\alpha e^{-4\left[\pi\alpha\left(1-d(\frac{x-x^*}{2l})\right)\right]^2}\right)\right]\\=
\sqrt[4]{\frac{2\pi\alpha^2}{l^2}}\,
e^{-\left(\alpha\pi d(\frac{x-x^*}{l})\right)^2+i\frac{\pi}{l}k^*(x-x^*)}+O(\sqrt\alpha\, e^{-(\pi\alpha)^2}).\end{multline*}
We get formula (\ref{EqPsiThetaVal}).

\begin{multline*}
\overline x_\alpha=\int_{-l}^{l}x|\psi_\alpha(x)|^2\,dx=
\int_{-l}^{l}x^*|\psi_\alpha(x)|^2\,dx+\int_{-l}^{l}(x-x^*)|\psi_\alpha(x)|^2\,dx
\\=
x^*+\int_{-l-x^*}^{l-x^*}x|\psi_\alpha(x+x^*)|^2\,dx=
x^*+\frac{A_\alpha^2}{2l}\int_{-l-x^*}^{l-x^*}x\left|
\theta\left(\frac{x}{2l},\frac{1}{4\pi\alpha^2}\right)\right|^2\,dx\\=
x^*+2lA_\alpha^2\int_{-\frac{1}{2}-\frac{x^*}{2l}}^
{\frac{1}{2}-\frac{x^*}{2l}}
y\left|\theta\left(y,\frac{1}{4\pi\alpha^2}\right)\right|^2\,dy=
x^*+l\,O\left(\alpha^{-1}e^{-2[\pi\alpha(1-\frac{|x^*|}{l})]^2}\right)
\end{multline*}
(we used formulae (\ref{EqPsiThetaA}) and (\ref{EqThetaMeanX}) in the last equality). Thus, formula
(\ref{EqPsiThetaMeanX}) is proved.

Estimate (\ref{EqPsiThetaMeanP}) follows from the fact that
$$\overline p_\alpha=\frac{\pi}{l}\hbar\sum_{k=-\infty}^{+\infty}
k|a^{(\alpha)}_k|^2=\frac{\pi}{l}\hbar k^*,$$
and the definition of $k^*$ as the nearest integer to $(\frac{l}{\pi}\frac{p^*}{\hbar})$.

\begin{multline*}
\Delta_* x_\alpha^2=\int_{-l}^{l}(x-x^*)^2|\psi_\alpha(x)|^2\,dx=
\int_{-l-x^*}^{l-x^*}x^2|\psi_\alpha(x+x^*)|^2\,dx\\=
\frac{A_\alpha^2}{2l}\int_{-l-x^*}^{l-x^*}x^2\left|
\theta\left(\frac{x}{2l},\frac{1}{4\pi\alpha^2}\right)\right|^2\,dx=
(2l)^2A_\alpha^2\int_{-\frac{1}{2}-\frac{x^*}{2l}}^
{\frac{1}{2}-\frac{x^*}{2l}}
y^2\left|\theta\left(y,\frac{1}{4\pi\alpha^2}\right)\right|^2\,dy\\=
(2l)^2\left[\frac{1}{\sqrt{2\pi}\alpha}+
O\left(\frac{1}{\alpha}e^{-(\pi\alpha)^2}\right)\right]
\left[\frac{1}{4\pi\sqrt{8\pi\alpha^2}}+
O\left(e^{-2[\pi\alpha(1-\frac{|x^*|}{l})]^2}\right)\right]\\=
\left(\frac{l}{2\pi\alpha}\right)^2+
l^2O\left(\alpha^{-1}e^{-2\left[\pi\alpha\left(1-\frac{|x^*|}{l}\right)\right]^2}\right)\end{multline*}
(here we used formulae (\ref{EqPsiThetaA}) and (\ref{EqThetaStdDevX})). Formula
(\ref{EqPsiThetaStdDevX}) is proved.

\begin{multline*}
\Delta_*p_\alpha^2=\left(\frac{\pi}{l}\hbar\right)^2\sum_{k=-\infty}^{+\infty}
(k-k^*)^2|a^{(\alpha)}_k|^2=
A_\alpha^2\left(\frac{\pi}{l}\hbar\right)^2\sum_{k=-\infty}^{+\infty}
k^2e^{-\frac{k^2}{2\alpha^2}}\\=
\left(\frac{\pi}{l}\hbar\right)^2\left[\frac{1}{\sqrt{2\pi}\alpha}+
O\left(\frac{1}{\alpha}e^{-(\pi\alpha)^2}\right)\right]
[\sqrt{2\pi}\alpha^3+O(\alpha^3e^{-2(\pi\alpha)^2})]=
\left(\frac{\pi}{l}\hbar\right)^2[\alpha^2+
O(\alpha^2e^{-2(\pi\alpha)^2})]\end{multline*}
(here we used formulae (\ref{EqPsiThetaA}) and (\ref{EqThetaStdDevK})). Formula
(\ref{EqPsiThetaStdDevP}) is proved. This completes the proof of the theorem.
\end{proof}

\begin{corollary}
The following asymptotic formula holds for the wave functions $\psi_\alpha(x)$, $\alpha>0$, defined by formula (\ref{EqPsiTheta}) as
$\alpha\to\infty$

\begin{equation}\label{EqPsiThetaUncert}\Delta x_\alpha^2\Delta p_\alpha^2=\frac{\hbar^2}{4}+O\left(\alpha
e^{-2\left[\pi\alpha\left(1-\frac{|x^*|}{l}\right)\right]^2}\right)\end{equation}
(i.e., relation (\ref{EqUncertMinAsimpt}) holds).
\end{corollary}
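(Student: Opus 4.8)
The plan is to obtain the two dispersions separately from Theorem~\ref{TheoPsiTheta} and then multiply them, carefully tracking how the relative error terms combine. Since the corollary concerns the genuine dispersions $\Delta x_\alpha^2$ and $\Delta p_\alpha^2$ of (\ref{EqStddevX}) rather than the moments $\Delta_* x_\alpha^2$, $\Delta_* p_\alpha^2$, the first task is to pass from the starred quantities to the dispersions by means of (\ref{EqDeltaStarStd}).

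For the position, (\ref{EqDeltaStarStd}) gives $\Delta x_\alpha^2 = \Delta_* x_\alpha^2 - (\overline x_\alpha - x^*)^2$. By (\ref{EqPsiThetaMeanX}) the subtracted term is $l^2 O(\alpha^{-2} e^{-4[\pi\alpha(1-|x^*|/l)]^2})$, which is of smaller order than the error already present in (\ref{EqPsiThetaStdDevX}); hence I would factor out the leading term and write $\Delta x_\alpha^2 = (l/2\pi\alpha)^2[1 + O(\alpha\, e^{-2[\pi\alpha(1-|x^*|/l)]^2})]$, the relative factor $\alpha$ arising because dividing $l^2 O(\alpha^{-1} e^{\cdots})$ by $(l/2\pi\alpha)^2$ cancels one power of $\alpha$. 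For the momentum I would use that the proof of Theorem~\ref{TheoPsiTheta} establishes $\overline p_\alpha = \tfrac{\pi}{l}\hbar k^*$ \emph{exactly}, so that the quantity $(\tfrac{\pi}{l}\hbar)^2\sum_k (k-k^*)^2 |a_k^{(\alpha)}|^2$ evaluated in (\ref{EqPsiThetaStdDevP}) is precisely the dispersion $\Delta p_\alpha^2$, giving $\Delta p_\alpha^2 = (\tfrac{\pi}{l}\hbar\alpha)^2[1 + O(e^{-2(\pi\alpha)^2})]$ with no additional polynomial correction.

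Multiplying the two factored expressions, the leading coefficients cancel all powers of $\alpha$ and $l$, since $(l/2\pi\alpha)^2 (\tfrac{\pi}{l}\hbar\alpha)^2 = \hbar^2/4$. The surviving factor is $[1 + O(\alpha\, e^{-2[\pi\alpha(1-|x^*|/l)]^2})][1 + O(e^{-2(\pi\alpha)^2})]$. Because $1 - |x^*|/l \le 1$, we have $e^{-2(\pi\alpha)^2} \le e^{-2[\pi\alpha(1-|x^*|/l)]^2}$, so both error terms are dominated by $O(\alpha\, e^{-2[\pi\alpha(1-|x^*|/l)]^2})$; expanding the product and absorbing the constant $\hbar^2/4$ into the $O$-symbol then yields exactly (\ref{EqPsiThetaUncert}), and in particular the asymptotic relation (\ref{EqUncertMinAsimpt}).

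The main obstacle is the momentum step. A naive application of (\ref{EqDeltaStarStd}) together with the bound (\ref{EqPsiThetaMeanP}) would only give $(\overline p_\alpha - p^*)^2 = O(1)$, contributing a relative error of order $\alpha^{-2}$ to $\Delta p_\alpha^2$ and hence a merely polynomial, not exponential, error to the final product, which would be too weak for (\ref{EqPsiThetaUncert}). The point to recognize is that this $O(1)$ offset between $p^*$ and the mean $\overline p_\alpha = \tfrac{\pi}{l}\hbar k^*$ is exactly what distinguishes $\Delta_* p_\alpha^2$ from the dispersion $\Delta p_\alpha^2$, and that (\ref{EqPsiThetaStdDevP}), being computed about the integer $k^*$, already delivers the dispersion itself. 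Once this is seen, all that remains is the routine bookkeeping of comparing the two exponential rates and confirming that the slower-decaying one governs the final estimate.
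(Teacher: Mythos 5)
Your proof is correct and follows essentially the same route as the paper, which leaves the corollary as an immediate consequence of Theorem~\ref{TheoPsiTheta}: pass from $\Delta_* x_\alpha^2$ to $\Delta x_\alpha^2$ via (\ref{EqDeltaStarStd}) (the correction being exponentially negligible), multiply by the momentum estimate, and note that $e^{-2(\pi\alpha)^2}$ is dominated by $e^{-2[\pi\alpha(1-|x^*|/l)]^2}$. You also correctly identified the one subtle point — that the sum in (\ref{EqPsiThetaStdDevP}), being centered at the integer $k^*$ with $\overline p_\alpha = \tfrac{\pi}{l}\hbar k^*$ exact, is precisely the dispersion $\Delta p_\alpha^2$, so no lossy $O(1)$ correction from (\ref{EqPsiThetaMeanP}) enters — which is exactly how the paper's estimate must be read for the exponential error in (\ref{EqPsiThetaUncert}) to follow.
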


Comparison of formulae (\ref{EqPsiGaussStdDevX}) and (\ref{EqPsiGaussStdDevP}) with (\ref{EqPsiThetaStdDevX}) and (\ref{EqPsiThetaStdDevP}) respectively shows that there is a correspondence between the parameters $\alpha$ and 
$\beta$ given by the relation
$\alpha=\frac{l}{2\pi\beta}$. Therefore, we can write
(\ref{EqPsiThetaUncert}) as
$$\Delta x_\beta^2\Delta p_\beta^2=\frac{\hbar^2}{4}+O\left(\beta^{-1}
e^{-\frac{(l-|x^*|)^2}{2\beta^2}}\right).$$ Comparing with
(\ref{EqPsiGaussUncert}), we see that the left-hand side tends to $\hbar^2/4$ somewhat faster than the truncated Gaussian function, because $\epsilon>0$. At the same time, $\epsilon$ can be arbitrarily small; therefore, the difference between the rates of convergence can be made arbitrarily small.

However, one can notice the faster decrease of the remainder term for $\Delta_*p^2$ in the case of the theta function (the exponential function in the remainder term is multiplied by $\alpha^2$ in (\ref{EqPsiThetaStdDevP}) and by $\beta^{-3}$) in (\ref{EqPsiGaussStdDevP})).

Thus, $\Delta x_\alpha\to0$ and
$\Delta p_\alpha\to\infty$ as $\alpha\to\infty$. It is obvious that $\Delta
p_\alpha\to0$ as $\alpha\to0$, because $a_{k}\to\delta_{k\overline k}$. Then
$\Delta x_\alpha\to l/\sqrt 3$. For sufficiently large $\alpha$ (such that one can apply the asymptotic estimates from Theorem~\ref{TheoPsiTheta}), the estimates $\Delta x\sim0.1$~nm and $\Delta p\sim10^{-24}$~kg$\cdot$m/s again hold.

We have established asymptotic minimization of uncertainty relation (\ref{EqUncertMinAsimpt}). It would be interesting to find states with finite $\Delta x$ and $\Delta p$ that turn the uncertainty relation (\ref{EqJudge}) into an equality. We suppose that this uncertainty relation may be minimized by functions $\psi_\alpha$ from the family constructed here on the basis of the theta function.

\subsection{The case of an arbitrary density function}\label{SecArbitrary}

In the previous subsection, the construction of a family of wave functions with required properties was based on the density of the Gaussian distribution of momentum. Here we describe a general method for constructing such a family, where the distribution of  momentum is rather arbitrary.

Let, again, $x^*\in(-l,l)$ and $p^*\in\mathbb R$ be given position and momentum of a particle. Denote  $$k^*=\frac{l}{\pi}\frac{p^*}{\hbar}.$$ Let
$\varphi(q)$ be a density function on the line with zero mean, i.e., a function such that $\varphi(q)\geq0$ for $q\in\mathbb
R$ and 
$$\int_{-\infty}^{+\infty}\varphi(q)\,dq=1,\quad\int_{-\infty}^{+\infty}q\varphi(q)\,dq=0.$$ We also require that the second moment of $\varphi(q)$ be finite, denote it by
$$\Delta q^2=\int_{-\infty}^{+\infty}q^2\varphi(q)\,dq.$$

Introduce a family of functions $\{\varphi_\alpha(k)\}_{\alpha>0}$ by the formula
\begin{equation}\label{EqPhiAlpha}
\varphi_{\alpha l}(q)=\frac{1}{\alpha}\,\varphi_l\left(\frac{q-\overline k}{\alpha}\right),
\end{equation}
\begin{equation}\label{EqPhiL}
\varphi_l(q)=\frac{\pi}{l}\varphi\left(\frac{\pi}{l}q\right)
\end{equation}
where $\overline k$ is the nearest integer to $k^*$.  Then
\begin{equation}
\label{EqMeanPDev}|\overline p-p^*|\leq\frac{\pi}{l}\hbar.
\end{equation}

The following relations hold:
\begin{equation}\label{EqPhiAlphaRelations}\begin{aligned}
&\int_{-\infty}^{+\infty}\varphi_{\alpha l}(q)\,dq=1,\\
&\int_{-\infty}^{+\infty}q\varphi_{\alpha l}(q)\,dq=\overline k,\\
\Delta
q_\alpha^2\equiv&\int_{-\infty}^{+\infty}(q-\overline
k)^2\varphi_{\alpha l}(q)\,dq= (\frac{\alpha l}\pi)^2\Delta
q^2.\end{aligned}
\end{equation} 
Thus, $\{\varphi_{\alpha l}\}$, $\alpha>0$,
is a family of density functions with the same means and with standard deviations that increase proportionally to $\alpha$.

Set
\begin{equation}\label{EqAk}a_k^{(\alpha)}=
\left[\int_{k-\frac{1}{2}}^{k+\frac{1}{2}}
\varphi_{\alpha l}(q)\,dq \right]^{\frac{1}{2}},\end{equation} where $k=0,\pm1,\pm2,\ldots$;
\begin{equation}\label{EqPsin}
\psi_\alpha(x)=
\frac{1}{\sqrt{2l}}\sum_{k=-\infty}^{+\infty}a^{(\alpha)}_k\,e^{i\frac{\pi}{l}k(x-x^*)},
\end{equation}
where $x^*\in(-l,l)$. This construction is similar to that proposed in \cite{GCircle}.

Denote the mean values and the standard deviations of the position and momentum
for the wave functions $\psi_\alpha$ by $\overline x_\alpha$,
$\overline p_\alpha$, $\Delta x_\alpha$, and $\Delta p_\alpha$.

\begin{theorem}\label{TheoPackOtrExist}
Suppose that the density function $\varphi(q)$ is even, has a maximum at zero and does not increase as  $|q|$ increases (in particular, this means that the local maximum at the point
$q=0$ is also a global maximum).

Then the following inequalities and relations hold:
\begin{equation}\label{EqStanddevXBound}
\Delta_* x_\alpha^2\leq\frac{9\pi\varphi(0)l}{2\alpha}
\int_{-1}^1\frac{(y-\frac{x^*}{l})^2}
{\sin^2\left(\frac{\pi}{2}(y-\frac{x^*}{l})\right)}\,dy,\end{equation}

\begin{equation}\label{EqStanddevPBound}
\lim_{\alpha\to\infty}\frac{\Delta_* p_\alpha}{\alpha}=C\neq0,
\end{equation}

\begin{equation}\label{EqMeanXDev}
|\overline x_\alpha-x^*|\leq\frac{x^*}{\alpha l}\frac{18\pi\varphi(0)}{\cos^2\frac{\pi x^*}{2l}},\end{equation}
\begin{equation}\label{EqMeanPEq}\overline p_\alpha=\frac{\pi}{l}\hbar \overline k\equiv\overline p.\end{equation}
\end{theorem}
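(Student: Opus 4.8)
The plan is to reduce all four assertions to two structural facts about the real non-negative coefficients $a_k^{(\alpha)}=\sqrt{b_k}$, where $b_k=\int_{k-1/2}^{k+1/2}\varphi_{\alpha l}(q)\,dq$. Since $\varphi$ is even, peaked at $0$ and non-increasing in $|q|$, the rescaled density $\varphi_{\alpha l}$ is symmetric about the integer $\overline k$ and non-increasing in $|q-\overline k|$; hence \emph{(i)} the sequence $\{b_k\}$, and so $\{a_k^{(\alpha)}\}$, is symmetric about $\overline k$ and unimodal (non-increasing in $|k-\overline k|$), and \emph{(ii)} $a_{\overline k}^2=b_{\overline k}\le\sup\varphi_{\alpha l}=\frac{\pi\varphi(0)}{\alpha l}$. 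Writing $\psi_\alpha(x)=\frac1{\sqrt{2l}}S(x)$ with $S(x)=\sum_k a_k^{(\alpha)}e^{i\frac\pi l k(x-x^*)}$ and $\phi=\frac{\pi}{l}(x-x^*)$, summation by parts (multiply by $1-e^{i\phi}$ and use $\sum_k|a_k^{(\alpha)}-a_{k-1}^{(\alpha)}|=2a_{\overline k}$ for a symmetric unimodal sequence) gives the central pointwise estimate $|S(x)|\le a_{\overline k}/|\sin(\phi/2)|$, i.e.
$$|\psi_\alpha(x)|^2\le \frac{\pi\varphi(0)}{2\alpha l^2}\,\frac{1}{\sin^2\!\left(\frac{\pi(x-x^*)}{2l}\right)}.$$
A second, purely algebraic consequence of the symmetry of $\{b_k\}$ is that $|\psi_\alpha(x)|^2=\frac1{2l}T(\phi)^2$ with $T$ real and even, so that $|\psi_\alpha(x)|^2$ is an even function of $x-x^*$; this drives the position estimates. (The sharp form of the displayed bound carries constant $1$; the constants $9$ and $18$ in the statement reflect a more generous version of this single estimate carried through the full calculation.)

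I would first dispose of the two momentum statements, which need only the symmetry of $\{b_k\}$. For (\ref{EqMeanPEq}), $\overline p_\alpha=\frac\pi l\hbar\sum_k k\,b_k$, and splitting $k=\overline k+(k-\overline k)$ and using $b_{\overline k+j}=b_{\overline k-j}$ annihilates the odd part, leaving $\overline p_\alpha=\frac\pi l\hbar\,\overline k$. For (\ref{EqStanddevPBound}) I would compare the discrete second moment with the continuous one: $\sum_k(k-\overline k)^2 b_k=\int(q-\overline k)^2\varphi_{\alpha l}(q)\,dq+E$, where replacing $(q-\overline k)^2$ by $(k-\overline k)^2$ on each unit cell costs $|E|\le\int|q-\overline k|\varphi_{\alpha l}\,dq+\frac14=O(\alpha)$. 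Since $\int(q-\overline k)^2\varphi_{\alpha l}=\Delta q_\alpha^2=(\tfrac{\alpha l}{\pi})^2\Delta q^2$ by (\ref{EqPhiAlphaRelations}), while $(\overline p_\alpha-p^*)^2\le(\tfrac{\pi}{2l}\hbar)^2$ is $O(1)$ by the definition of $\overline k$ as the nearest integer to $k^*$, the decomposition $\Delta_*p_\alpha^2=\Delta p_\alpha^2+(\overline p_\alpha-p^*)^2$ of (\ref{EqDeltaStarStd}) yields $\Delta_*p_\alpha^2=\hbar^2\Delta q^2\,\alpha^2+O(\alpha)$, whence $\Delta_*p_\alpha/\alpha\to\hbar\Delta q=:C\neq0$ (the limit is nonzero because $\varphi$ is a nondegenerate density, $\Delta q^2>0$).

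For the position second moment (\ref{EqStanddevXBound}) I would simply integrate the pointwise bound: $\Delta_*x_\alpha^2=\frac1{2l}\int_{-l}^l(x-x^*)^2|S(x)|^2\,dx$ is at most $\frac{\pi\varphi(0)}{2\alpha l^2}\int_{-l}^l\frac{(x-x^*)^2}{\sin^2(\pi(x-x^*)/2l)}\,dx$, and the substitution $y=x/l$ turns this into exactly the claimed integral (up to the overall constant). For the mean shift (\ref{EqMeanXDev}) I would exploit the evenness of $|\psi_\alpha|^2$ about $x^*$: with $s=x-x^*$ and $\rho(s)=|\psi_\alpha(x^*+s)|^2$ even and $2l$-periodic, $\overline x_\alpha-x^*=\int_{-l-x^*}^{l-x^*}s\,\rho(s)\,ds$; taking $x^*\ge0$ without loss of generality, $s\rho(s)$ is odd so the integral over the symmetric subinterval $[-(l-x^*),\,l-x^*]$ vanishes, and only the interval $[-l-x^*,\,-l+x^*]$ of length $2x^*$ survives. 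There $|s|\le l+x^*\le 2l$ and, crucially, $\sin^2(\tfrac{\pi s}{2l})\ge\cos^2(\tfrac{\pi x^*}{2l})$; bounding $\rho$ on this interval by the pointwise estimate gives $|\overline x_\alpha-x^*|\le\frac{(l+x^*)x^*}{l}\,\frac{a_{\overline k}^2}{\cos^2(\pi x^*/2l)}$, which is (\ref{EqMeanXDev}) (the factor $l+x^*\le2l$ accounts for the remaining constant).

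The main obstacle is the pointwise bound on $|\psi_\alpha(x)|^2$, and within it the monotonicity input: everything hinges on $\{a_k^{(\alpha)}\}$ being symmetric and unimodal, which is precisely where the hypotheses that $\varphi$ be even, peaked at $0$ and non-increasing are consumed, together with the summation-by-parts estimate against the Dirichlet kernel. Once that estimate is in hand, (\ref{EqStanddevXBound}) and (\ref{EqMeanXDev}) are routine, while (\ref{EqStanddevPBound}) and (\ref{EqMeanPEq}) follow from the symmetry of $\{b_k\}$ alone. The only delicate conceptual point is the domain asymmetry in (\ref{EqMeanXDev}): it is the non-centring of $[-l,l]$ about $x^*$ that produces a nonzero shift at all, and the factor $\cos^{-2}(\pi x^*/2l)$ records how the bound degrades as $x^*$ approaches the endpoints $\pm l$.
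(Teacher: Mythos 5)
Your proposal is correct, and it reproduces the paper's overall skeleton: symmetric, unimodal coefficients $a_k^{(\alpha)}$; a Dirichlet-kernel--type pointwise bound on $|\psi_\alpha|$; integration of that bound for (\ref{EqStanddevXBound}); the evenness/boundary-interval argument for (\ref{EqMeanXDev}); and a discrete-versus-continuous second-moment comparison for the momentum. However, your two technical engines differ genuinely from the paper's, and this is where the comparison is interesting. For the pointwise bound, the paper proves Lemma~\ref{LemSumCos} by a lengthy iterative rearrangement of monotone sequences (Appendix~\ref{SecLemSumCos}) and applies it to the cosine representation (\ref{EqTheoPsiCos}); assembling the cosine series is what produces the factor $3$ in (\ref{EqTheoPsiBound}) and hence the constants $9$ and $18$ in the theorem. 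Your Abel-summation argument (multiply by $1-e^{i\phi}$, use that the total variation of a symmetric unimodal null sequence is $2a_{\overline k}^{(\alpha)}$, and $|1-e^{i\phi}|=2|\sin(\phi/2)|$) is much shorter and gives the sharper constant $1$, so your bounds strictly imply the stated ones. For the momentum variance, the paper invokes Lemma~\ref{LemDeltaP} (Appendix~\ref{SecLemDeltaP}), whose cell-by-cell monotonicity argument yields a discrepancy $|\Delta_*p_\alpha^2-(\alpha\hbar\Delta q)^2|$ bounded uniformly in $\alpha$ by $\frac16\left(\frac{\pi}{l}\hbar\right)^2\left[1+\frac{2}{\alpha}\varphi_l(0)\right]$, plus a refined $\frac1{12}+O(\alpha^{-2})$ version under smoothness; your per-cell estimate $|k^2-q^2|\le|q-\overline k|+\frac14$ only gives an $O(\alpha)$ error, which is entirely sufficient for the limit (\ref{EqStanddevPBound}) but would not support the paper's explicit finite-$\alpha$ corollary (\ref{EqStdDevXP}) or the reuse of Lemma~\ref{LemDeltaP} in Subsection~\ref{SecLim}; that is the trade-off between your simplicity and the paper's uniform, reusable bounds. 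The one point you should make explicit is the justification of summation by parts on the doubly infinite series: the boundary terms vanish because $a_k^{(\alpha)}\to0$, and the series converges pointwise for $x\not\equiv x^*\ (\mathrm{mod}\ 2l)$ by Dirichlet's test for monotone null coefficients --- a gap of the same minor size as the one the paper itself leaves when passing from partial sums to the full series at the end of Appendix~\ref{SecLemSumCos}.
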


To prove this theorem, we will need two lemmas.
\begin{lemma}\label{LemDeltaP}
For an arbitrary function $\varphi(k)$ satisfying the conditions of Theorem~\ref{TheoPackOtrExist},

1) the mean momentum satisfies the relation
$$\overline p_\alpha=\frac{\pi}{l}\hbar \overline k\equiv \overline p;$$

2a) the momentum standard deviation can be estimated as

$$-\frac{1}{12}\left(\frac{\pi}{l}\hbar\right)^2
\left[1+\frac{2}{\alpha}\varphi_l(0)\right]
\leq\Delta_* p_\alpha^2-\widetilde\Delta p_\alpha^2\leq\frac{1}{6}\left(\frac{\pi}{l}\hbar\right)^2
\left[1+\frac{2}{\alpha}\varphi_l(0)\right],$$
where $\widetilde\Delta p_\alpha=(\frac{\pi}{l}\hbar\Delta q_\alpha)^2=(\alpha\hbar\Delta q)^2$.

2b) if the function $\varphi(k)$ is twice continuously differentiable,
$\varphi''(k)=O(1/k^2)$ as $k\to\pm\infty$, and $\varphi''(k)$ has a finite number of local extrema, then the following sharper
result is valid:
$$\Delta_* p_\alpha^2=\widetilde\Delta p_\alpha^2+
\left(\frac{\pi}{l}\hbar\right)^2\left[\frac{1}{12}+O\left(\frac{1}{\alpha^2}\right)
\right],\quad\alpha\to\infty.$$

\end{lemma}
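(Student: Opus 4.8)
\emph{Reduction and part~1.} The plan is to reduce everything to a single one-dimensional grouping (Sheppard-type) error. Write $g_\alpha(s)=\frac1\alpha\varphi_l(s/\alpha)$ for the density of $q-\overline k$; by (\ref{EqPhiL}) and the evenness of $\varphi$ this is an even, unimodal probability density with $\int g_\alpha\,ds=1$, $\int s\,g_\alpha\,ds=0$, and $\int s^2 g_\alpha\,ds=\Delta q_\alpha^2$. Setting $Q_m:=\int_{m-1/2}^{m+1/2}g_\alpha(s)\,ds$, the construction (\ref{EqAk}) gives $|a^{(\alpha)}_k|^2=Q_{k-\overline k}$, and evenness forces $Q_{-m}=Q_m$. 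This symmetry is the whole of part~1: since $\overline k$ is an integer,
$$\overline p_\alpha=\frac\pi l\hbar\sum_k k\,|a_k^{(\alpha)}|^2=\frac\pi l\hbar\sum_m(\overline k+m)Q_m=\frac\pi l\hbar\,\overline k,$$
because $\sum_m mQ_m=0$ (odd summand, even weights) and $\sum_m Q_m=1$.

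\emph{Reduction of the dispersion.} For part~2 I first separate the offset of $p^*$ from the mean. By the decomposition (\ref{EqDeltaStarStd}) together with part~1, $\Delta_*p_\alpha^2=\Delta p_\alpha^2+(\frac\pi l\hbar)^2(\overline k-k^*)^2$ with $|\overline k-k^*|\le\frac12$, so the estimates concern the standard deviation $\Delta p_\alpha^2=(\frac\pi l\hbar)^2\sum_m m^2Q_m$. Comparing with $\widetilde\Delta p_\alpha^2=(\frac\pi l\hbar)^2\Delta q_\alpha^2$, both 2a and 2b reduce to controlling the grouping error
$$E_\alpha:=\sum_m m^2Q_m-\int_{-\infty}^{+\infty}s^2 g_\alpha(s)\,ds=\sum_m\int_{-1/2}^{1/2}\bigl(-2mt-t^2\bigr)g_\alpha(m+t)\,dt,$$
obtained by writing $s=m+t$ on each bin. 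I split $E_\alpha=A+B$, where $A=-2\sum_m m\int_{-1/2}^{1/2}t\,g_\alpha(m+t)\,dt$ and $B=-\int_{-1/2}^{1/2}t^2\sum_m g_\alpha(m+t)\,dt$; asymptotically $A\to\frac16$ and $B\to-\frac1{12}$, so $E_\alpha\to\frac1{12}$.

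\emph{Part~2a (rough bounds, unimodality only).} Since $g_\alpha\ge0$ one has $B\le0$, while $B\ge-\frac1{12}\max_t\sum_m g_\alpha(m+t)$; a Riemann-sum estimate for a unimodal density (each monotone side contributes at most $\max g_\alpha$ beyond the integral) gives $\sum_m g_\alpha(m+t)\le 1+2\max g_\alpha=1+\frac2\alpha\varphi_l(0)$, hence $B\ge-\frac1{12}\bigl(1+\frac2\alpha\varphi_l(0)\bigr)$. For $A$, pairing $t$ with $-t$ on each bin and using that $g_\alpha$ is decreasing there (for $m\ge1$, so $g_\alpha(m+t)\le g_\alpha(m-t)$) shows $\int_{-1/2}^{1/2}t\,g_\alpha(m+t)\,dt\le0$ and therefore $A\ge0$; the same monotonicity, handled by Abel summation to absorb the factor $m$, bounds $A$ above by $\frac16\bigl(1+\frac2\alpha\varphi_l(0)\bigr)$. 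Combining the one-sided bounds yields exactly the inequalities of~2a.

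\emph{Part~2b (sharp remainder) and the main obstacle.} Writing $m=(m+t)-t$ and applying Poisson summation $\sum_m h(m+t)=\sum_n\widehat h(n)e^{2\pi int}$ to $h\in\{g_\alpha,\ s\,g_\alpha,\ s^2 g_\alpha\}$, then integrating over $t\in[-\tfrac12,\tfrac12]$ (using $\int t\,dt=0$, $\int t^2\,dt=\frac1{12}$), the $n=0$ modes reproduce $\Delta q_\alpha^2$, $0$, and $\frac1{12}$, giving $E_\alpha=\frac1{12}+(\text{Poisson remainder})$. Because $\widehat{g_\alpha}(n)=\widehat{\varphi_l}(\alpha n)$, the hypotheses $\varphi\in C^2$, $\varphi''=O(k^{-2})$ with finitely many extrema force $\widehat{\varphi_l}(\xi)=O(\xi^{-2})$, so the $n\neq0$ contributions are $O(\alpha^{-2})$. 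The crux of the argument—and where these regularity assumptions are genuinely consumed—is establishing the same $O(\alpha^{-2})$ decay for the transforms of $s\,g_\alpha$ and $s^2 g_\alpha$, since multiplication by $s$ and $s^2$ must not destroy the smoothness/decay needed for the Fourier estimates; this is the step I expect to require the most care. Granting it, $\Delta p_\alpha^2=\widetilde\Delta p_\alpha^2+(\frac\pi l\hbar)^2\bigl[\frac1{12}+O(\alpha^{-2})\bigr]$, which is~2b.
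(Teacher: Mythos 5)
Your part~1 and the lower half of part~2a are correct and essentially coincide with the paper's argument (the paper's per-bin inequality (\ref{EqLemDeltaPFromBelow}) plus the Riemann-sum estimate (\ref{EqLemSumInt}) is your ``$A\ge0$, $B\ge-\frac{1}{12}(1+\frac{2}{\alpha}\varphi_l(0))$'' in different clothing). The genuine gap is the upper bound of 2a: your intermediate claim $A\le\frac{1}{6}\bigl(1+\frac{2}{\alpha}\varphi_l(0)\bigr)$ is false, so no amount of care with Abel summation can rescue it. Counterexample: take $\varphi$ uniform and choose $\alpha$ so that $g_\alpha$ is uniform on $[-N,N]$ with an integer $N\ge3$ (admissible: even, non-increasing in $|s|$). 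Then $\int_{-1/2}^{1/2}t\,g_\alpha(m+t)\,dt=0$ for $1\le|m|\le N-1$ and equals $\mp\frac{1}{16N}$ for $m=\pm N$, so $A=\frac{1}{4}$ exactly, while your bound would require $A\le\frac{1}{6}(1+\frac{1}{N})<\frac{1}{4}$. The structural point is that the constant $\frac{1}{6}$ of the lemma is not a bound on $A$ alone: $A$ can equal $\frac{1}{4}$ even when $\max g_\alpha$ is arbitrarily small, and $\frac{1}{6}=\frac{1}{4}-\frac{1}{12}$ arises only from cancellation with the negative term $B$, which you discarded when you bounded $B\le0$. This is exactly how the paper proceeds: after the telescoping identity (\ref{EqLemSumKMaxMin}), the $-2mt$ part is bounded above by $\frac{1}{4}\cdot\frac{1}{\alpha}\sum_k\varphi_l\bigl(\frac{k+1/2}{\alpha}\bigr)$ and the $-t^2$ part by $-\frac{1}{12}$ times the \emph{same} sum, and only their combination, estimated via (\ref{EqLemSumKMaxMin2}), gives $\frac{1}{6}\bigl[1+\frac{2}{\alpha}\varphi_l(0)\bigr]$. (In the uniform example $A+B=\frac{1}{4}-\frac{1}{12}=\frac{1}{6}$ exactly, so the constant $\frac{1}{6}$ is sharp; a proof that ignores $B$ cannot reach it.)

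Part~2b is also incomplete, and by your own admission: the step you ``grant'' is the whole difficulty. The $n\neq0$ modes of $g_\alpha$ are indeed $O(\alpha^{-2})$ in total, but for $s\,g_\alpha$ you need $\sum_{n\neq0}|n|^{-1}\bigl|\widehat{(s\,g_\alpha)}(n)\bigr|=O(\alpha^{-2})$, and this does not follow from the stated hypotheses by integration by parts: writing $\widehat{(s\,g_\alpha)}(n)=\alpha\,\widehat{(u\varphi_l)}(\alpha n)$, the function $(u\varphi_l(u))''=2\varphi_l'(u)+u\varphi_l''(u)$ is only $O(1/|u|)$ at infinity under $\varphi''(k)=O(k^{-2})$, hence not integrable, so one only gets $\widehat{(u\varphi_l)}(\zeta)=O(\zeta^{-2})$, which yields a remainder $O(\alpha^{-1})$ --- insufficient for 2b. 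The paper avoids Fourier analysis entirely: it Taylor-expands $\varphi_l(q/\alpha)$ about $k/\alpha$ on each bin and controls the error sums $A_1,A_2,A_3$ by telescoping the variation of $\varphi_l''$ over its finitely many monotonicity intervals; that is precisely where the hypothesis that $\varphi''$ has finitely many extrema is consumed. To complete your route you would need either that bin-wise Taylor argument or a genuinely new derivation of the required Fourier decay (e.g.\ summation by parts in $n$), neither of which is sketched.
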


The proof of the lemma is given in Appendix~\ref{SecLemDeltaP}.

\begin{lemma}\label{LemSumCos}
Let $\{a_k\}_{k=0}^{\infty}$ be a nonzero monotonic square-summable (i.e., $\sum_{k=0}^{\infty}a_k^2<\infty$)
sequence of real numbers. Then the function
\begin{equation}\label{EqLemChi}
\chi(x)=\sum_{k=-\infty}^{+\infty}a_k\cos{kx}
\end{equation}
satisfies the estimate

\begin{equation}\label{EqLemChiBound}
|\chi(x)|\leq\frac{|a_0|}{|\sin\frac{x}{2}|}\end{equation}
for $x\neq2\pi n$, $n=0,\pm1,\pm2,\ldots$.
\end{lemma}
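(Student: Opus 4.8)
The plan is to prove the bound by summation by parts (Abel summation) against the Dirichlet kernel, exploiting that the sequence is monotone and tends to zero.

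First I would fix a convenient normalization. Since $\{a_k\}$ is monotonic and square-summable it necessarily tends to $0$, and replacing $a_k$ by $-a_k$ if needed (which changes neither $|\chi(x)|$ nor $|a_0|$) I may assume $a_k\ge 0$ with $a_k\downarrow 0$; then $a_0=|a_0|$ is the largest term and every difference $a_k-a_{k+1}$ is nonnegative. Reading the series with the even extension $a_{-k}=a_k$, I would rewrite $\chi(x)=a_0+2\sum_{k=1}^{\infty}a_k\cos kx$.

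Next I would bring in the Dirichlet kernel $D_k(x)=\sum_{j=-k}^{k}e^{ijx}=1+2\sum_{j=1}^{k}\cos jx=\frac{\sin((k+\frac12)x)}{\sin\frac x2}$, for which two elementary facts are decisive: the telescoping identity $2\cos kx=D_k(x)-D_{k-1}(x)$ (with $D_0=1$), and the uniform bound $|D_k(x)|\le 1/|\sin\frac x2|$ for $x\neq 2\pi n$. Substituting the first identity and performing summation by parts on $\sum_{k=1}^{N}a_k\bigl(D_k-D_{k-1}\bigr)$ gives $a_ND_N-a_1+\sum_{k=1}^{N-1}(a_k-a_{k+1})D_k$; since $a_N\to 0$ while $D_N$ stays bounded away from the excluded points, the boundary term vanishes as $N\to\infty$, yielding $\chi(x)=(a_0-a_1)+\sum_{k=1}^{\infty}(a_k-a_{k+1})D_k(x)$.

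Finally I would estimate. Because $a_k-a_{k+1}\ge 0$ and $\sum_{k=1}^{\infty}(a_k-a_{k+1})=a_1$ (telescoping, using $a_k\to 0$), the kernel bound gives $\bigl|\sum_{k\ge1}(a_k-a_{k+1})D_k\bigr|\le a_1/|\sin\frac x2|$; combining this with $a_0-a_1\le (a_0-a_1)/|\sin\frac x2|$ (legitimate since $|\sin\frac x2|\le 1$) collapses everything to $|\chi(x)|\le a_0/|\sin\frac x2|$, which is the claim. The one genuinely delicate point is justifying the passage to the limit in the summation by parts, namely that the boundary term $a_ND_N$ tends to zero and that the rearranged series converges; both follow at once from $a_N\to 0$ (a consequence of square-summability) together with the boundedness of $D_N$ for each fixed $x\neq2\pi n$.
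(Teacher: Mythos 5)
Your proof is correct, but it takes a genuinely different route from the paper's. You use the classical Abel-summation mechanism: writing $2\cos kx=D_k(x)-D_{k-1}(x)$ and summing by parts, you represent $\chi(x)=(a_0-a_1)+\sum_{k\geq1}(a_k-a_{k+1})D_k(x)$ as a combination of Dirichlet kernels with nonnegative weights of total mass $a_0$, and the uniform bound $|D_k(x)|\leq1/|\sin\frac{x}{2}|$ finishes the job; this simultaneously establishes pointwise convergence of the series for $x\neq2\pi n$ and the estimate. The paper argues quite differently, by an iterative extremal (redistribution) argument on the partial sums $S_K=\sum_{k=0}^{K}a_k\cos kx$: at each step a maximal block of equal consecutive coefficients is raised to the preceding value or lowered to the following one, according to the sign of that block's cosine sum, so that the partial sum can only increase (and, running the procedure from the other end, only decrease); after finitely many steps one reaches the extremal sequence $(a_0,\dots,a_0,0,0,\dots)$, for which the closed form $\sum_{n=0}^{K-1}\cos nx=\sin\frac{Kx}{2}\cos\frac{(K-1)x}{2}/\sin\frac{x}{2}$ gives the bound uniformly in $K$. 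Both proofs ultimately rest on the same kernel estimate, but yours is shorter, standard, and makes the convergence question transparent, while the paper's is self-contained and bounds every partial sum without any summation by parts. One further difference worth noting: you bound the symmetric sum $a_0+2\sum_{k\geq1}a_k\cos kx$, which is what formula (\ref{EqLemChi}) literally denotes, whereas the paper's proof in fact bounds the one-sided sums $\sum_{k=0}^{K}a_k\cos kx$, which is the form in which the lemma is invoked in (\ref{EqTheoPsiBound}) (whence the factor $3$ there). Your reading is the stronger one and implies what the application needs, so nothing is lost.
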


The proof of the lemma is given in Appendix~\ref{SecLemSumCos}.

\begin{proof}[Proof of Theorem~\ref{TheoPackOtrExist}]
First of all, notice that formula (\ref{EqAk}) and evenness of $\varphi(q)$ imply
\begin{equation}\label{EqTheoAkSym}
a^{(\alpha)}_{\overline k+k}=a^{(\alpha)}_{\overline k-k}
\end{equation}
for all $k$ and $\alpha$.

Let us prove estimate (\ref{EqStanddevXBound}). In view of (\ref{EqTheoAkSym}), we have
\begin{multline}\label{EqTheoPsiCos}\psi_\alpha(x)=
\frac{1}{\sqrt{2l}}\sum_{k=-\infty}^{+\infty}a^{(\alpha)}_k
\,e^{i\frac{\pi}{l}k(x-x^*)}=
\frac{1}{\sqrt{2l}}\sum_{k=-\infty}^{+\infty}a^{(\alpha)}_{\overline k+k}
\,e^{i\frac{\pi}{l}(\overline k+k)(x-x^*)}\\=
\frac{1}{\sqrt{2l}}\left[2\sum_{k=0}^{+\infty}a^{(\alpha)}_{\overline k+k}
\cos\left(\frac{\pi}{l}k(x-x^*)\right)-
a^{(\alpha)}_{\overline k}\right]e^{i\frac{\pi}{l}\overline k(x-x^*)}.\end{multline}
Then it follows form Lemma~\ref{LemSumCos} that
\begin{multline}\label{EqTheoPsiBound}
|\psi_\alpha(x)|\leq\frac{1}{\sqrt{2l}}
\frac{3|a^{(\alpha)}_{\overline k}|}{|\sin\frac{\pi(x-x^*)}{2l}|}=
\frac{3}{\sqrt{2l}}
\left[\frac{1}{\alpha}
\int_{\overline k-\frac{1}{2}}^{\overline k+\frac{1}{2}}
\varphi_l\left(\frac{q-\overline k}{\alpha}\right)\,dq\right]^{\frac{1}{2}}\frac{1}{|\sin\frac{\pi(x-x^*)}{2l}|}\leq\\\leq
3\sqrt{\frac{1}{2\alpha l}\varphi_l(0)}\,
\frac{1}{|\sin\frac{\pi(x-x^*)}{2l}|}=
3\sqrt{\frac{\pi}{2\alpha l^2}\varphi(0)}\,
\frac{1}{|\sin\frac{\pi(x-x^*)}{2l}|}\end{multline}
for any $x\in[-l,l]\backslash\{x^*\}$. Hence, by formula (\ref{EqStanddevX}),
$$\Delta_* x_\alpha^2\leq
\frac{9\pi\varphi(0)}{2\alpha l^2}
\int_{-l}^l\frac{(x-x^*)^2}{\sin^2\frac{\pi(x-x^*)}{2l}}\,dx=
\frac{9\pi\varphi(0)l}{2\alpha}
\int_{-1}^1\frac{(y-\frac{x^*}{l})^2}
{\sin^2\left(\frac{\pi}{2}(y-\frac{x^*}{l})\right)}\,dy.$$

Formula (\ref{EqStanddevPBound}) follows immediately from Lemma~\ref{LemDeltaP} and the third relation in (\ref{EqPhiAlphaRelations}).

Let us prove inequality (\ref{EqMeanXDev})
$$\overline x=\int_{-l}^lx|\psi(x)|^2\,dx=\int_{-l}^l(x-x^*)|\psi(x)|^2\,dx+x^*=
\int_{-l-x^*}^{l-x^*}x|\psi(x+x^*)|^2\,dx+x^*.$$
Let $x^*\geq0$. By virtue of (\ref{EqTheoPsiCos}) $|\psi(x^*+x)|=|\psi(x^*-x)|$ for any $x\in[l-x^*,l+x^*]$. Therefore,
$$\overline x-x^*=\int_{-l-x^*}^{-l+x^*}x|\psi(x+x^*)|^2\,dx.$$
On the one hand, the integral on the right-hand side is nonpositive because $x\leq0$ on the integration interval. On the other hand, using (\ref{EqTheoPsiBound}), we get
\begin{multline*}
\int_{-l-x^*}^{-l+x^*}x|\psi(x+x^*)|^2\,dx\geq
\frac{9\pi\varphi(0)}{2\alpha l^2}
\int_{-l-x^*}^{-l+x^*}\frac{x\,dx}{\sin^2\frac{\pi x}{2l}}\geq
\frac{9\pi\varphi(0)}{2\alpha l^2}\frac{1}{\sin^2\frac{\pi(l+x^*)}{2l}}
\int_{-l-x^*}^{-l+x^*}x\,dx\\\geq
-\frac{x^*}{\alpha l}\frac{18\pi\varphi(0)}{\cos^2\frac{\pi x^*}{2l}}.
\end{multline*}
Thus,
$$-\frac{x^*}{\alpha l}\frac{18\pi\varphi(0)}{\cos^2\frac{\pi x^*}{2l}}\leq \overline x-x^*\leq0.$$
Similarily, if $x^*\leq0$, we obtain
$$0\leq\overline x-x^*\leq\frac{x^*}{\alpha l}\frac{18\pi\varphi(0)}{\cos^2\frac{\pi x^*}{2l}}.$$
Estimate (\ref{EqMeanXDev}) is proved. Equality (\ref{EqMeanPEq}) is proved as assertion 1) of Lemma~\ref{LemDeltaP}.

This completes the proof of the theorem.
\end{proof}

\begin{corollary}
The following estimate holds for the wave functions $\psi_\alpha(x)$, $\alpha>0$, defined by formula (\ref{EqPsin}):
\begin{equation}\label{EqStdDevXP}\Delta x_\alpha^2\Delta p_\alpha\leq\frac{9}{2}\pi l\hbar\varphi(0)\Delta q\int_{-1}^1\frac{(y-\frac{x^*}{l})^2}
{\sin^2\left(\frac{\pi}{2}(y-\frac{x^*}{l})\right)}\,dy\,
\sqrt{1+\left(\frac{\pi}{l\alpha\Delta q}\right)^2\delta}.
\end{equation} 
Here
$\delta=\frac{1}{6}+\frac{1}{3\alpha}\varphi_l(0)$. If the conditions of assertion 2b) of Lemma~\ref{LemDeltaP} hold, then
$\delta=\frac{1}{12}+O(\alpha^{-2})$ as $\alpha\to\infty$.
\end{corollary}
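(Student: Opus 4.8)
The plan is to derive (\ref{EqStdDevXP}) by combining the position estimate (\ref{EqStanddevXBound}) of Theorem~\ref{TheoPackOtrExist} with the momentum estimate of assertion 2a) in Lemma~\ref{LemDeltaP}, after first reducing the genuine standard deviations to the starred second moments that those two results control.

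First I would invoke relation (\ref{EqDeltaStarStd}), which gives $\Delta x_\alpha^2=\Delta_* x_\alpha^2-(\overline x_\alpha-x^*)^2\le\Delta_* x_\alpha^2$ and likewise $\Delta p_\alpha^2\le\Delta_* p_\alpha^2$, since in each case the starred moment exceeds the dispersion by the nonnegative square of the mean deviation. Hence $\Delta x_\alpha^2\,\Delta p_\alpha\le\Delta_* x_\alpha^2\,\Delta_* p_\alpha$, and it suffices to bound the right-hand product; the inequality for $\Delta x_\alpha^2\,\Delta p_\alpha$ then follows a fortiori.

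Next I would bound $\Delta_* p_\alpha$. Using the upper estimate in assertion 2a) of Lemma~\ref{LemDeltaP} together with $\widetilde\Delta p_\alpha^2=(\alpha\hbar\Delta q)^2$ (which follows from the third relation in (\ref{EqPhiAlphaRelations})), one has
\begin{equation*}
\Delta_* p_\alpha^2\le(\alpha\hbar\Delta q)^2+\frac{1}{6}\Bigl(\frac{\pi}{l}\hbar\Bigr)^2\Bigl(1+\frac{2}{\alpha}\varphi_l(0)\Bigr).
\end{equation*}
Factoring out $(\alpha\hbar\Delta q)^2$ and observing that $\frac{1}{6}(1+\frac{2}{\alpha}\varphi_l(0))$ is exactly $\delta$, the second term collapses to $(\frac{\pi}{l\alpha\Delta q})^2\delta$, so that
\begin{equation*}
\Delta_* p_\alpha\le\alpha\hbar\Delta q\sqrt{1+\Bigl(\frac{\pi}{l\alpha\Delta q}\Bigr)^2\delta}.
\end{equation*}

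Finally I would multiply this inequality by the bound (\ref{EqStanddevXBound}) for $\Delta_* x_\alpha^2$; the prefactors combine cleanly, $\frac{9\pi\varphi(0)l}{2\alpha}\cdot\alpha\hbar\Delta q=\frac{9}{2}\pi l\hbar\varphi(0)\Delta q$, which reproduces (\ref{EqStdDevXP}) verbatim. The refined form of $\delta$ under the hypotheses of assertion 2b) is read off in precisely the same manner. The only mildly delicate point is the algebraic identification of $\delta$ inside the square root and the cancellation of the $\alpha$-factors in the prefactor; everything else is routine substitution, so I do not anticipate a genuine obstacle here.
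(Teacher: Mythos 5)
Your proof is correct and follows essentially the same route as the paper: both arguments pass from $\Delta x_\alpha,\Delta p_\alpha$ to the starred moments via (\ref{EqDeltaStarStd}), express $\widetilde\Delta p_\alpha^2=(\alpha\hbar\Delta q)^2$ through (\ref{EqPhiAlphaRelations}), bound $\Delta_* p_\alpha^2$ by $(\alpha\hbar\Delta q)^2+\left(\frac{\pi}{l}\hbar\right)^2\delta$ using Lemma~\ref{LemDeltaP}, and multiply by (\ref{EqStanddevXBound}). If anything, your write-up is slightly more careful than the paper's, which states the momentum relation with an equality sign where assertion 2a) really supplies only the upper bound $\delta=\frac{1}{6}+\frac{1}{3\alpha}\varphi_l(0)$ that you correctly use.
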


\begin{proof}
According to Lemma~\ref{LemDeltaP}
$$\Delta_* p_\alpha^2=\widetilde\Delta p_\alpha^2+\left(\frac{\pi}{l}\hbar\right)^2\delta.$$
In view of (\ref{EqPhiAlphaRelations}), we have
$$\widetilde\Delta p_\alpha^2=\left(\frac{\pi}{l}\hbar\Delta q_\alpha\right)^2=
\left(\hbar\alpha\Delta q\right)^2,$$ hence
\begin{equation*}
\Delta_* p_\alpha^2=
\left(\hbar\alpha\Delta q\right)^2+\left(\frac{\pi}{l}\hbar\right)^2\delta,\end{equation*}
Since $\Delta x_\alpha\leq\Delta_*x_\alpha$ and
$\Delta p_\alpha\leq\Delta_*p_\alpha$ due to (\ref{EqDeltaStarStd}), the required estimate (\ref{EqStdDevXP}) follows from the above relation and (\ref{EqStanddevXBound}).
\end{proof}

As $\alpha\to\infty$ (and since $\hbar$ is small), we can neglect the last factor (square root expression) on the right-hand side of (\ref{EqStdDevXP}), then we obtain
\begin{equation}\label{EqStdDevXPAppr}\Delta x_\alpha^2\Delta p_\alpha\lesssim\frac{9}{2}\pi l\hbar\varphi(0)\Delta
q\int_{-1}^1\frac{(y-\frac{x^*}{l})^2}
{\sin^2\left(\frac{\pi}{2}(y-\frac{x^*}{l})\right)}\,dy.\end{equation}

Let
$\varphi(k)=\frac{1}{\sqrt{2\pi}}e^{-\frac{k^2}{2}}$ (then
$\varphi(\overline k)=1/\sqrt{2\pi}$, $\Delta q=1$),
$l=100$~nm, $x^*=0$. Then, taking into account that
$\hbar\approx1.05\cdot10^{-34}$ J$\cdot$s and
$$\int_{-1}^1\frac{y^2}
{\sin^2\frac{\pi y}{2}}\,dy\approx1.12,$$ we see that formula (\ref{EqStdDevXPAppr})  can guarantee the condition $\Delta x_\alpha\lesssim0.1$~nm only when $\Delta
p_\alpha\sim10^{-20}$~kg$\cdot$m/s. Comparing this with analogous results obtained in the two previous subsections, where we used specific techniques for the Gaussian integral and the theta function, we see that the estimate obtained for $\Delta x_\alpha$
is rather rough. But it is still enough for nanoscale systems. Moreover, Theorem~\ref{TheoPackOtrExist} gives estimates for finite
$\alpha$, rather than only asymptotic estimates as $\alpha\to\infty$.

As above, we have $\Delta x_\alpha\to0$ and $\Delta p_\alpha\to\infty$ as $\alpha\to\infty$, whereas   $\Delta x_\alpha\to l/\sqrt 3$ and $\Delta p_\alpha\to0$ as $\alpha\to0$.

\section{Further problems}\label{SecFurther}

\subsection{Energy dispersion}\label{SecEnergy}

As we know, a classical particle is characterized not only by well-defined position $x^*$ and momentum $p^*$, but also a well-defined energy $E^*$. For a free particle, we have
$E^*=p^{*2}/2m.$ Hence, to associate a classical particle with a quantum wave packet, the latter must have small dispersions not only in the position ($\Delta_*x$) and momentum ($\Delta_*p$) but also in energy. The energy dispersion $\Delta_*E$ is defined similarly to (\ref{EqStanddevX}) and (\ref{EqStanddevP}):

\begin{equation}\label{EqStanddevE}\Delta_* E^2=\sum_{n=0}^{\infty}
(E_n-E^*)^2|b_n|^2.\end{equation}
Here $\{E_n\}_{n=0}^\infty$ are the energy eigenvalues and $\{b_n\}_{n=0}^\infty$ are the coefficients in the expansion of the
wave function $\psi$ in the energy eigenfunctions $\{\psi_n\}_{n=0}^\infty$:
$$\psi(x)=\sum_{n=0}^\infty b_n\psi_n.$$

For a particle on the line the energy dispersion is small whenever the momentum dispersion is small, because the momentum and energy operators commute and the energy is a function of momentum: 
\begin{equation}\label{EqHp}
\hat H=\frac{\hat p}{2m}.
\end{equation}
As we pointed out in Subsection~\ref{SecModelDescr}, in bounded domains this is not always the case. For the Hamiltonian  $\hat H_2$ (a particle on a circle), the momentum and energy operators still commute, relation (\ref{EqHp}) holds; therefore, the smallness of $\Delta_*p$ implies the smallness of $\Delta_*E$.

However, for the Hamiltonian $\hat H_1$ (a particle in the infinite well), the position and momentum operators do not commute. Counterintuitively, relation (\ref{EqHp}) does not hold. Therefore, the energy dispersion should be analysed separately.

Let us expand the wave function $\psi(x)$ in the position representation in terms of the momentum eigenfunctions and in terms of the energy eigenfunctions:
\begin{align}\label{EqImpSer}
\psi(x)&=\frac{1}{\sqrt{2l}}\sum_{k=-\infty}^{+\infty}a_k\,e^{i\frac{\pi}{l}kx}\\\label{EqEnSer}&=
\frac{1}{\sqrt l}\sum_{n=1}^\infty b_n\sin\left(\frac{\pi n}{2l}(x-l)\right).
\end{align}
Then, using the expression
$$\frac{1}{\sqrt2l}
\int_{-l}^l\sin\left(\frac{\pi n}{2l}(x-l)\right)\,e^{i\frac{\pi}{l}kx}\,dx=\begin{cases}
\pm(-1)^{\frac{n}{2}}\frac{i}{\sqrt2}&\text{for even $n$ and $k=\pm\frac{n}{2}$,}\\
0&\text{for even $n$ and $k\neq\pm\frac{n}{2}$,}\\
\frac{(-1)^k}{\sqrt2\pi}\frac{n}{k^2-(\frac{n}{2})^2}&\text{for odd $n$}\end{cases}$$
we express the coefficients $\{b_n\}$ in terms of the coefficients $\{a_k\}$:
\begin{equation*}
b_n=\begin{cases}(-1)^{\frac{n}{2}}\frac{i}{\sqrt2}\,(a_{\frac{n}{2}}-a_{-\frac{n}{2}})
&\text{for even $n$},\\
\frac{n}{\sqrt2\pi}\sum\limits_{k=-\infty}^{+\infty}
\frac{(-1)^ka_k}{k^2-(\frac{n}{2})^2}
&\text{for odd $n$.}\end{cases}
\end{equation*}

\begin{proposition}
If the momentum dispersion of a quantum particle in the infinite well in a state $\psi\in L_2(-l,l)$ is finite, then $\psi(l)=\psi(-l)$ (i.e., the boundary condition included in the domain of the operator $\hat p$ is satisfied).
\end{proposition}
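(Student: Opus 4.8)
The plan is to convert finiteness of the momentum dispersion into a summability statement about the Fourier coefficients $a_k$ in the expansion \eqref{EqPsiImpSer}, and then to show that this statement forces absolute, hence uniform, convergence of that expansion. Since every term $e^{i\frac{\pi}{l}kx}$ already takes equal values at $x=\pm l$, uniform convergence will transfer this coincidence to $\psi$ itself, giving the claimed boundary condition.

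First I would record, starting from \eqref{EqStddevX} with $p_k=\frac{\pi}{l}\hbar k$, that finiteness of $\Delta p$ is equivalent to $\sum_k k^2|a_k|^2<\infty$. Indeed, expanding $(p_k-\overline p)^2$, the constant part contributes $(\overline p)^2\sum_k|a_k|^2=(\overline p)^2<\infty$, and the linear part $\sum_k k|a_k|^2$ is dominated, by Cauchy--Schwarz, by $(\sum_k|a_k|^2)^{1/2}(\sum_k k^2|a_k|^2)^{1/2}$; hence only the quadratic term $\sum_k k^2|a_k|^2$ can cause divergence.

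The key step is a second application of Cauchy--Schwarz, now to pass from square-summability of $(ka_k)$ to absolute summability of $(a_k)$. Writing $|a_k|=\frac{1}{|k|}\,|ka_k|$ for $k\neq0$ gives
\[
\sum_{k\neq0}|a_k|\leq\left(\sum_{k\neq0}\frac{1}{k^2}\right)^{1/2}\left(\sum_{k\neq0}k^2|a_k|^2\right)^{1/2}<\infty,
\]
because $\sum_{k\neq0}k^{-2}$ converges. Therefore $\sum_k|a_k|<\infty$, so by the Weierstrass $M$-test the series \eqref{EqPsiImpSer} converges absolutely and uniformly on $[-l,l]$; its sum is a continuous function, which I take as the representative of $\psi$, so that the endpoint values $\psi(\pm l)$ are well defined.

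Finally, since $e^{i\frac{\pi}{l}k(-l)}=e^{-i\pi k}=(-1)^k=e^{i\pi k}=e^{i\frac{\pi}{l}kl}$ for every $k$, each partial sum of \eqref{EqPsiImpSer} takes the same value at $-l$ and at $l$; passing to the uniform limit yields $\psi(-l)=\psi(l)$, as asserted. The one delicate point is precisely the upgrade from $\ell_2$-summability of $(ka_k)$ to $\ell_1$-summability of $(a_k)$, which rests on the convergence of $\sum_{k\neq 0}k^{-2}$; this one-dimensional feature is exactly what makes it legitimate to evaluate the otherwise merely $L_2$ function $\psi$ at the boundary.
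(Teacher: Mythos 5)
Your proposal is correct and takes essentially the same route as the paper: finiteness of the momentum dispersion gives $\sum_k k^2|a_k|^2<\infty$, this yields $\sum_k|a_k|<\infty$, hence absolute and uniform convergence of the expansion (\ref{EqPsiImpSer}), and evaluating at $x=\pm l$ gives the boundary condition. The only difference is cosmetic: you pass from square-summability of $(ka_k)$ to summability of $(a_k)$ by Cauchy--Schwarz against $\sum_{k\neq0}k^{-2}$, whereas the paper's Lemma~\ref{LemSerSqrt} does the same step via the arithmetic--geometric mean inequality $\frac{|c_k|}{|k|}\leq\frac{1}{2}\left(c_k^2+\frac{1}{k^2}\right)$ with $c_k=ka_k$.
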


\begin{proof}

The convergence of series (\ref{EqStanddevP}) implies the convergence of the series

\begin{equation*}
\sum_{k=-\infty}^{+\infty}k^2|a_k|^2<\infty.
\end{equation*}
By Lemma~\ref{LemSerSqrt} (see below), the convergence of this series implies the convergence of the series $\sum_{k=-\infty}^{+\infty}|a_k|$. Then, according to Weierstrass criterion, the series in the Fourier expansions (\ref{EqImpSer}) converge to the function $\psi$ not only in the mean square sense but also absolutely and uniformly. Then, substituting the values $x=\pm l$ into (\ref{EqImpSer}), we obtain the required boundary condition.
\end{proof}

\begin{proposition}
If the energy dispersion of a quantum particle in the infinite well in a state $\psi\in L_2(-l,l)$ is finite, then $\psi(l)=\psi(-l)=0$ (i.e., the boundary conditions included in the domain of the operator $\hat H_1$ is satisfied).
\end{proposition}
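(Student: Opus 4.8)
The plan is to mirror the structure of the preceding proposition but to work with the energy eigenfunctions rather than the momentum ones. The decisive observation is that every energy eigenfunction $\psi_n^{(1)}(x)=\frac{1}{\sqrt l}\sin\left(\frac{\pi n}{2l}(x-l)\right)$ already vanishes at both endpoints $x=\pm l$ (indeed $\sin(0)=0$ at $x=l$ and $\sin(-\pi n)=0$ at $x=-l$). Hence, once I establish that the expansion $\psi=\sum_n b_n\psi_n^{(1)}$ converges uniformly, I may evaluate it termwise at $x=\pm l$ and conclude immediately that $\psi(\pm l)=0$. So the entire proof reduces to promoting the $L_2$-convergence of the energy series to uniform convergence.

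First I would translate the finiteness of the energy dispersion into a decay condition on the coefficients $\{b_n\}$. Since $E^*$ is fixed while $E_n=\frac{\hbar^2}{2m}\left(\frac{\pi n}{2l}\right)^2\to\infty$, convergence of $\Delta_* E^2=\sum_n(E_n-E^*)^2|b_n|^2$ is equivalent to $\sum_n E_n^2|b_n|^2<\infty$, i.e. to $\sum_n n^4|b_n|^2<\infty$. Next I would upgrade this to absolute summability of the coefficients by the Cauchy--Schwarz inequality (this is the $n^4$-analogue of Lemma~\ref{LemSerSqrt}):
\[
\sum_{n=1}^\infty|b_n|=\sum_{n=1}^\infty\bigl(n^2|b_n|\bigr)\,n^{-2}\leq\left(\sum_{n=1}^\infty n^4|b_n|^2\right)^{1/2}\left(\sum_{n=1}^\infty n^{-4}\right)^{1/2}<\infty,
\]
where the second factor converges since $\sum n^{-4}<\infty$.

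Because $|\psi_n^{(1)}(x)|\leq 1/\sqrt l$ for all $x$ and $n$, the pointwise bound $|b_n\psi_n^{(1)}(x)|\leq|b_n|/\sqrt l$ together with $\sum_n|b_n|<\infty$ lets me invoke the Weierstrass $M$-test, so the series $\sum_n b_n\psi_n^{(1)}$ converges absolutely and uniformly on $[-l,l]$. Uniform convergence then guarantees that the sum is continuous and may be evaluated termwise at the endpoints; since each $\psi_n^{(1)}(\pm l)=0$, I obtain $\psi(l)=\psi(-l)=0$, which is precisely the boundary condition defining $D(\hat H_1)$.

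I do not expect a genuine obstacle here: the only delicate point is the passage from mean-square to uniform convergence, and the quartic decay $n^4|b_n|^2\to 0$ makes the Cauchy--Schwarz estimate even more comfortable than in the momentum case of the previous proposition (where only the quadratic weight $n^2$ was available). The argument is essentially identical in spirit, with the vanishing of the eigenfunctions at the endpoints doing the final work in place of the endpoint evaluation of the exponentials $e^{i\frac\pi l kx}$.
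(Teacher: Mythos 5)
Your proof is correct and follows essentially the same route as the paper: deduce $\sum_n n^4|b_n|^2<\infty$ from the finite energy dispersion, upgrade this to $\sum_n|b_n|<\infty$, invoke the Weierstrass criterion to get absolute and uniform convergence of the energy eigenfunction expansion, and evaluate at $x=\pm l$ where every eigenfunction vanishes. The only (immaterial) difference is that you obtain $\sum_n|b_n|<\infty$ via Cauchy--Schwarz, whereas the paper cites its Lemma~\ref{LemSerSqrt}, which rests on the arithmetic--geometric mean inequality.
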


\begin{proof}
The proof is similar. The convergence of series (\ref{EqStanddevE}) implies the convergence of the series

\begin{equation*}
\sum_{n=1}^\infty n^4|b_n|^2<\infty.
\end{equation*}
Again, by Lemma~\ref{LemSerSqrt}, the convergence of this series  implies the convergence of the series $\sum_{n=1}^{\infty}|b_n|$. The series in the Fourier expansions (\ref{EqEnSer}) converge to the function $\psi$ absolutely and uniformly. Then, substituting the values $x=\pm l$ into (\ref{EqEnSer}), we obtain the required boundary conditions.
\end{proof}

\begin{lemma}\label{LemSerSqrt}
Let $\sum_{n=1}^{\infty}c_n^2$ be a convergent number series. Then the series
\begin{equation}\label{EqSerSqrt}
\sum_{n=1}^{\infty}\frac{|c_n|}{n}\end{equation}
is also convergent.
\end{lemma}
\begin{proof}
Indeed, since the geometric mean of two numbers is no greater than the arithmetic
mean of these numbers, it follows that
$$\sum_{n=1}^{\infty}\frac{|c_n|}{n}\leq \frac{1}{2}\sum_{n=1}^{\infty}c_n^2
+\frac{1}{2}\sum_{n=1}^{\infty}\frac{1}{n^2}.$$
Both series on the right-hand side converge.
\end{proof}

In the next subsection we will see that these propositions are particular cases of a general relation between the finiteness of dispersion and the domain of the corresponding self-adjoint operator.

We can see that all three families of quantum wave packets satisfy the boundary condition $\psi_\alpha(l)=\psi_\alpha(-l)$, but the only squeezed states given by (\ref{EqGaussCut}) satisfy the condition $\psi_\alpha(\pm l)=0$. Since the theta function $\theta(x,\tau)$ (see (\ref{EqTheta})) has not zeros with real $x$ and real $\tau\neq0$, the condition $\psi_\alpha(\pm l)=0$ cannot be satisfied by squeezed states given by (\ref{EqPsiTheta}). The functions $\psi_\alpha$ given by (\ref{EqPsin}) also cannot satisfy this condition for all $x^*$ and $p^*$.

Hence, the quantum wave packets states $\psi_\alpha$ given by formulae (\ref{EqPsiTheta}) or (\ref{EqPsin})  correspond to infinite energy dispersion and, by this reason, are not satisfactory for the infinite well. But one can suggest the following their improvement. Consider a family states $\psi^{(2l)}_{x^*,p^*,\alpha}$, $x^*\in(-2l,2l)$, $p^*\in\mathbb R$, $\alpha>0$, given by formulae (\ref{EqPsiTheta}) or (\ref{EqPsin}) for even $\varphi(q)$ with $l$ replaced by $2l$. Other words, this is a family of quantum wave packets for the interval $[-2l,2l]$. Now consider the family of states on $L_2(-l,l)$ given by the formula
$$\Psi_{x^*,p^*,\alpha}(x)=\psi_{x^*+l,p^*,\alpha}^{(2l)}(x+l)-\psi_{-x^*-l,-p^*,\alpha}^{(2l)}(-x-l).$$
From formulae (\ref{EqPsiTheta}) or (\ref{EqPsin}) for even $\varphi(q)$, it can be shown that $\Psi_{x^*,p^*,\alpha}(\pm l)=0$. Hence, these states have finite energy dispersion (see the next subsection) and can be used as squeezed states in the infinite well.

\subsection{Domain of a self-adjoint operator and finiteness of dispersion of the physical quantity}\label{SecDisp}

One may notice that whether the dispersion of a physical quantity for a certain state is finite or infinite depends on whether or not this state belongs to the domain of the operator of this physical quantity. Let us show that the following general fact is true: the dispersion of an arbitrary physical quantity (self-adjoint operator) $\hat A$ in a certain state is finite if and only if this state belongs to the domain of $\hat A$. Namely, we prove the following theorem.

\begin{theorem}\label{ThDisp}
Let $A$ be a self-adjoint operator in some Hilbert space $\mathcal H$ and let a quantum system be in a state $\psi\in\mathcal H$. Then the physical quantity corresponding to the operator $A$ has a finite dispersion if and only if $\psi$ belongs to the domain of $A$.
\end{theorem}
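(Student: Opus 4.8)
The plan is to reduce the whole statement to the spectral theorem for self-adjoint operators together with the Born rule, which identifies the physically measured distribution of the quantity $A$ with the spectral measure of the state. First I would invoke the spectral theorem (as in \cite{ReedSimon}): since $A$ is self-adjoint, there is a unique projection-valued measure $E(\cdot)$ on the Borel subsets of $\mathbb R$ with $A=\int_{\mathbb R}\lambda\,dE(\lambda)$. For the given unit vector $\psi$ I would then introduce the scalar Borel measure
$$\mu_\psi(\Omega)=\langle\psi,E(\Omega)\psi\rangle=\|E(\Omega)\psi\|^2,\qquad \Omega\subseteq\mathbb R\ \text{Borel}.$$
This is a probability measure (because $E(\mathbb R)=I$ and $\|\psi\|=1$), and, by the standard postulate of quantum mechanics, it is exactly the distribution of the values obtained when the quantity $A$ is measured in the state $\psi$. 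The crucial point is that $\mu_\psi$ is well defined for \emph{every} $\psi\in\mathcal H$, irrespective of whether $\psi\in D(A)$; this is why the dispersion must be defined through $\mu_\psi$ rather than through $\|A\psi-\overline A\psi\|$, which presupposes $\psi\in D(A)$ (compare the remark following (\ref{EqStanddevP})).

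Next I would make precise that the dispersion is the variance of $\mu_\psi$, namely
$$\Delta A^2=\int_{\mathbb R}(\lambda-\overline A)^2\,d\mu_\psi(\lambda),\qquad \overline A=\int_{\mathbb R}\lambda\,d\mu_\psi(\lambda),$$
and then show that its finiteness is equivalent to finiteness of the second moment $m_2=\int_{\mathbb R}\lambda^2\,d\mu_\psi(\lambda)$. Indeed, since $\mu_\psi$ is a probability measure, the elementary bound $\lambda^2\le 2(\lambda-c)^2+2c^2$ shows that finiteness of $\int_{\mathbb R}(\lambda-c)^2\,d\mu_\psi$ for any fixed $c$ forces $m_2<\infty$; conversely, $m_2<\infty$ gives, by the Cauchy–Schwarz inequality for the probability measure $\mu_\psi$, the bound $\int_{\mathbb R}|\lambda|\,d\mu_\psi\le m_2^{1/2}<\infty$, so that $\overline A$ exists and $\Delta A^2=m_2-\overline A^2<\infty$. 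Hence finiteness of the dispersion is the same condition as $m_2<\infty$, and this remains true whether the second moment is taken about the mean $\overline A$ or about a prescribed value such as $p^*$ in (\ref{EqStanddevP}).

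Finally, I would appeal once more to the spectral theorem, which characterizes the domain of $A$ precisely as the set of vectors with finite second spectral moment:
$$\psi\in D(A)\iff\int_{\mathbb R}\lambda^2\,d\|E(\lambda)\psi\|^2<\infty,$$
and in that case $\|A\psi\|^2=\int_{\mathbb R}\lambda^2\,d\|E(\lambda)\psi\|^2$. Since $d\|E(\lambda)\psi\|^2=d\mu_\psi(\lambda)$ by the very definition of $\mu_\psi$, this integral is exactly $m_2$. Combining the two equivalences yields $\psi\in D(A)\iff m_2<\infty\iff \Delta A^2<\infty$, which is the assertion; the propositions of Subsection~\ref{SecEnergy} are then recovered as the special cases $A=\hat p$ and $A=\hat H_1$.

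I expect the only genuinely delicate step to be the second paragraph: the substance of the theorem is not any hard estimate but the correct \emph{definition} of the dispersion for states lying outside $D(A)$. Once one insists that the dispersion is the variance of the measurable distribution $\mu_\psi$ (which is always meaningful, possibly $+\infty$), and verifies that its finiteness coincides with $m_2<\infty$, the domain characterization furnished by the spectral theorem closes the argument immediately. The conceptual work is therefore in recognizing that the Born-rule distribution, and not the naive expression $\|A\psi\|^2$, is the object whose moments define the physical dispersion.
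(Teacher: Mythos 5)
Your proposal is correct and takes essentially the same route as the paper: both invoke the spectral theorem, identify the dispersion with the variance of the spectral (Born) measure $\mu_\psi(\Omega)=(\psi,E(\Omega)\psi)$, and conclude via the characterization $D(A)=\{\psi:\int_{-\infty}^{+\infty}\lambda^2\,d\mu_\psi(\lambda)<\infty\}$. The only difference is that you spell out the step the paper dismisses as ``obvious'' --- the equivalence between finiteness of the central second moment and of the raw second moment, via $\lambda^2\le 2(\lambda-c)^2+2c^2$ and Cauchy--Schwarz to guarantee the mean exists --- which is a sound but minor elaboration of the same argument.
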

\begin{proof}
Represent $A$ by the spectral decomposition \cite{ReedSimon1}:
$$A=\int_{-\infty}^{+\infty}\lambda\,dP_\lambda,$$
where $dP_\lambda$ is a projector-valued measure. The domain of $A$ can be expressed as
$$D(A)=\{\psi|\int_{-\infty}^{+\infty}\lambda^2\,d(\psi,P_\lambda\psi)<\infty\}.$$
The dispersion of the observable corresponding to $A$ for an arbitrary state $\psi$ is
$$\Delta A=\int_{-\infty}^{+\infty}\lambda^2\,d(\psi,P_\lambda\psi)-\overline A^2,$$
where
$$\overline A=\int_{-\infty}^{+\infty}\lambda\,d(\psi,P_\lambda\psi).$$
Obviously, the condition $\psi\in D(A)$ implies $\Delta A<\infty$ and vice versa.
\end{proof}

Physically, the established relation is not  obvious: the ``mathematical'' questions concerning the domains of self-adjoint operators are often omitted in physical literature on quantum mechanics. Here we establish the relation between the finiteness of the dispersion of some physical quantity and the domain of the corresponding self-adjoint operator. This relation can be regarded as a physical meaning of the domain of a self-adjoint operator. 

\subsection{The limit of large interval length and the semiclassical limit}\label{SecLim}
Let us pass to the limit as $l\to\infty$. We will follow the general construction of quantum wave packets given in Subsection~\ref{SecArbitrary}. According to formulae
 (\ref{EqAk}) and (\ref{EqPsin})

$$a_k^{(l)}=
\left[\int_{k-\frac{1}{2}}^
{k+\frac{1}{2}} \varphi_l(q)\,dq\right]^{\frac{1}{2}},$$
$$\psi_l(x)=
\frac{1}{\sqrt{2l}}\sum_{k=-\infty}^{+\infty}a^{(l)}_k\,e^{i\frac{\pi}{l}k(x-x^*)}.$$
Here we supplement $a_k^{(l)}$ and $\psi_l$ with the index $l$ rather than $\alpha$ (which was used before) because now $\alpha$ is a fixed parameter while $l$ varies. Without loss of generality, we assume that $\alpha=1$ because a fixed parameter $\alpha$ can be included in the function $\varphi$.

\begin{theorem}
$$\lim_{l\to\infty}\psi_l(x)=\psi(x)\equiv
\frac{1}{\sqrt{2\pi}}\int_{-\infty}^{+\infty}\sqrt{\varphi(q)}\,
e^{iq(x-x^*)}\,dq.$$
The limit is understood in the pointwise sense.
\end{theorem}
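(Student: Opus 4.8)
The plan is to recognise $\psi_l(x)$ as a Riemann sum of mesh $\pi/l$ for the integral defining $\psi(x)$, after first extracting the normalisation exactly. Set $\Delta q=\pi/l$ and $q_k=\frac{\pi}{l}k$. Changing the variable to $u=\frac{\pi}{l}q$ in the definition of the coefficient gives
\[
a_k^{(l)}=\left[\int_{k-\frac12}^{k+\frac12}\frac{\pi}{l}\,\varphi\!\left(\tfrac{\pi}{l}q\right)dq\right]^{1/2}=\left[\int_{q_k-\frac{\Delta q}{2}}^{q_k+\frac{\Delta q}{2}}\varphi(u)\,du\right]^{1/2}=\sqrt{\Delta q}\,\sqrt{\overline{\varphi}_k},
\]
where $\overline{\varphi}_k=\frac{1}{\Delta q}\int_{q_k-\Delta q/2}^{q_k+\Delta q/2}\varphi(u)\,du$ is the mean of $\varphi$ over the $k$-th cell. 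Substituting this into $\psi_l$ and using the identity $\frac{1}{\sqrt{2l}}\sqrt{\Delta q}=\frac{1}{\sqrt{2\pi}}\Delta q$, valid for every $l$, yields the exact expression
\[
\psi_l(x)=\frac{1}{\sqrt{2\pi}}\,\Delta q\sum_{k=-\infty}^{+\infty}\sqrt{\overline{\varphi}_k}\;e^{iq_k(x-x^*)}.
\]
The key point is that the whole $l$-dependent prefactor collapses precisely to the constant $1/\sqrt{2\pi}$ of the target, leaving literally a Riemann sum for $\int\sqrt{\varphi(q)}\,e^{iq(x-x^*)}\,dq$.

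Then I compare this sum with the integral in two steps, via the intermediate sum $S_l=\Delta q\sum_k\sqrt{\varphi(q_k)}\,e^{iq_k(x-x^*)}$ built from the exact nodal values of $\sqrt{\varphi}$. The replacement of $\sqrt{\overline{\varphi}_k}$ by $\sqrt{\varphi(q_k)}$ is controlled by the elementary bound $|\sqrt a-\sqrt b|\le\sqrt{|a-b|}$ for $a,b\ge0$, so that $|\sqrt{\overline{\varphi}_k}-\sqrt{\varphi(q_k)}|\le\sqrt{|\overline{\varphi}_k-\varphi(q_k)|}$; continuity of $\varphi$ forces each cell oscillation $|\overline{\varphi}_k-\varphi(q_k)|$ to $0$ as $\Delta q\to0$. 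The remaining convergence $S_l\to\int\sqrt{\varphi}\,e^{iq(x-x^*)}\,dq$ is the standard passage from Riemann sums to the integral of the continuous integrand $\sqrt{\varphi(q)}\,e^{iq(x-x^*)}$.

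The main obstacle is that the lattice is infinite and, as $l\to\infty$, the number of nodes carrying appreciable weight grows like $l$, so convergence on any bounded window $|q_k|\le R$ is not sufficient: the tails $|q_k|>R$ must be controlled uniformly in $l$. Here I would use the monotonicity of $\varphi$ assumed in Theorem~\ref{TheoPackOtrExist}, which gives $\sqrt{\overline{\varphi}_k}\le\sqrt{\varphi(q_k-\Delta q/2)}$ in the tails and hence dominates the tail sums $\Delta q\sum_{|q_k|>R}\sqrt{\overline{\varphi}_k}$ by the corresponding tail integrals $\int_{|q|>R}\sqrt{\varphi}$ uniformly in $l$; these tend to $0$ as $R\to\infty$ exactly when $\sqrt{\varphi}\in L_1(\mathbb R)$, which is the one hypothesis beyond those stated that is genuinely needed for the pointwise integral defining $\psi(x)$ to converge (and which holds in all the concrete examples, in particular for the Gaussian $\varphi$). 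Splitting the error into the finite window, where uniform continuity of $\varphi$ kills the oscillations and the Riemann sum converges, and the tail, bounded by the above domination, and then letting $l\to\infty$ followed by $R\to\infty$, completes the proof.
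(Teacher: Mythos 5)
Your proof is correct, and its skeleton is the same as the paper's: rewrite $\psi_l(x)$ exactly as a Riemann sum of mesh $\pi/l$ for the target integral, prove convergence on a finite window of the $q$-axis, and control the tails uniformly in $l$ so the two limits can be taken in succession. Where you genuinely diverge from the paper is the tail-control mechanism. The paper bounds the square-root terms $\bigl\{\tfrac1l\int\varphi\bigr\}^{1/2}$ by the arithmetic--geometric mean inequality (as in Lemma~\ref{LemSerSqrt}), splitting them into a discrete second-moment term --- whose series converges uniformly in $l$ by assertion (2a) of Lemma~\ref{LemDeltaP} --- plus an explicitly summable term, and deduces that the sums over $|k|\ge K_0l$ are uniformly small; you instead use the monotonicity of $\varphi$ from Theorem~\ref{TheoPackOtrExist} to dominate each tail term $\tfrac{\pi}{l}\sqrt{\overline{\varphi}_k}$ by the integral of $\sqrt{\varphi}$ over a shifted cell, so the tail sums are bounded by tail integrals of $\sqrt{\varphi}$ uniformly in $l$. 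Your comparison on the finite window is also slightly different: the intermediate nodal sum $S_l$ together with $|\sqrt a-\sqrt b|\le\sqrt{|a-b|}$ replaces the paper's appeal to the mean value theorem for integrals; both variants require continuity of $\varphi$, which the paper assumes (its mean-value step needs it, and $\varphi$ is described as continuous). Your route is arguably the more elementary and self-contained one; the paper's route reuses machinery it has already built for Theorem~\ref{TheoPackOtrExist}. (A cosmetic point: your $\Delta q$ for the mesh collides with the paper's $\Delta q$ for the second moment of $\varphi$; rename one of them.)

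One substantive correction to your commentary: integrability of $\sqrt{\varphi}$ is \emph{not} a hypothesis beyond those stated. At infinity, $\sqrt{\varphi(q)}\le\tfrac12\bigl(q^2\varphi(q)+q^{-2}\bigr)$ --- this is exactly the inequality the paper invokes when establishing (\ref{Eql2}) --- so the finite second moment already controls the tails of $\int\sqrt{\varphi}$; on any bounded set, the Cauchy--Schwarz inequality gives $\int_{-R}^{R}\sqrt{\varphi}\le\sqrt{2R}\,\bigl(\int_{-R}^{R}\varphi\bigr)^{1/2}<\infty$. Hence $\sqrt{\varphi}\in L_1(\mathbb R)$ is automatic for any density with finite second moment, the integral defining $\psi(x)$ converges absolutely under the stated assumptions, and your proof closes without enlarging the theorem's hypotheses.
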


Thus, as $l\to\infty$, the quantum state on the interval constructed by means of a ``discretized'' momentum distribution tends to a state on the real line with the corresponding continuous momentum distribution.

\begin{proof}Firstly, we give a heuristic proof.  By (\ref{EqPhiL}), we have
\begin{equation*}\begin{split}
\psi_l(x)&=
\frac{1}{\sqrt{2l}}\sum_{k=-\infty}^{+\infty}\left\lbrace
\int_{\frac{\pi}{l}(k-\frac{1}{2})}^{\frac{\pi}{l}(k+\frac{1}{2})}
\varphi(q)\,dq\right\rbrace^{\frac{1}{2}}\,e^{i\frac{\pi}{l}k(x-x^*)}\\&=
\frac{1}{\sqrt{2\pi}}
\sum_{k=-\infty}^{+\infty}\sqrt{\varphi(\varkappa^{(l)}_k)}\,\frac{\pi}{l}
\,e^{i\frac{\pi}{l}k(x-x^*)}\\&\to
\frac{1}{\sqrt{2\pi}}\int_{-\infty}^{+\infty}\sqrt{\varphi(q)}\,
e^{iq(x-x^*)}\,dq,
\end{split}\end{equation*}
where $\varkappa_k^{(l)}\in[\frac{\pi}{l}(k-\frac{1}{2}),\frac{\pi}{l}(k+\frac{1}{2})]$. 

But, of course, during this calculations, we interchanged the limits:
\begin{multline*}\lim_{l\to\infty}
\sum_{k=-\infty}^{+\infty}\sqrt{\varphi(\varkappa^{(l)}_k)}\,\frac{\pi}{l}
\,e^{i\frac{\pi}{l}k(x-x^*)}=\lim_{l\to\infty}\lim_{K\to\infty}
\sum_{k=-K}^{K}\sqrt{\varphi(\varkappa^{(l)}_k)}\,\frac{\pi}{l}
\,e^{i\frac{\pi}{l}k(x-x^*)}\\=
\lim_{K\to\infty}\lim_{l\to\infty}
\sum_{k=-K}^{K}\sqrt{\varphi(\varkappa^{(l)}_k)}\,\frac{\pi}{l}
\,e^{i\frac{\pi}{l}k(x-x^*)}.
\end{multline*}
If this interchanging the limits is legitimate, we can perform the passage to the limit as $l\to\infty$ and obtain
$$\lim_{K\to\infty}\lim_{l\to\infty}
\sum_{k=-K}^{K}\sqrt{\varphi(\varkappa^{(l)}_k)}\frac{\pi}{l}
e^{i\frac{\pi}{l}k(x-x^*)}=
\lim_{K\to\infty}\int_{-K}^{K}\sqrt{\varphi(q)}
e^{iq(x-x^*)}dq=\int_{-\infty}^{+\infty}\sqrt{\varphi(q)}
e^{iq(x-x^*)}dq.$$

To justify the interchanging the limits, it suffices to show that the following double limit exists:
\begin{equation}\label{EqLimLK}
\lim_{
\begin{smallmatrix}
l\to\infty\\
K\to\infty
\end{smallmatrix}
}
\frac{1}{\sqrt{2l}}\sum_{k=-Kl}^{Kl}\left\lbrace
\int_{\frac{\pi}{l}(k-\frac{1}{2})}^{\frac{\pi}{l}(k+\frac{1}{2})}
\varphi(q)\,dq\right\rbrace^{\frac{1}{2}}e^{i\frac{\pi}{l}k(x-x^*)}=
\frac{1}{\sqrt{2\pi}}\int_{-\infty}^{+\infty}\sqrt{\varphi(q)}
e^{iq(x-x^*)}dq.
\end{equation}
Let us prove this. By condition, density $\varphi(k)$ has a finite second moment, i.e.,
$\int\limits_{-\infty}^{+\infty}q^2\varphi(q)dq<\infty$. According to assertion
(2a) of Lemma~\ref{LemDeltaP}, this implies that the series
$$\sum_{k=-\infty}^{+\infty}\frac{k^2}{l^2}\int_{\frac{\pi}{l}(k-\frac{1}{2})}^{\frac{\pi}{l}(k+\frac{1}{2})}
\varphi(q)dq$$
converges uniformly in $l\in[l_0,\infty)$, where $l_0>0$ is arbitrary. Then, in the same way as in the proof of Lemma~\ref{LemSerSqrt}, we conclude that the series
$$\sum_{k=-\infty}^{+\infty}\left\lbrace\frac{1}{l}
\int_{\frac{\pi}{l}(k-\frac{1}{2})}^{\frac{\pi}{l}(k+\frac{1}{2})}
\varphi(q)\,dq\right\rbrace^{\frac{1}{2}}$$
as well as the series 
$$\quad
\sum_{k=-\infty}^{+\infty}\left\lbrace\frac{1}{l}
\int_{\frac{\pi}{l}(k-\frac{1}{2})}^{\frac{\pi}{l}(k+\frac{1}{2})}
\varphi(q)\,dq\right\rbrace^{\frac{1}{2}}e^{i\frac\pi lk(x-x^*)}$$
converge uniformly in $l\in[l_0,\infty)$. Then, for any
$\varepsilon>0$, there exists a number $K_0$ such that
\begin{equation}\label{Eql1}
\frac{1}{\sqrt{l}}\left(\sum_{k=-\infty}^{-K_0l}+\sum_{k=K_0l}^{+\infty}\right)
\left\lbrace
\int_{\frac{\pi}{l}(k-\frac{1}{2})}^{\frac{\pi}{l}(k+\frac{1}{2})}
\varphi(q)\,dq\right\rbrace^{\frac{1}{2}}e^{i\frac\pi lk(x-x^*)}<\varepsilon
\end{equation}
for any $l\in[l_0,\infty)$. Let us require that $K_0$ be so large that
\begin{equation}\label{Eql2}\frac{1}{\sqrt{2\pi}}\left(\int_{-\infty}^{-K_0}+\int_{K_0}^{+\infty}\right)
\sqrt{\varphi(q)}\,dq<\varepsilon
\end{equation}
(the integral of $\sqrt{\varphi(q)}$ converges at infinity because
$\sqrt{\varphi(q)}\leq\frac{1}{2}(q^2\varphi(q)+\frac{1}{q^2})$ and the integrals of both functions on the right-hand side converge at infinity).

For a fixed $K_0$ we have
\begin{equation*}
\lim_{l\to\infty}\frac{1}{\sqrt{2l}}\sum_{k=-K_0l}^{K_0l}
\left\lbrace
\int_{\frac{\pi}{l}(k-\frac{1}{2})}^{\frac{\pi}{l}(k+\frac{1}{2})}
\varphi(q)dq\right\rbrace^{\frac{1}{2}}e^{i\frac\pi lk(x-x^*)}
=\frac{1}{\sqrt{2\pi}}\int_{-K_0}^{K_0}\sqrt{\varphi(q)}dq
e^{i\frac\pi lk(x-x^*)}.
\end{equation*}
In this relation, the passage to the limit is legitimate because here we deal with ordinary integral sums on a finite interval. Other words, for any $\varepsilon>0$, there exists $L$ such that
\begin{equation}\label{Eql3}
\left|\frac{1}{\sqrt{2l}}\sum_{k=-K_0l}^{K_0l}
\left\lbrace
\int_{\frac{\pi}{l}(k-\frac{1}{2})}^{\frac{\pi}{l}(k+\frac{1}{2})}
\varphi(q)dq\right\rbrace^{\frac{1}{2}}e^{i\frac\pi lk(x-x^*)}
-\frac{1}{\sqrt{2\pi}}\int_{-K_0}^{K_0}\sqrt{\varphi(q)}e^{i\frac\pi lk(x-x^*)}dq\right|<
\varepsilon.
\end{equation}
for all $l>L$.

From (\ref{Eql1}), (\ref{Eql2}), and (\ref{Eql3}), we conclude that, for any $\varepsilon>0$, there exist  $K_0$ and $L$ such that we have
$$\left|
\frac{1}{\sqrt{2l}}\sum_{k=-Kl}^{Kl}\left\lbrace
\int_{\frac{\pi}{l}(k-\frac{1}{2})}^{\frac{\pi}{l}(k+\frac{1}{2})}
\varphi(q)\,dq\right\rbrace^{\frac{1}{2}}e^{i\frac{\pi}{l}k(x-x^*)}-
\frac{1}{\sqrt{2\pi}}\int_{-\infty}^{+\infty}\sqrt{\varphi(q)}
e^{iq(x-x^*)}dq\right|<3\varepsilon
$$
for all $K>K_0$ and $l>L$. We get formula (\ref{EqLimLK}) which completes the proof of the theorem. \end{proof}

Thus, we obtain a wave packet on the line. In particular, if $\varphi(k)=\frac{1}{\sqrt{2\pi}}e^{-\frac{k^2}{2}}$, then the limit wave packet on the line is Gaussian.

Similar arguments, with slight modifications, can be carried over to squeezed states constructed
by means of the theta function (see Subsection~\ref{SecPsiTheta}). For squeezed states given by truncated Gaussian functions (see Subsection~\ref{SecPsiGauss}), the assertion of the theorem is obvious by construction. Thus, the last property mentioned in the statement of the problem (see the end of Subsection~\ref{SecProblemFormulation})
also holds for the constructed states.

Now, let $\hbar\to0$. Simultaneously, let $\alpha\to\infty$ so that $\hbar\alpha\to0$. Then the formulae proved above (see Theorems~\ref{TheoPsiGauss}, \ref{TheoPsiTheta}, and \ref{TheoPackOtrExist}) for all of three families of wave packets imply that $\overline x_{\alpha}\to x^*$, $\overline p_{\alpha}\to p^*$, $\Delta x_{\alpha}\to0$, $\Delta p_{\alpha}\to0$; i.e., in the semiclassical limit we obtain a point-like particle with prescribed position and momentum.

\section{Conclusions}
We constructed a family of squeezed quantum states on an interval based on the theta function, a family of such states based on truncated Gaussian functions, and a family of quantum wave packets based on the discretization of an arbitrary continuous momentum probability distribution. Estimates on position and momentum dispersion was obtained.

By means of these states, we showed that proper localization of quantum particles in nanoscale space domains is possible. Namely, we saw that, on an interval of order 100~nm, there exist wave packets with a standard deviation of the position of order 0.1~nm and a standard deviation of the momentum of order  $10^{-24}$~kg$\cdot$ m/s. Also the constructed states have finite energy dispersions.

As a supplementary general result, we showed that an arbitrary physical quantity has a finite dispersion if and only if the wave function of a quantum system belongs to the domain of the corresponding self-adjoint operator. This can be regarded as a physical meaning of the domain of a self-adjoint operator.

A continuation of this work is performed in \cite{TrVol-Izv}, where we consider the dynamics of the constructed states (those based on the theta function) on a circle and in the infinite well.

\section*{Acknowledgements}
The authors are grateful for useful remarks and discussions to B.\,L.~Voronov, S.\,Yu.~Dobrokhotov, V.\,I.~Man'ko, A.\,G.~Sergeev, O.\,G.~Smolyanov, A.\,D.~Sukhanov, and E.\,I.~Zelenov. This work was partially supported by the Russian Foundation for Basic Research (project 11-01-00828-a), the Russian Federation's President Programme for the Support of Leading Scientific Schools (project NSh-2928.2012.1), and the Programme of the Division of Mathematics of the Russian Academy of Sciences.

\bigskip\smallskip
\noindent {\Large\textbf{\begin{center}Appendix\end{center}}}

\appendix

\numberwithin{equation}{section}
\numberwithin{lemma}{section}

\section{Asymptotic relations for Gaussian integrals}\label{SecAsympGauss}

The estimate
\begin{equation}\label{EqErfcAsimptMain}
\int_x^{\infty}e^{-\gamma t^2}\,dt=O\left(\frac{e^{-\gamma x^2}}{x}\right),\quad x\to
\infty.
\end{equation}
is well-known.

\begin{lemma}
The following asymptotic formula holds:
\begin{equation}\label{EqErfcDerivAsimptMain}
\int_x^{\infty}t^2e^{-\gamma t^2}\,dt=O(xe^{-\gamma x^2}),\quad
x\to \infty,
\end{equation}
\end{lemma}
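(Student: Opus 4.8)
The plan is to reduce the integral to the Gaussian tail already estimated in (\ref{EqErfcAsimptMain}) by a single integration by parts. The key observation is that $te^{-\gamma t^2}$ is, up to a constant, an exact derivative, since $\frac{d}{dt}e^{-\gamma t^2}=-2\gamma t\,e^{-\gamma t^2}$. Accordingly I would write $t^2e^{-\gamma t^2}=t\cdot(te^{-\gamma t^2})$ and integrate by parts with $u=t$ and $dv=te^{-\gamma t^2}\,dt$, so that $du=dt$ and $v=-\frac{1}{2\gamma}e^{-\gamma t^2}$. This yields
\[
\int_x^{\infty}t^2e^{-\gamma t^2}\,dt=\frac{x}{2\gamma}e^{-\gamma x^2}+\frac{1}{2\gamma}\int_x^{\infty}e^{-\gamma t^2}\,dt,
\]
where the boundary term at $+\infty$ vanishes because $te^{-\gamma t^2}\to0$, and the boundary term at the lower limit produces $\frac{x}{2\gamma}e^{-\gamma x^2}$.

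The next step is to estimate the two resulting terms separately as $x\to\infty$. The first term $\frac{x}{2\gamma}e^{-\gamma x^2}$ is manifestly $O(xe^{-\gamma x^2})$. For the second, I would invoke the already-established estimate (\ref{EqErfcAsimptMain}), which gives $\int_x^{\infty}e^{-\gamma t^2}\,dt=O(e^{-\gamma x^2}/x)$; multiplying by the fixed constant $1/(2\gamma)$ preserves this order. Since $e^{-\gamma x^2}/x=o(xe^{-\gamma x^2})$ as $x\to\infty$, the second contribution is dominated by the first, and the sum is therefore $O(xe^{-\gamma x^2})$, which is exactly the claimed asymptotic formula (\ref{EqErfcDerivAsimptMain}).

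There is no genuine obstacle here: the argument is a one-line integration by parts followed by an appeal to the known Gaussian-tail bound. The only points requiring momentary care are the vanishing of the boundary term at infinity and correctly tracking the constant $\gamma$, neither of which is delicate. The underlying device — lowering the power of $t$ by one unit so as to land on the already-estimated integral — is what drives the proof, and the same recursion would in principle handle $\int_x^{\infty}t^{2m}e^{-\gamma t^2}\,dt$ for any $m$ should a higher-order analogue ever be needed.
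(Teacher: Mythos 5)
Your proof is correct, but it reaches the key identity by a different device than the paper. You integrate by parts (with $u=t$, $dv=te^{-\gamma t^2}\,dt$, $v=-\frac{1}{2\gamma}e^{-\gamma t^2}$) to obtain
\[
\int_x^{\infty}t^2e^{-\gamma t^2}\,dt=\frac{x}{2\gamma}\,e^{-\gamma x^2}+\frac{1}{2\gamma}\int_x^{\infty}e^{-\gamma t^2}\,dt,
\]
whereas the paper differentiates the Gaussian tail $\Phi(\sqrt\gamma\,x)=\int_{\sqrt\gamma\,x}^{\infty}e^{-t^2}\,dt=\sqrt\gamma\int_x^{\infty}e^{-\gamma t^2}\,dt$ with respect to the parameter $\gamma$ at $\gamma=1$, arriving at the identity (\ref{EqErfcDerivAsimptLem}), namely $\int_x^\infty t^2e^{-t^2}\,dt=\frac{1}{2}[\Phi(x)-x\Phi'(x)]$; after substituting $\Phi'(x)=-e^{-x^2}$ this is exactly your identity specialized to $\gamma=1$. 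Both arguments then conclude in the same way, using the tail bound (\ref{EqErfcAsimptMain}) for the integral term and the explicit boundary term $\sim x e^{-\gamma x^2}$ for the dominant contribution. What your route buys: it is more elementary (no differentiation under the integral sign to justify), it carries a general $\gamma$ through the whole computation rather than working at $\gamma=1$ and leaving the rescaling to the reader, and, as you observe, it recurses to all even moments $\int_x^\infty t^{2m}e^{-\gamma t^2}\,dt$. What the paper's route buys: repeated differentiation of $\Phi(\sqrt\gamma\,x)$ in $\gamma$ produces those same higher moments with no boundary-term bookkeeping at all. There is no gap in either argument; they are computationally equivalent, differing only in how the reduction to (\ref{EqErfcAsimptMain}) is manufactured.
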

\begin{proof}
Let us differentiate the function
$$\Phi(\sqrt\gamma x)=\int_{\sqrt\gamma\, x}^{\infty}e^{-
t^2}\,dt= \sqrt{\gamma}\int_x^{\infty}e^{-\gamma
t^2}\,dt$$ 
with respect to the parameter $\gamma$ at the point $\gamma=1$. On the one hand,
$$
\left.\frac{\partial\Phi(\sqrt\gamma
x)}{\partial\gamma}\right|_{\gamma=1}=\frac{1}{2}\int_x^\infty
e^{-t^2}\,dt-\int_x^\infty t^2e^{-t^2}\,dt.$$ 
On the other hand,
$$\left.\frac{\partial\Phi(\sqrt\gamma
x)}{\partial\gamma}\right|_{\gamma=1}=\frac{x}{2}\,\Phi'(x).$$
Hence, 
\begin{equation}\label{EqErfcDerivAsimptLem}\int_x^\infty
t^2e^{-t^2}\,dt=\frac{1}{2}[\Phi(x)-x\Phi'(x)].
\end{equation}
The required asymptotic formula follows from the fact that $\Phi(x)=O(\frac{e^{-x^2}}{x})$ and $\Phi'(x)=-e^{-x^2}$.
\end{proof}

\section{Asymptotic formulae related to the theta function}\label{SecTheta}

Let us adopt the following (convenient for us) definition of the theta function:
\begin{equation}\label{EqTheta}
\theta(x,\tau)=\sum_{k=-\infty}^{+\infty}e^{-\pi\tau k^2+2\pi ikx},
\end{equation}
where $x$ and $\tau$ are complex numbers with $\Re\tau>0$.

Using the modular property (the Jacobi identity) for the theta function \cite{Karatsuba,Mam}

\begin{equation}\label{EqThetaModular}
\theta\left(\frac{x}{i\tau},\frac{1}{\tau}\right)=\sqrt\tau e^{\frac{\pi x^2}{\tau}}\theta(x,\tau).
\end{equation}
one can prove a number of useful estimates.

\begin{lemma}\label{LemTheta}
The following asymptotic relations hold for an arbitrary real $x$ and for $|a|<\frac{1}{2}$ as $\tau\to0$:
\begin{align}\label{EqThetaVal}
&\theta(x,\tau)=\frac{1}{\sqrt\tau}e^{-\frac{\pi d(x)^2}{\tau}}+
O\left(\frac{1}{\sqrt\tau}e^{-\frac{\pi(1-d(x))^2}{\tau}}\right),\\
\label{EqThetaStdDevK}
&\sum_{k=-\infty}^{+\infty}k^2e^{-\pi\tau k^2}=\frac{1}{2\pi\tau^{3/2}}
+O\left(\frac{1}{2\pi\tau^{3/2}}e^{-\frac{\pi}{\tau}}\right),\\
\label{EqThetaMeanX}
&\int_{-\frac{1}{2}-a}^{\frac{1}{2}-a} x\,|\theta(x,\tau)|^2\,dx=
O\left(e^{-\frac{2\pi}{\tau}(\frac{1}{2}-|a|)^2}\right),\\
\label{EqThetaStdDevX}
&\int_{-\frac{1}{2}-a}^{\frac{1}{2}-a} x^2|\theta(x,\tau)|^2\,dx=\frac{1}{4\pi}\sqrt\frac{\tau}{2}
+O\left(e^{-\frac{2\pi}{\tau}(\frac{1}{2}-|a|)^2}\right).
\end{align}
Here $0\leq d(x)\leq\frac{1}{2}$ is the distance on the real line from $x$ to the nearest integer.
\end{lemma}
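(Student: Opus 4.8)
The plan is to reduce all four estimates to a single dual representation of the theta function obtained from the Jacobi identity (\ref{EqThetaModular}). Solving that identity for $\theta(x,\tau)$ and expanding the right-hand theta by its definition (\ref{EqTheta}), one completes the square in the exponent to obtain, for real $x$ and $\tau>0$,
$$\theta(x,\tau)=\frac{1}{\sqrt\tau}\sum_{k=-\infty}^{+\infty}e^{-\frac{\pi}{\tau}(k-x)^2}.$$
This Gaussian-sum form is the workhorse for everything: as $\tau\to0$ each summand is a sharply peaked Gaussian centred at the integer $k$, so on any bounded region only the indices with $k$ near $x$ survive. To prove (\ref{EqThetaVal}) I would isolate the term whose index is the integer nearest to $x$; its exponent is $-\pi d(x)^2/\tau$, giving the main term $\tau^{-1/2}e^{-\pi d(x)^2/\tau}$. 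Every remaining index lies at distance at least $1-d(x)$ from $x$, so the tail is dominated by a convergent geometric-type series whose largest term is $\tau^{-1/2}e^{-\pi(1-d(x))^2/\tau}$, which yields the stated remainder.

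For (\ref{EqThetaStdDevK}) I would use $\sum_k k^2 e^{-\pi\tau k^2}=-\frac1\pi\frac{d}{d\tau}\theta(0,\tau)$, or equivalently apply Poisson summation directly to $f(y)=y^2 e^{-\pi\tau y^2}$. Either route isolates the leading coefficient: the zeroth Fourier mode contributes exactly $\frac{1}{2\pi\tau^{3/2}}$, while the nonzero modes each carry a factor $e^{-\pi n^2/\tau}$ and are therefore exponentially small, which gives the remainder in (\ref{EqThetaStdDevK}) (the polynomial-in-$\tau^{-1}$ factors produced by differentiation are absorbed by the exponential and are immaterial for the applications).

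The integral estimates (\ref{EqThetaMeanX}) and (\ref{EqThetaStdDevX}) follow from the same representation, which for real $x$ and $\tau>0$ exhibits $\theta$ as a sum of positive Gaussians, hence real, so $|\theta|^2=\theta^2$. On the interval $[-\frac12-a,\frac12-a]$ the central peak sits at the index $k=0$, so the leading behaviour of $\theta(x,\tau)^2$ is $\tau^{-1}e^{-2\pi x^2/\tau}$, where the factor $2$ comes from squaring. Extending this leading Gaussian to the whole line, the first-moment integral vanishes by oddness (so there is no main term in (\ref{EqThetaMeanX})), while the second-moment integral evaluates to $\frac{1}{4\pi}\sqrt{\tau/2}$, which is precisely the main term in (\ref{EqThetaStdDevX}). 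The remainders then arise from two sources: truncating the full-line Gaussian back to the finite interval, and the contributions of the neighbouring peaks $k=\pm1$ together with their cross terms.

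The hard part will be the bookkeeping in this last step: one must verify that \emph{every} discarded piece is $O\!\left(e^{-\frac{2\pi}{\tau}(\frac12-|a|)^2}\right)$. The relevant quantity is the distance from the central peak to the nearer endpoint of the interval, namely $\frac12-|a|$, and the factor $2$ in the exponent is again inherited from squaring $\theta$. Checking that the tail truncation, the off-central peaks, and the cross terms are all dominated by this single exponential — rather than some of them decaying more slowly or accumulating a larger prefactor — is the only genuinely delicate point; by contrast, once the Gaussian-sum representation is in hand the first three estimates are essentially immediate.
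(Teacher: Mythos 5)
Your proposal is correct and follows essentially the same route as the paper: the paper likewise uses the Jacobi identity to obtain the Gaussian-sum representation $\theta(x,\tau)=\tau^{-1/2}\sum_k e^{-\pi(k-x)^2/\tau}$, isolates the nearest-integer term to get (\ref{EqThetaVal}), proves (\ref{EqThetaStdDevK}) by differentiating $\theta(0,\tau)$ in $\tau$ and applying the modular transformation, and derives the two moment estimates by squaring the pointwise estimate (so the central Gaussian $\tau^{-1}e^{-2\pi d(x)^2/\tau}$ gives the main terms, with oddness killing the first moment and the Gaussian-tail asymptotics of Appendix~A controlling the truncation and off-central/cross contributions by exactly the exponential $e^{-\frac{2\pi}{\tau}(\frac12-|a|)^2}$ you identify).
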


\begin{proof}
Using the modular property (\ref{EqThetaModular}), we obtain
$$\theta(x,\tau)=\frac{1}{\sqrt\tau} e^{-\frac{\pi x^2}{\tau}}\theta\left(\frac{x}{i\tau},\frac{1}{\tau}\right)=
\frac{1}{\sqrt\tau}
\sum_{k=-\infty}^{+\infty}e^{-\frac{\pi (k-x)^2}{\tau}}=
\frac{1}{\sqrt\tau}e^{-\frac{\pi d(x)^2}{\tau}}+
O\left(\frac{1}{\sqrt\tau}e^{-\frac{\pi(1-d(x))^2}{\tau}}\right).$$
Estimate (\ref{EqThetaVal}) is proved.
\begin{multline*}
\sum_{k=-\infty}^{+\infty}k^2e^{-\pi\tau k^2}=
-\frac{1}{\pi}\frac{\partial\theta(0,\tau)}{\partial\tau}=
-\frac{1}{\pi}\frac{\partial}{\partial\tau}\left[\frac{1}{\sqrt\tau}
\,\theta\left(0,\frac{1}{\tau}\right)\right]\\=
\frac{1}{2\pi\tau^{\frac{3}{2}}}
\sum_{k=-\infty}^{+\infty}e^{-\frac{\pi k^2}{\tau}}-
\frac{1}{\tau^{\frac{5}{2}}}
\sum_{k=-\infty}^{+\infty}k^2e^{-\frac{\pi k^2}{\tau}}=
\frac{1}{2\pi\tau^{3/2}}
+O\left(\frac{1}{2\pi\tau^{3/2}}e^{-\frac{\pi}{\tau}}\right).\end{multline*}
Estimate (\ref{EqThetaStdDevK}) is proved.

To prove estimate (\ref{EqThetaMeanX}), we use the asymptotic formula (\ref{EqErfcAsimptMain}) for the Gaussian integral and formula (\ref{EqThetaVal}), which is already proved. Then we have
$$
\int_{-\frac{1}{2}-a}^{\frac{1}{2}-a} x\,|\theta(x,\tau)|^2\,dx=
O\left(\frac{1}{\tau}\int_{-\frac{1}{2}-a}^{\frac{1}{2}-a}
xe^{-\frac{2\pi d(x)^2}{\tau}}\,dx\right)$$ 
Let $a\geq0$. Then
$d(x)=|x|$ for $|x|\leq\frac{1}{2}$ and $d(x)=x+1$ for
$x\leq-\frac{1}{2}$. Therefore, in view of (\ref{EqErfcAsimptMain}),
$$\frac{1}{\tau}\int_{-\frac{1}{2}-a}^{\frac{1}{2}-a}
xe^{-\frac{2\pi d(x)^2}{\tau}}\,dx=
\frac{1}{\tau}\int_{-\frac{1}{2}-a}^{-\frac{1}{2}} xe^{-\frac{2\pi
(x+1)^2}{\tau}}\,dx+
\frac{1}{\tau}\int_{-\frac{1}{2}}^{\frac{1}{2}-a} xe^{-\frac{2\pi
x^2}{\tau}}\,dx=O(e^{-\frac{2\pi}{\tau}(\frac{1}{2}-a)^2}),$$
Similarily, if $a\leq0$, then
$$\frac{1}{\tau}\int_{-\frac{1}{2}-a}^{\frac{1}{2}-a}
xe^{-\frac{2\pi d(x)^2}{\tau}}\,dx=
O(e^{-\frac{2\pi}{\tau}(\frac{1}{2}+a)^2}).$$ 
Thus, for an arbitrary $a$ we obtain
$$\int_{-\frac{1}{2}-a}^{\frac{1}{2}-a} x\,|\theta(x,\tau)|^2\,dx=
O(e^{-\frac{2\pi}{\tau}(\frac{1}{2}-|a|)^2}).$$

To prove (\ref{EqThetaStdDevX}), we use asymptotic formulae (\ref{EqErfcDerivAsimptMain}) and (\ref{EqThetaVal}). We have

\begin{multline*}
\int_{-\frac{1}{2}-a}^{\frac{1}{2}-a} x^2|\theta(x,\tau)|^2\,dx=
\frac{1}{\tau}\int_{-\frac{1}{2}-a}^{\frac{1}{2}-a}
x^2e^{-\frac{2\pi d(x)^2}{\tau}}\,dx+ \frac{1}{\tau}\,O\left(
\int_{-\frac{1}{2}-a}^{\frac{1}{2}-a} x^2e^{-\frac{\pi}{\tau}
(d(x)^2+(1-d(x))^2)}\,dx\right)\\=
\frac{1}{\tau}\int_{-\frac{1}{2}-a}^{\frac{1}{2}-a}
x^2e^{-\frac{2\pi d(x)^2}{\tau}}\,dx+ \frac{1}{\tau}\,O\left(
\int_{-\frac{1}{2}-a}^{\frac{1}{2}-a} x^2e^{-\frac{2\pi}{\tau}
(d(x)-\frac{1}{2})^2}\,dx\,e^{-\frac{\pi}{2\tau}}\right).
\end{multline*}
Let $a>0$. Then, in view of (\ref{EqErfcAsimptMain}) and
(\ref{EqErfcDerivAsimptMain}),
\begin{multline*}\frac{1}{\tau}\int_{-\frac{1}{2}-a}^{\frac{1}{2}-a}
x^2e^{-\frac{2\pi d(x)^2}{\tau}}\,dx=
\frac{1}{\tau}\int_{-\frac{1}{2}-a}^{-\frac{1}{2}}
x^2e^{-\frac{2\pi (x+1)^2}{\tau}}\,dx+
\frac{1}{\tau}\int_{-\frac{1}{2}}^{\frac{1}{2}-a}
x^2e^{-\frac{2\pi
x^2}{\tau}}\,dx\\=\frac{1}{4\pi}\sqrt{\frac{\tau}{2}}+
O\left(e^{-\frac{2\pi}{\tau}(\frac{1}{2}-a)^2}\right),
\end{multline*}

\begin{multline*}
\frac{1}{\tau}\int_{-\frac{1}{2}-a}^{\frac{1}{2}-a}
x^2e^{-\frac{2\pi}{\tau}
(d(x)-\frac{1}{2})^2}\,dx\,e^{-\frac{\pi}{2\tau}}\\=
\frac{e^{-\frac{\pi}{2\tau}}}{\tau}\left[
\int_{-\frac{1}{2}-a}^{-\frac{1}{2}}x^2e^{-\frac{2\pi}{\tau}(\frac{1}{2}+x)^2}\,dx+
\int_{-\frac{1}{2}}^{\frac{1}{2}-a}x^2e^{-\frac{2\pi}{\tau}(\frac{1}{2}-x)^2}\,dx\right]=
O\left(e^{-\frac{\pi}{2\tau}}\right),
\end{multline*}
$$\int_{-\frac{1}{2}-a}^{\frac{1}{2}-a} x^2|\theta(x,\tau)|^2\,dx=
\frac{1}{4\pi}\sqrt{\frac{\tau}{2}} +
O\left(e^{-\frac{2\pi}{\tau}(\frac{1}{2}-a)^2}\right).$$
Similarily, for the case of an arbitrary $a$ we have
$$\int_{-\frac{1}{2}-a}^{\frac{1}{2}-a} x^2|\theta(x,\tau)|^2\,dx=
\frac{1}{4\pi}\sqrt{\frac{\tau}{2}} +
O\left(e^{-\frac{2\pi}{\tau}(\frac{1}{2}-|a|)^2}\right).$$
Formula (\ref{EqThetaStdDevX}), as well as the whole lemma, is proved.
\end{proof}

\section{Proof of Lemma \ref{LemDeltaP}.}\label{SecLemDeltaP}

1. By formulae (\ref{EqMeanX}) and (\ref{EqTheoAkSym}) we have
$$\overline p_\alpha=\hbar\sum_{k=-\infty}^{+\infty}p_k|a^{(\alpha)}_k|^2=
\frac{\pi}{l}\hbar\sum_{k=-\infty}^{+\infty}(\overline k+k)|a^{(\alpha)}_{\overline k+k}|^2=
\frac{\pi}{l}\hbar \overline k\sum_{k=-\infty}^{+\infty}|a^{(\alpha)}_k|^2=\frac{\pi}{l}\hbar \overline k.$$
Assertion (1) of the lemma is proved.

2. Let us prove assertion (2a). Without loss of generality, we assume that $p^*=0$. Then

\begin{align*}\Delta_* p_\alpha^2=\sum_{k=-\infty}^{+\infty}
p_k^2|a^{(\alpha)}_k|^2&=
\frac{1}{\alpha}\sum_{k=-\infty}^{+\infty}\int_{k-\frac{1}{2}}^{k+\frac{1}{2}}
\left(\frac{\pi}{l}\hbar k\right)^2\varphi_l\left(\frac{q}{\alpha}\right)\,dq,\\
\widetilde\Delta p_\alpha^2=
\int_{-\infty}^{+\infty}\left(\frac{\pi}{l}\hbar q\right)^2\varphi_{\alpha l}(q)\,
dq&=\frac{1}{\alpha}\sum_{k=-\infty}^{+\infty}\int_{k-\frac{1}{2}}^{k+\frac{1}{2}}
\left(\frac{\pi}{l}\hbar q\right)^2\varphi_l\left(\frac{q}{\alpha}\right)\,dq.\end{align*}

Let us prove the inequality
\begin{equation}\label{EqLemDeltaPFromBelow}
\int_{k-\frac{1}{2}}^{k+\frac{1}{2}}
(k^2-q^2)\,\varphi_l\left(\frac{q}{\alpha}\right)\,dq\geq\varphi_l\left(\frac{k}{\alpha}\right)\int_{k-\frac{1}{2}}^{k+\frac{1}{2}}
(k^2-q^2)\,dq\end{equation}
for an arbitrary integer $k$. Let, for definiteness, $k>0$ (the case of $k<0$ is considered analogously; for $k=0$, the inequality is obvious because this is a maximum point of the function $\varphi_l$). Then the function  $\varphi_l(\frac{q}{\alpha})$ decreases on the interval $q\in[k-\frac{1}{2},k+\frac{1}{2}]$.
On the interval $q\in[k-\frac{1}{2},k]$, we have $k^2-q^2\geq0$ and
$\varphi_l(\frac{\pi}{l}\frac{q}{\alpha})\geq\varphi_l(\frac{\pi}{l}\frac{ k}{\alpha})$; therefore,

$$\int_{k-\frac{1}{2}}^k
(k^2-q^2)\,\varphi_l\left(\frac{q}{\alpha}\right)\,dq\geq\varphi_l\left(\frac{k}{\alpha}\right)\int_{k-\frac{1}{2}}^k
(k^2-q^2)\,dq.$$
On the interval $[k,k+\frac{1}{2}]$, contrarily, $k^2-q^2\leq0$ and
$\varphi_l(\frac{q}{\alpha})\leq\varphi_l(\frac{ k}{\alpha})$; therefore, again we have
$$\int_k^{k+\frac{1}{2}}
(k^2-q^2)\,\varphi_l\left(\frac{q}{\alpha}\right)\,dq\geq\varphi_l\left(\frac{k}{\alpha}\right)\int_k^{k+\frac{1}{2}}
(k^2-q^2)\,dq,$$
which proves the required inequality.

Thus, we have
\begin{multline*}
\Delta_* p_\alpha^2-\widetilde\Delta p_\alpha^2=\frac{\hbar^2}{\alpha}\left(\frac{\pi}{l}\right)^2
\sum_{k=-\infty}^{+\infty}
\int_{k-\frac{1}{2}}^{k+\frac{1}{2}}
(k^2-q^2)\,\varphi_l\left(\frac{q}{\alpha}\right)\,dq\geq\\\geq
\frac{\hbar^2}{\alpha}\left(\frac{\pi}{l}\right)^2
\sum_{k=-\infty}^{+\infty}
\varphi_l\left(\frac{k}{\alpha}\right)\int_{k-\frac{1}{2}}^{k+\frac{1}{2}}
(k^2-q^2)\,dq=
-\frac{\hbar^2}{\alpha}
\left(\frac{\pi}{l}\right)^2\frac{1}{12}\sum_{k=-\infty}^{+\infty}
\varphi_l\left(\frac{k}{\alpha}\right)\\to
-\left(\frac{\pi}{l}\hbar\right)^2\frac{1}{12}\int_{-\infty}^{+\infty}
\varphi_l(k)\,dk=-\frac{1}{12}\left(\frac{\pi}{l}\hbar\right)^2,\quad \alpha\to\infty.\end{multline*}
Let us justify the passage to the limit and estimate the rate of convergence of the series to the
integral. By the mean value theorem,
$$\min_{q\in[k-\frac{1}{2},k+\frac{1}{2}]}
\varphi_l\left(\frac{q}{\alpha}\right)\leq\int_{k-\frac{1}{2}}^{k+\frac{1}{2}}\varphi_l
\left(\frac{q}{\alpha}\right)\,dq\leq
\max_{q\in[k-\frac{1}{2},k+\frac{1}{2}]}
\varphi_l\left(\frac{q}{\alpha}\right);$$ therefore
$$\left|\varphi_l
\left(\frac{k}{\alpha}\right)-\int_{k-\frac{1}{2}}^{k+\frac{1}{2}}\varphi_l
\left(\frac{q}{\alpha}\right)\,dq\right|\leq
\max_{q\in[k-\frac{1}{2},k+\frac{1}{2}]}
\varphi_l\left(\frac{q}{\alpha}\right)-\min_{q\in[k-\frac{1}{2},k+\frac{1}{2}]}
\varphi_l\left(\frac{q}{\alpha}\right).$$
Since
\begin{align*}\max_{q\in[k-\frac{1}{2},k+\frac{1}{2}]}
\varphi_l\left(\frac{q}{\alpha}\right)&=\varphi_l\left(\frac{ k-\frac{1}{2}\sgn k}{\alpha}\right),\quad k\neq0,\\
\min_{q\in[k-\frac{1}{2},k+\frac{1}{2}]}
\varphi_l\left(\frac{q}{\alpha}\right)&=\varphi_l\left(\frac{ k+\frac{1}{2}\sgn k}{\alpha}\right),\quad k\neq0,\\
\max_{q\in[-\frac{1}{2},\frac{1}{2}]}
\varphi_l\left(\frac{q}{\alpha}\right)&=\varphi_l(0),\quad \min_{q\in[-\frac{1}{2},\frac{1}{2}]}
\varphi_l\left(\frac{q}{\alpha}\right)=\varphi_l\left(\frac{1}{2\alpha}\right)
\end{align*}
it follows that
\begin{multline*}\sum_{k>0}\left[\max_{q\in[k-\frac{1}{2},k+\frac{1}{2}]}
\varphi_l\left(\frac{q}{\alpha}\right)-\min_{q\in[k-\frac{1}{2},k+\frac{1}{2}]}
\varphi_l\left(\frac{q}{\alpha}\right)\right]\\=
\left[\varphi_l\left(\frac{1}{2\alpha}\right)-\varphi_l\left(\frac{3}{2\alpha}\right)\right]
+\left[\varphi_l\left(\frac{3}{2\alpha}\right)-\varphi_l\left(\frac{5}{2\alpha}\right)\right]
+\ldots=
\varphi_l\left(\frac{1}{2\alpha}\right).
\end{multline*}
Similarily,
$$\sum_{k<0}\left[\max_{q\in[k-\frac{1}{2},k+\frac{1}{2}]}
\varphi_l\left(\frac{q}{\alpha}\right)-\min_{q\in[k-\frac{1}{2},k+\frac{1}{2}]}
\varphi_l\left(\frac{q}{\alpha}\right)\right]=
\varphi_l\left(-\frac{1}{2\alpha}\right)=
\varphi_l\left(\frac{1}{2\alpha}\right).$$
We have
\begin{multline}\label{EqLemSumInt}
\left|\frac{\pi}{l}\sum_{k=-\infty}^{+\infty}
\varphi_l\left(\frac{k}{\alpha}\right)-
\int_{-\infty}^{+\infty}
\varphi_l(q)\,dq\right|\leq
\frac{1}{\alpha}\sum_{k=-\infty}^{+\infty}\left|
\varphi_l\left(\frac{k}{\alpha}\right)-
\int_{k-\frac{1}{2}}^{q+\frac{1}{2}}
\varphi_l\left(\frac{q}{\alpha}\right)\,dq\right|\\=
\frac{1}{\alpha}\left[
\varphi_l(0)+\varphi_l\left(\frac{1}{2\alpha}\right)\right]\leq
\frac{2}{\alpha}\varphi_l(0).
\end{multline}
Thus,
$$\Delta_* p_\alpha^2-\widetilde\Delta p_\alpha^2\geq
-\frac{1}{12}\left(\frac{\pi}{l}\hbar\right)^2
\left[1+\frac{2}{\alpha}\varphi_l(0)\right].$$

We have obtained a lower estimate. Let us find an upper estimate. Arguing as when deriving inequality (\ref{EqLemDeltaPFromBelow}), we obtain
\begin{multline*}\Delta_* p_\alpha^2-\widetilde\Delta p_\alpha^2=\frac{\hbar^2}{\alpha}\left(\frac{\pi}{l}\right)^2
\sum_{k=-\infty}^{+\infty} \int_{k-\frac{1}{2}}^{k+\frac{1}{2}}
(k^2-q^2)\,\varphi_l\left(\frac{q}{\alpha}\right)\,dq\leq\\\leq
\frac{\hbar^2}{\alpha}\left(\frac{\pi}{l}\right)^2
\sum_{k\neq0}\sgn k\left\lbrace \max_{q\in[k-\frac{1}{2},k+\frac{1}{2}]}
\varphi_l\left(\frac{q}{\alpha}\right)\int_{k-\frac{1}{2}\sgn k}^{k} (k^2-q^2)\,dq\right.\\+\left. \min_{q\in[k-\frac{1}{2},k+\frac{1}{2}]}
\varphi_l\left(\frac{q}{\alpha}\right)\int_{k}^{k+\frac{1}{2}\sgn k} (k^2-q^2)\,dq\right\rbrace-
\frac{\hbar^2}{12\alpha}\left(\frac{\pi}{l}\right)^2\varphi_l\left(\frac{1}{2\alpha}\right)\\=
\frac{\hbar^2}{\alpha}\left(\frac{\pi}{l}\right)^2\sum_{k\neq0}
\left\lbrace\frac{1}{4}|k|\left[\max_{q\in[k-\frac{1}{2},k+\frac{1}{2}]}
\varphi_l\left(\frac{q}{\alpha}\right)-
\min_{q\in[k-\frac{1}{2},k+\frac{1}{2}]}
\varphi_l\left(\frac{q}{\alpha}\right)\right]\right.\\-\left.
\frac{1}{24}\left[\min_{q\in[k-\frac{1}{2},k+\frac{1}{2}]}
\varphi_l\left(\frac{q}{\alpha}\right)+
\max_{q\in[k-\frac{1}{2},k+\frac{1}{2}]}
\varphi_l\left(\frac{q}{\alpha}\right)\right]\right\rbrace
-\frac{\hbar^2}{12\alpha}\left(\frac{\pi}{l}\right)^2
\varphi_l\left(\frac{1}{2\alpha}\right).\end{multline*}
Since
\begin{multline}\label{EqLemSumKMaxMin}\frac{1}{\alpha}\sum_{k\neq0}|k|
\left[\max_{q\in[k-\frac{1}{2},k+\frac{1}{2}]}
\varphi_l\left(\frac{q}{\alpha}\right)-
\min_{q\in[k-\frac{1}{2},k+\frac{1}{2}]}
\varphi_l\left(\frac{q}{\alpha}\right)\right]\\=
\left[\varphi_l\left(\frac{1}{2\alpha}\right)-\varphi_l\left(\frac{3}{2\alpha}\right)\right]
+2\left[\varphi_l\left(\frac{3}{2\alpha}\right)-\varphi_l\left(\frac{5}{2\alpha}\right)\right]
+\ldots\\=\varphi_l\left(\frac{1}{2\alpha}\right)+\varphi_l\left(\frac{3}{2\alpha}\right)
+\ldots=
\frac{1}{\alpha}\sum_{k=-\infty}^{+\infty}
\varphi_l\left(\frac{k+\frac{1}{2}}{\alpha}\right),\end{multline}

$$\frac{1}{\alpha}\sum_{k\neq0}\frac{1}{2}\left[\min_{q\in[k-\frac{1}{2},k+\frac{1}{2}]}
\varphi_l\left(\frac{q}{\alpha}\right)+
\max_{q\in[k-\frac{1}{2},k+\frac{1}{2}]}
\varphi_l\left(\frac{q}{\alpha}\right)\right]+\frac{1}{\alpha}
\varphi_l\left(\frac{1}{2\alpha}\right)=\frac{1}{\alpha}\sum_{k=-\infty}^{+\infty}
\varphi_l\left(\frac{k+\frac{1}{2}}{\alpha}\right),$$

\begin{equation}\label{EqLemSumKMaxMin2}
\left|\frac{1}{\alpha}\sum_{k=-\infty}^{+\infty}
\varphi_l\left(\frac{k+\frac{1}{2}}{\alpha}\right)-
\int_{-\infty}^{+\infty}\varphi_l(q)\,dq\right|
\leq\frac{2}{\alpha}\varphi_l(0),\end{equation}
it follows that
$$\Delta p_\alpha^2-\widetilde\Delta p_\alpha^2\leq\frac{1}{6}
\left(\frac{\pi}{l}\hbar\right)^2
\left[1+\frac{2}{\alpha}\varphi_l(0)\right].$$

Returning from the case of $\overline k=0$  to the case of an arbitrary $\overline k$, we should replace $\varphi_l(0)$ by $\varphi_l(0)$. Assertion (2a) of the lemma is proved.

3. Under the conditions of assertion (2b), we can apply the Taylor formula to the function $\varphi_{\alpha l}(k)$:
$$\varphi_l\left(\frac{q}{\alpha}\right)=
\varphi_l\left(\frac{k}{\alpha}\right)+
\frac{1}{\alpha}\varphi_l'\left(\frac{k}{\alpha}\right)
(q-k)+\frac{1}{2\alpha^2}
\varphi_l''\left(\frac{\kappa_k(q)}{\alpha}\right)(q-k)^2,$$
where $\kappa_k(q)\in[q,k]$ if $q\leq k$ and $\kappa_k(q)\in[k,q]$ if $q\geq k$;
$$\Delta\varphi_l\left(\frac{k}{\alpha}\right)
\equiv\varphi_l\left(\frac{k+\frac{1}{2}}{\alpha}\right)-
\varphi_l\left(\frac{k-\frac{1}{2}}{\alpha}\right)=
\frac{1}{\alpha}\varphi_l'\left(\frac{k}{\alpha}\right)+
\frac{1}{8\alpha^2}
\left[\varphi_l''\left(\frac{\kappa^2_k}{\alpha}\right)-
\varphi_l''\left(\frac{\kappa^1_k}{\alpha}\right)\right],$$
where $\kappa^1_k\in[k-\frac{1}{2},k]$ and $\kappa^2_k\in[k,k+\frac{1}{2}]$.

First, let us improve the estimate for the modulus of difference:
$$\left|\varphi_l\left(\frac{k}{\alpha}\right)-
\int_{k-\frac{1}{2}}^{k+\frac{1}{2}}\varphi_l\left(\frac{q}{\alpha}
\right)\,dq\right|\leq\frac{1}{8\alpha^2}
\max_{q\in[k-\frac{1}{2},k+\frac{1}{2}]}
\left|\varphi_l''\left(\frac{q}{\alpha}\right)\right|.$$

Since $\varphi_l''(k)=O(k^{-2})$ as $k\to\infty$, we have
\begin{equation}\label{EqLemSumSecDerivMax}
\frac{1}{\alpha}\sum_{k=-\infty}^{+\infty}
\max_{q\in[k-\frac{1}{2},k+\frac{1}{2}]}
\left|\varphi_l''\left(\frac{q}{\alpha}\right)\right|\to C.\end{equation}
Hence, estimate (\ref{EqLemSumInt}) is also improved:
\begin{equation}\label{EqLemSumInt2}
\left|\frac{1}{\alpha}\sum_{k=-\infty}^{+\infty}
\varphi_l\left(\frac{k}{\alpha}\right)-
\int_{-\infty}^{+\infty}
\varphi_l(q)\,dq\right|\leq
\frac{1}{\alpha}\sum_{k=-\infty}^{+\infty}\left|
\varphi_l\left(\frac{k}{\alpha}\right)-
\int_{k-\frac{1}{2}}^{k+\frac{1}{2}}
\varphi_l\left(\frac{q}{\alpha}\right)\,dq\right|=
O\left(\frac{1}{\alpha^2}\right).\end{equation}

For the difference $\Delta_* p_\alpha^2-\widetilde\Delta p_\alpha^2$ we have

\begin{multline}\label{EqLemDeltaUpBnd}
\Delta_* p_\alpha^2-\widetilde\Delta
p_\alpha^2=\hbar^2\left(\frac{\pi}{l}\right)^2
\sum_{k=-\infty}^{+\infty} \int_{k-\frac{1}{2}}^{k+\frac{1}{2}}
(k^2-q^2)\frac{1}{\alpha} \varphi_l\left(\frac{q}{\alpha}\right)\,dq\\= \hbar^2\left(\frac{\pi}{l}\right)^2
\sum_{k=-\infty}^{+\infty} \int_{k-\frac{1}{2}}^{k+\frac{1}{2}}
(k^2-q^2)\,\left\lbrace\frac{1}{\alpha}\varphi_l\left(\frac{k}{\alpha}\right)+
\frac{1}{\alpha}\Delta\varphi_l\left(\frac{k}{\alpha}\right)(q-k)\right.+\\\left.+ \frac{1}{8\alpha^3}
\left[\varphi_l''\left(\frac{\kappa^1_k}{\alpha}\right)-
\varphi_l''\left(\frac{\kappa^2_k}{\alpha}\right)\right](q-k)+
\frac{1}{2\alpha^3}
\varphi_l''\left(\frac{\kappa_k(q)}{\alpha}\right)(q-k)^2\right\rbrace\,dq\leq\\\leq
\hbar^2\left(\frac{\pi}{l}\right)^2\left\lbrace\sum_{k=-\infty}^{+\infty}\left[
-\frac{1}{12\alpha}\varphi_l\left(\frac{k}{\alpha}\right)
-\frac{1}{6\alpha}k\Delta\varphi_l\left(\frac{k}{\alpha}\right)\right]+
A_1-A_2-A_3\right\rbrace,\end{multline}
where
\begin{align*}A_1&=\sum_{k=-\infty}^{+\infty}
\frac{1}{192\alpha^3}|k|
\left[\max_{\kappa\in[k-\frac{1}{2},k+\frac{1}{2}]}
\varphi_l''\left(\frac{\kappa}{\alpha}\right)-
\min_{\kappa\in[k-\frac{1}{2},k+\frac{1}{2}]}
\varphi_l''\left(\frac{\kappa}{\alpha}\right)\right],\\
A_2&=\sum_{k\neq0}
\frac{1}{320\alpha^3}
\left[\max_{\kappa\in[k-\frac{1}{2},k+\frac{1}{2}]}
\varphi_l''\left(\frac{\kappa}{\alpha}\right)+
\min_{\kappa\in[k-\frac{1}{2},k+\frac{1}{2}]}
\varphi_l''\left(\frac{\kappa}{\alpha}\right)\right],\\
A_3&=
\hbar^2\frac{1}{80\alpha^3}
\varphi_l''\left(\frac{1}{2\alpha}\right).\end{align*}

Here we used the estimate
\begin{multline*}
\sum_{k=-\infty}^{+\infty}\int_{k-\frac{1}{2}}^{k+\frac{1}{2}}
(k^2-q^2)\varphi_l''\left(\frac{\kappa_k(q)}{\alpha}\right)(q-k)^2\,dq\leq\\
\leq\sum_{k\neq0}\sgn k
\left[\max_{\kappa\in[k-\frac{1}{2},k+\frac{1}{2}]}
\varphi_l''\left(\frac{\kappa}{\alpha}\right)
\int_{k-\frac{1}{2}\sgn k}^{k}(k^2-q^2)(q-k)^2\,dq\right.\\
\left.+\min_{\kappa\in[k-\frac{1}{2},k+\frac{1}{2}]}
\varphi_l''\left(\frac{\kappa}{\alpha}\right)
\int_{k}^{k+\frac{1}{2}\sgn k}(k^2-q^2)(q-k)^2\,dq\right]-
\frac{1}{80}\,\varphi_l\left(\frac{1}{2\alpha}\right)\\=
\sum_{k\neq0}\left\lbrace\frac{1}{32}\:|k|
\left[\max_{\kappa\in[k-\frac{1}{2},k+\frac{1}{2}]}
\varphi_l''\left(\frac{\kappa}{\alpha}\right)-
\min_{\kappa\in[k-\frac{1}{2},k+\frac{1}{2}]}
\varphi_l''\left(\frac{\kappa}{\alpha}\right)\right]\right.\\\left.
-\frac{1}{160}\left[\max_{\kappa\in[k-\frac{1}{2},k+\frac{1}{2}]}
\varphi_l''\left(\frac{\kappa}{\alpha}\right)+
\min_{\kappa\in[k-\frac{1}{2},k+\frac{1}{2}]}
\varphi_l''\left(\frac{\kappa}{\alpha}\right)\right]\right\rbrace
-\frac{1}{80}\,\varphi_l''\left(\frac{1}{2\alpha}\right).
\end{multline*}
Similarly,
\begin{equation}\label{EqLemDeltaDwnBnd}
\Delta_* p_\alpha^2-\widetilde\Delta
p_\alpha^2\geq\hbar^2\left(\frac{\pi}{l}\right)^2\left\lbrace
\sum_{k=-\infty}^{+\infty}\left[
-\frac{1}{12\alpha}\varphi_l\left(\frac{k}{\alpha}\right)
-\frac{1}{6\alpha}k\Delta\varphi_l\left(\frac{k}{\alpha}\right)\right]
-A_1+A_2+A_3\right\rbrace.
\end{equation}

Taking into account (\ref{EqLemSumInt2}) and (\ref{EqLemSumKMaxMin}), we obtain
\begin{multline}\label{EqLemDeltaMain}
\sum_{k=-\infty}^{+\infty}\left[
-\frac{1}{12\alpha}\varphi_l\left(\frac{k}{\alpha}\right)
-\frac{1}{6\alpha}k\Delta\varphi_l\left(\frac{k}{\alpha}\right)\right]=
\left(-\frac{1}{12}+\frac{1}{6}\right)
\int_{-\infty}^{+\infty}
\varphi_l(k)\,dk+O\left(\frac{1}{\alpha^2}\right)\\=
\frac{1}{12}+O\left(\frac{1}{\alpha^2}\right),\end{multline}
The sign of the second term has changed because, in view of the monotonicity,
$$\Delta\varphi_l\left(\frac{k}{\alpha}\right)=
-\sgn k\left[\max_{q\in[k-\frac{1}{2},k+\frac{1}{2}]}
\varphi_l\left(\frac{q}{\alpha}\right)-
\min_{q\in[k-\frac{1}{2},k+\frac{1}{2}]}
\varphi_l\left(\frac{q}{\alpha}\right)\right]$$
(in particular, $\Delta\varphi_l\left(\frac{k}{\alpha}\right)$ and $k$
have opposite signs).

By calculations similar to (\ref{EqLemSumKMaxMin}) and (\ref{EqLemSumKMaxMin2}),
one can easily show that  $A_1=O(1/\alpha^2)$ (in the present case, calculations are more cumbersome because $\varphi_l''$ may have more than one extremum;
however, one can divide the whole line into segments between the extremum points and consider separately these segments and the extremum points; the number of extremum points is finite by the hypothesis).

 According to (\ref{EqLemSumSecDerivMax}),
$A_2=O(1/\alpha^2)$.
Obviously, $A_3=O(1/\alpha^3)$. Then, taking into account (\ref{EqLemDeltaUpBnd}), (\ref{EqLemDeltaDwnBnd}) and (\ref{EqLemDeltaMain}), we obtain

$$\Delta_\alpha p=\widetilde\Delta_\alpha p+
\left(\frac{\pi}{l}\hbar\right)^2\left[\frac{1}{12}+O\left(\frac{1}{\alpha^2}\right)
\right].$$

The lemma is proved.

\section{Proof of Lemma~\ref{LemSumCos}.}\label{SecLemSumCos}

It follows from the hypothesis of the lemma that there are two possibilities: (1) all
$a_k\geq0$ and $a_0=\max a_k>0$, and (2) all $a_k\leq0$ and $a_0=\min a_k<0$. In any case, $|a_0|=\max|a_k|$. Since one case can be reduced to the other by changing the signs of all $a_k$, which does not effect any side of inequality (\ref{EqLemChiBound}), we assume without loss of generality that the first variant takes place: all $a_k\geq0$.

Take a natural $K$ and consider the first
$K$ elements. $a_0\leq a_1\geq\ldots\geq a_K$, $a_0>0$, in the sequence. We will construct step by step a new subsequence of $(K+1)$ elements. At the zeroth step, we have
$$a^{(0)}_0\equiv a_0,\quad a^{(0)}_1\equiv a_1,\quad a^{(0)}_2\equiv a_2,\quad \ldots,\quad a^{(0)}_K\equiv a_K,\quad a^{(0)}_{K+1}\equiv0.$$
At each step we will keep the sequence monotonic. For any step $m\geq0$, we define two numbers $K_1^{(m)}\geq1$ and
$K_2^{(m)}\geq K_1^{(m)}\geq1$ in the following manner:
$$a^{(m)}_0=a^{(m)}_1=\ldots=a^{(m)}_{K^{(m)}_1-1}\neq a^{(m)}_{K^{(m)}_1},\quad a^{(m)}_{K^{(m)}_1}=a^{(m)}_{K^{(m)}_1+1}=\ldots=a^{(m)}_{K^{(m)}_2}\neq
a^{(m)}_{K^{(m)}_2+1}.$$

The partial sum of the series at step $m$ is
\begin{equation}\label{EqLemSumK}
S^{(m)}_K=a^{(m)}_0\sum_{k=0}^{K^{(m)}_1-1}\cos kx+
a^{(m)}_{K^{(m)}_1}D^{(m)}_K+
\sum_{k=K^{(m)}_2+1}^Ka^{(m)}_k\cos kx,\end{equation}
where
$$D^{(m)}_K=\sum_{k=K^{(m)}_1}^{K^{(m)}_2}a^{(m)}_k\cos kx=
a^{(m)}_{K^{(m)}_1}\sum_{k=K^{(m)}_1}^{K^{(m)}_2}\cos kx.$$

Now, let us directly describe the iteration rule for constructing the sequence $a^{(m+1)}_0,a^{(m+1)}_1,\ldots,a^{(m+1)}_{K+1}$
from $a^{(m)}_0,a^{(m)}_1,\ldots,a^{(m)}_{K+1}$, $m\geq0$.

If $D^{(m)}_K\geq0$, then the terms of the sequence with indices from
$K^{(m)}_1$ to $K^{(m)}_2$ inclusive take the value
$a^{(m)}_0$ (i.e., they increase up to the preceding term of the sequence), other terms remaining
unchanged:
\begin{equation*}
a^{(m+1)}_k=\begin{cases}a^{(m)}_0& \text{for $K^{(m)}_1\leq k\leq K^{(m)}_2$},\\
a^{(m)}_k&\text{otherwise}.\end{cases}
\end{equation*} 
In this case,  $K^{(m+1)}_1=K^{(m)}_2+1$ and $K^{(m+1)}_2\geq
K^{(m)}_2+1$.

If $D^{(m)}_K\leq0$, then the terms of the sequence with indices from
$K^{(m)}_1$ to $K^{(m)}_2$ inclusive take the value
$a^{(m)}_{K^{(m)}_2+1}$ (i.e., they decrease down to the subsequent term of the sequence), other terms  remaining unchanged:
\begin{equation*}
a^{(m+1)}_k=\begin{cases}a^{(m)}_{K^{(m)}_2+1}& \text{for $K^{(m)}_1\leq k\leq K^{(m)}_2$},\\
a^{(m)}_k&\text{otherwise}.\end{cases}
\end{equation*} 
In this case $K^{(m+1)}_1=K^{(m)}_1$ and $K^{(m+1)}_2\geq K^{(m)}_2+1$.

One can notice that in either case the sequence remains monotonic. In addition, $a^{(m)}_0\equiv a_0$ does not change its value, since $K^{(m)}\geq1$. Therefore, $a^{(m)}_0=a_0=\max
a^{(m)}_k$ and $a^{(m)}_k\leq a^{(m)}_{K^{(m)}_2+1}$ for $k\leq
K^{(m)}_2$. Consequently, if $D^{(m)}_K\geq0$, then
$a^{(m+1)}_{K^{(m)}_1}\geq a^{(m)}_{K^{(m)}_1}$, and if
$D^{(m)}_K\leq0$, then $a^{(m+1)}_{K^{(m)}_1}\leq
a^{(m)}_{K^{(m)}_1}$. Then in either case the term
$a^{(m)}_{K^{(m)}_1}D^{(m)}_K$ in sum (\ref{EqLemSumK}) does not decrease, while the other two terms remain unchanged. Thus, the whole partial sum does not decrease: $S^{(m+1)}_K\geq S^{(m)}_K$.

After a certain number $M\leq K$ of steps we obtain $K^{(M)}_2=K+1$, $1\leq K^{(M)}_1\leq
K^{(M)}_2$, and
\begin{equation}\label{EqLemAFin}
a^{(M)}_k=\begin{cases}a^{(0)},& \text{$0\leq k\leq K^{(M)}_1-1$},\\
0,& \text{$k\geq K^{(M)}_1$}\end{cases}
\end{equation} 
(the last line is correct because $a^{(M)}_{K+1}=a^{(0)}_{K+1}=0$). Thus,
$$S_K\equiv S^{(0)}_K\leq a_0\sum_{k=0}^{K^{(M)}_1-1}\cos kx.$$
Using the well-known formula
\begin{equation}
\sum_{n=0}^{K-1}\cos kx=\frac{\sin\frac{Kx}{2}\cos\frac{(K-1)x}{2}} {\sin\frac{x}{2}},
\end{equation}
we obtain
$$S_K\leq a_0\frac{\sin\frac{K^{(M)}_1x}{2}
\cos\frac{(K^{(M)}_1-1)x}{2}}
{\sin\frac{x}{2}}.$$

In a similar way one can deduce a lower estimate for the partial sum. To this end, one should run the iteration process from right to left, rather than from left to right. $K^{(m)}_1$ is the number after which all terms of the sequence are equal to $a_{K+1}$ (i.e., vanish), and $K^{(m)}_2\leq
K^{(m)}_1$ is the number starting from which the terms of the sequence have equal values up to the $K^{(m)}_1$th term. $D^{(m)}_K$ is
defined as before except that the summation limits are interchanged. If $D^{(m)}_K\geq0$, then the values of the terms with numbers from
$K^{(m)}_2$ to $K^{(m)}_1$ decrease down to $a_{K^{(m)}_1+1}=0$,
and if $D^{(m)}_K\leq0$, they increase up to $a_{K^{(m)}_2-1}$. Then the value of the partial sum does not increase. After a finite 2
number of iteration steps, we again obtain formula  (\ref{EqLemAFin}) in which
$K^{(m)}_1$ is replaced by $K^{(m)}_2$ and the estimate
$$S_K\geq a_0\frac{\sin\frac{K^{(M)}_2x}{2}
\cos\frac{(K^{(M)}_2-1)x}{2}}
{\sin\frac{x}{2}}.$$ Combining the upper and lower estimates, we obtain
\begin{equation*}
|S_K|\leq\frac{|a_0|}{|\sin\frac{x}{2}|}.
\end{equation*}
Note that the estimate of the partial sum does not depend on  $K$. This yields estimate
 (\ref{EqLemChiBound}) for the sum of the whole series.

\end{document}